\title[Characterization of Curved Creases and Rulings]
      {Characterization of Curved Creases and Rulings: \\
       Design and Analysis of Lens Tessellations}
\author[E. Demaine]{Erik D. Demaine}
\address[E. Demaine and M. Demaine]{Computer Science and Artificial
  Intelligence Laboratory, Massachusetts Institute of Technology,
  32 Vassar St., Cambridge, MA 02139, USA}
\email{demaine@mit.edu}
\thanks{E. Demaine and M. Demaine supported in part by NSF ODISSEI grant EFRI-1240383 and NSF Expedition grant CCF-1138967.}
\author[M. Demaine]{Martin L. Demaine}
\author[D. Huffman]{David A. Huffman}
\address[D. Huffman]{Department of Computer Science, University of California,
  Santa Cruz, CA 95064, USA.}
\author[D. Koschitz]{Duks Koschitz}
\address[D. Koschitz]{School of Architecture, Pratt Institute, 200 Willoughby Ave.,
  Brooklyn, NY 11205, USA.}
\email{duks@pratt.edu}
\thanks{D. Koschitz performed this research while at MIT}
\author[T. Tachi]{Tomohiro Tachi}
\address[T. Tachi]{Department of General Systems Studies, The University of Tokyo,
  3-8-1 Komaba, Meguro-Ku, Tokyo 153-8902, Japan}
\email{tachi@idea.c.u-tokyo.ac.jp}
\thanks{T. Tachi supported by the JST Presto program.}
\date{}
\newif\ifabstract
\newif\iffull
\makeatletter \hypersetup{pdftitle={\@title}}}
 \gdef\xxxmark{%
   \expandafter\ifx\csname @mpargs\endcsname\relax 
     \expandafter\ifx\csname @captype\endcsname\relax 
       \marginpar{xxx}
     \else
       xxx 
     \fi
   \else
     xxx 
   \fi}
 \gdef\xxx{\@ifnextchar[\xxx@lab\xxx@nolab}
 \long\gdef\xxx@lab[#1]#2{\textbf{[\xxxmark #2 ---{\sc #1}]}}
 \long\gdef\xxx@nolab#1{\textbf{[\xxxmark #1]}}
 \long\gdef\xxx@lab[#1]#2{}\long\gdef\xxx@nolab#1{}%
\gdef\fps@figure{!htbp}}
\let\realbfseries=\bfseries
\def\bfseries{\realbfseries\boldmath}
\newtheorem{theorem}{Theorem}
\newtheorem{lemma}[theorem]{Lemma}
\newtheorem{corollary}[theorem]{Corollary}
\def\term{\emph}
\let\epsilon=\varepsilon
\def\vec{\mathbf}
\def\vechat#1{\mathbf{\hat{#1}}}
\def\R{{\mathbb R}}
\begin{document}
\maketitle

\begin{abstract}

We describe a general family of curved-crease folding tessellations consisting
of a repeating ``lens'' motif formed by two convex curved arcs. The third
author invented the first such design in 1992, when he made both a sketch of
the crease pattern and a vinyl model (pictured below). Curve fitting suggests
that this initial design used circular arcs. We show that in fact the curve
can be chosen to be any smooth convex curve without inflection point. We
identify the ruling configuration through qualitative properties that a curved
folding satisfies, and prove that the folded form exists with no additional
creases, through the use of differential geometry.

\end{abstract}



\section{Introduction}

The past two decades have seen incredible advances in applying
mathematics and computation to the analysis and design of origami
made by straight creases.  But we lack many similar theorems and
algorithms for origami made by curved creases.

In this paper, we develop several basic tools (definitions and theorems)
for curved-crease origami.  These tools in particular characterize the
relationship between the crease pattern and rule lines/segments, and relate
creases connected by rule segments.  Some of these tools have been developed
before in other contexts
(e.g., \cite{Fuchs-Tabachnikov-1999,Fuchs-Tabachnikov-2007-developable,Huffman-1976}),
but have previously lacked a careful analysis of the
levels of smoothness ($C^1$, $C^2$, etc.)\ and other assumptions required.
Specific high-level properties we prove include:
\begin{enumerate}
\item Regions between creases decompose into noncrossing rule segments,
      which connect from curved crease to curved crease,
      and planar patches (a result from \cite{Hypar}).
\item The osculating plane of a crease bisects the two adjacent
      surface tangent planes (when they are unique).
\item A curved crease with an incident cone ruling
      (a continuum of rule segments at a point) cannot fold smoothly:
      it must be kinked at the cone ruling.
\item Rule segments on the convex side of a crease bend mountain/valley
      the same as the crease, and rule segments on the concave side
      of a crease bend mountain/valley opposite from the crease.
\item If two creases are joined by a rule segment on their concave sides,
      or on their convex sides,
      then their mountain/valley assignments must be equal.
      If the rule segment is on the convex side of one crease and the
      concave side of the other crease, then the mountain/valley assignments
      must be opposite.
\end{enumerate}

We apply these tools to analyze one family of designs called the
\emph{lens tessellation}.  Figure~\ref{huffman} shows an example
originally designed and folded by the third author in 1992, and now
modeled digitally.  We prove that this curved crease pattern folds into 3D,
with the indicated rule segments, when the ``lens'' is \emph{any}
smooth convex curve.
We also show that the model is ``rigidly foldable'', meaning that it can be
continuously folded without changing the ruling pattern.

\begin{figure}
  \centering
  \subfloat[Huffman's original hand-drawn sketch of crease pattern of
            lens design (1992).]
    {\includegraphics[width=0.44\textwidth]{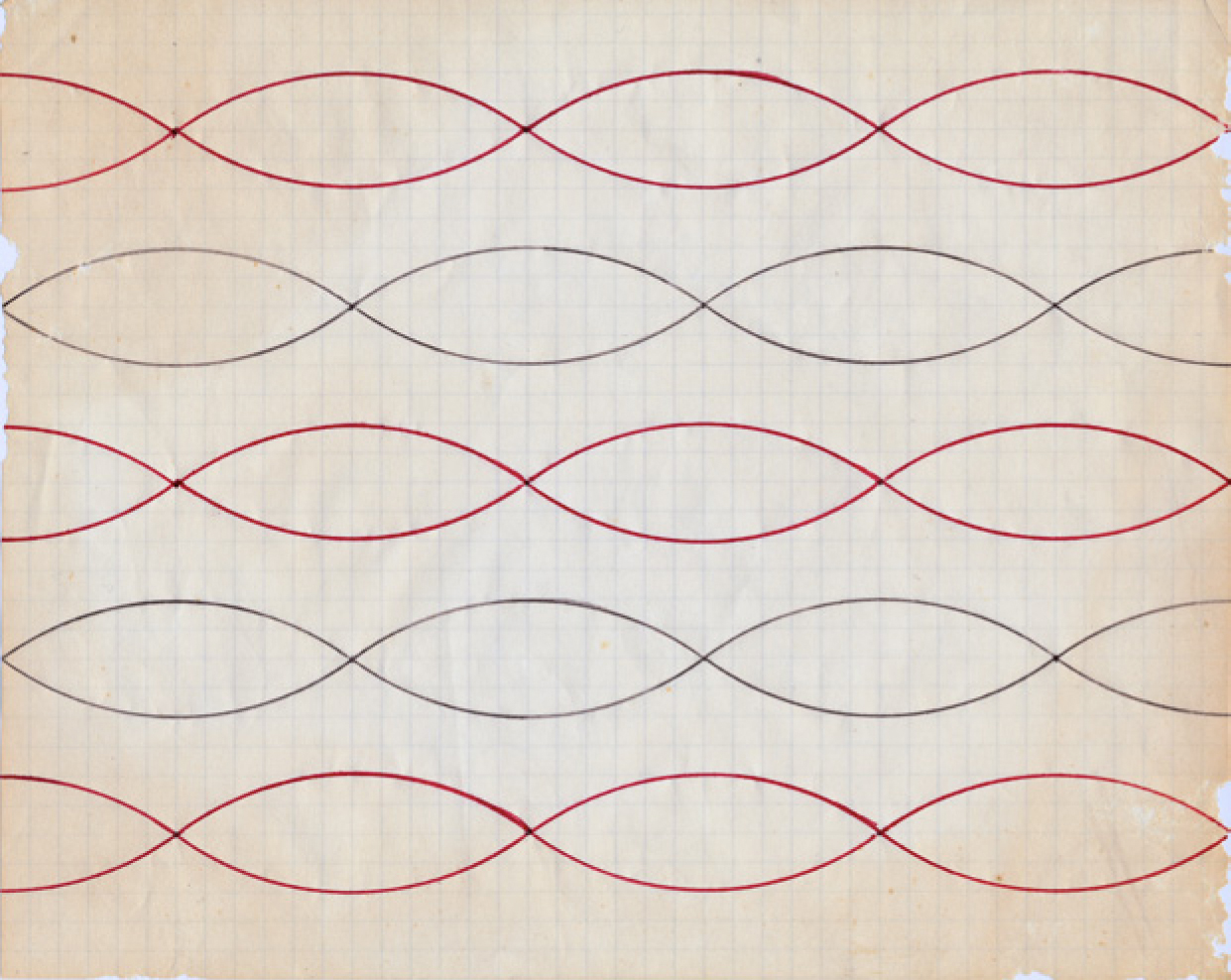}}
  \hfill
  \subfloat[Computer-drawn crease pattern of lens design.]
    {\includegraphics[width=0.53\textwidth]{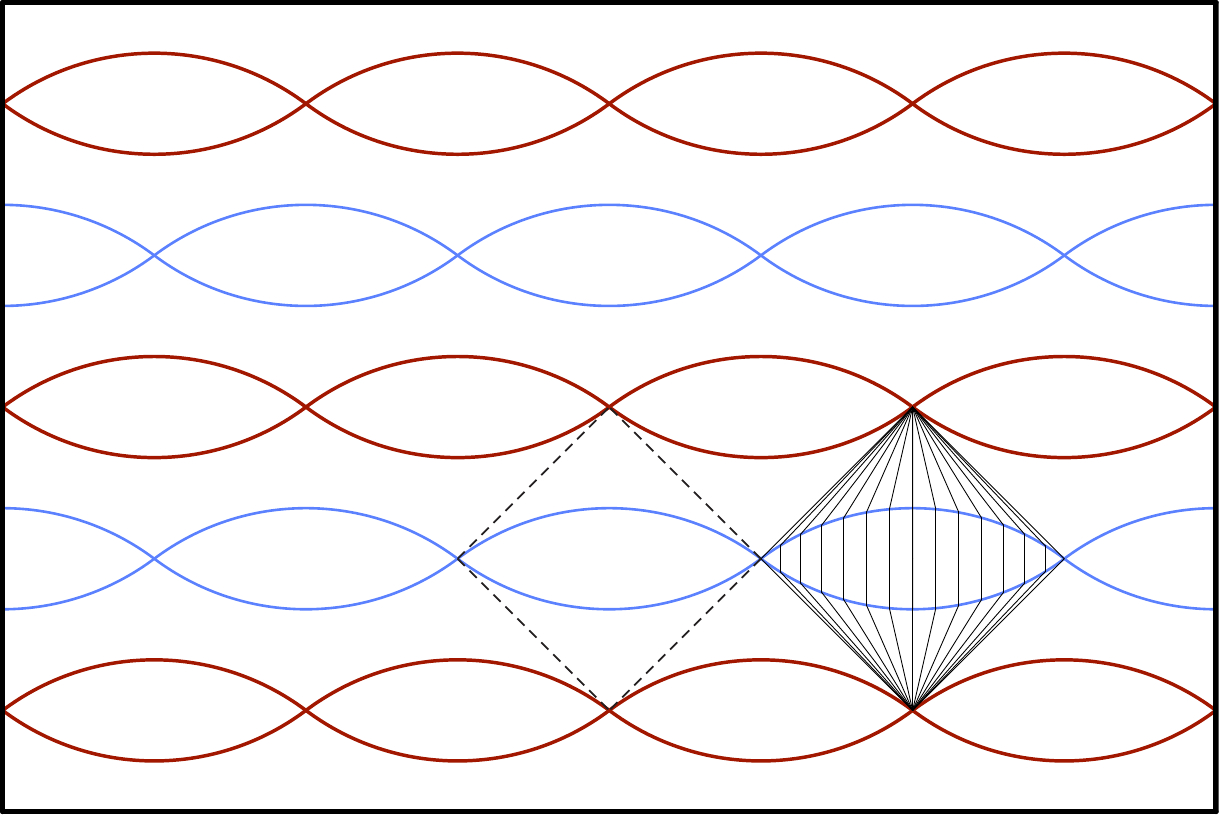}}

  \subfloat[Huffman's original hand-folded vinyl model (1992).
            Photo by Tony Grant.]
    {\includegraphics[width=0.44\textwidth]{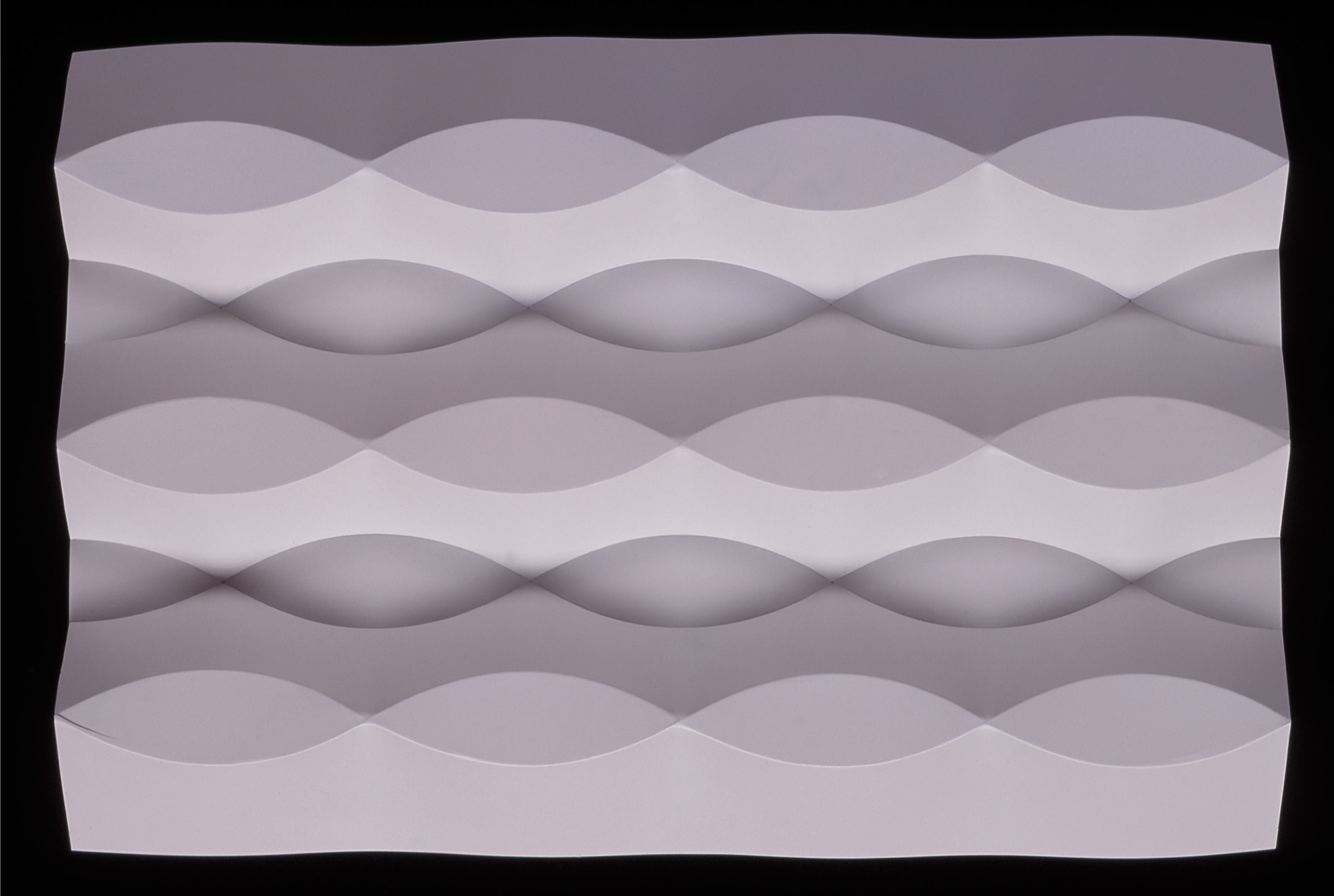}}
  \hfill
  \subfloat[Computer-simulated 3D model using Tachi's Freeform Origami software.]
    {\includegraphics[width=0.53\textwidth]{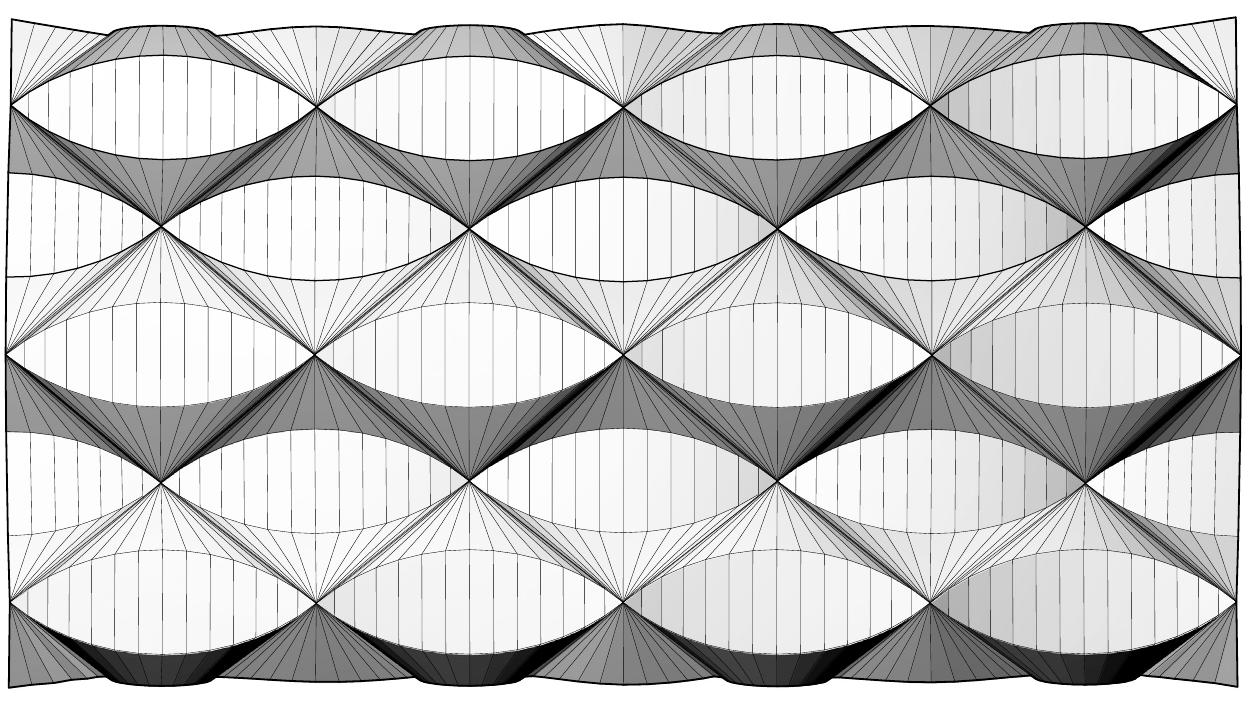}}
  \caption{\label{huffman}
    Lens tessellation: 1992 original (left) and digital reconstruction (right).}
\end{figure}

The 3D configuration of the curved folding is solved through identifying the
correspondence between pairs of points connected by rule segments, using
the qualitative properties described above.
These properties separate the tessellation into independent kite-shaped tiles
and force the rulings between the lenses to be particular cones with their
apices coinciding with the vertices of the tiling. The ruling inside each lens
is free (can twist), but assuming no twist or global planarity/symmetry, is
cylindrical (vertical rule segments). The tiling exists by rotation/reflection
of the 3D model of each kite around its four straight boundary edges. From the
tiling symmetry, each tile edge has a common tangent to its neighbors
regardless of the type of curves, as long as it is a convex curve.

The rest of this paper is organized as follows.
Section~\ref{Curves} introduces some basic notation for 2D and 3D curves.
Section~\ref{Foldings} defines creases, crease patterns, foldings,
rule segments, cone ruling, orientation of the paper, and surface normals
(and analyzes when they exist).
Section~\ref{Bisection Property} proves that the powerful bisection
property---the osculating plane of a crease bisects the two adjacent surface
tangent planes---and uses it to rule out some strange situations such as
rule segments tangent to creases or zero-length rule segments.
Section~\ref{Smooth Folding} characterizes smooth folding: a crease is
folded $C^1$ if and only if it is folded $C^2$ if and only if
there are no incident cone rulings.
Section~\ref{Mountains and Valleys} defines mountains and valleys for
both creases and the bending of rule segments, and relates the two.
Finally, Section~\ref{Lens Tessellation} uses all these tools to analyze
lens tessellations, proving a necessary and sufficient condition on their
foldability.


\section{Curves}
\label{Curves}

In this section, we define some standard parameterizations of curves
in 2D and 3D, which we will use in particular for describing creases
in the unfolded paper and folded state.  Our notation introduces a helpful
symmetry between 2D (unfolding) and 3D (folding): lower case indicates 2D,
while upper case indicates the corresponding notion in 3D.

\subsection{2D Curves}

Consider an arclength-parameterized $C^2$ 2D curve
$\vec x : (0,\ell) \to \mathbb R^2$
(or in any metric 2-manifold).
For $s \in (0,\ell)$, define the (unit) \emph{tangent} at $s$ by
$$ \vec t(s) = {d \vec x(s) \over d s}. $$
Define the \emph{curvature}
$$ k(s) = \left\| {d \vec t(s) \over d s} \right\|. $$
In particular, call the curve \emph{curved} at $s$
if its curvature $k(s)$ is nonzero.
In this case, define the (unit) \emph{normal} at $s$ by
$$ \vec n(s) = \left. {d \vec t(s) \over d s} \right/ k(s). $$
The curve is \emph{curved} (without qualification) if it is curved
at all $s \in (0,\ell)$.

Define the \emph{convex side} at $s$ to consist of directions
having negative dot product with $\vec n(s)$;
and define the \emph{concave side} at $s$ to consist of directions
having positive dot product with $\vec n(s)$.

\subsection{3D Curves}

For an arclength-parameterized $C^2$ space curve $\vec X : [0,\ell] \to \R^3$,
and for a parameter $s \in [0,\ell]$ inducing a point $\vec X(s)$,
define the (unit) \emph{tangent}
$$ \vec T(s) = {d \vec X(s) \over d s}. $$
Define the \emph{curvature}
$$ K(s) = \left\| {d \vec T(s) \over d s} \right\|. $$
In particular, call the curve \emph{curved} at $s$
if its curvature $K(s)$ is nonzero
(and \emph{curved} without qualification if it is curved
at all $s \in (0,\ell)$).
In this case, define the (unit) \emph{normal} at $s$ by
$$ \vec N(s) = \left. {d \vec T(s) \over d s} \right/ K(s); $$
define the (unit) \emph{binormal}
$$ \vec B(s) = \vec T(s) \times \vec N(s); $$
and define the \emph{torsion}
$$ \tau(s) = -{d \vec B(s) \over d s} \cdot \vec N(s). $$

Equivalently, these definitions follow from the Frenet--Serret formulas:
$$ 
\begin{bmatrix}
     0     & K(s)     & 0       \\
     -K(s) & 0        & \tau(s) \\
     0     & -\tau(s) & 0       \\
   \end{bmatrix}
 \cdot
   \begin{bmatrix}
     \vec T(s) \\
     \vec N(s) \\
     \vec B(s) \\
   \end{bmatrix}
 =
   {d \over d s}
   \begin{bmatrix}
     \vec T(s) \\
     \vec N(s) \\
     \vec B(s) \\
   \end{bmatrix}.
$$

\begin{lemma}
  For any curved $C^2$ 3D curve $\vec X(s)$,
  the Frenet frame $(\vec T(s), \vec N(s), \vec B(s))$ and
  curvature $K(s)$ exist and are continuous.
\end{lemma}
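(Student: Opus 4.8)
The plan is to chase continuity directly through the chain of definitions, observing that the single nontrivial hypothesis---that the curve is \emph{curved}, i.e.\ $K(s) \neq 0$ for every $s$---is precisely what licenses the division that defines $\vec N(s)$. Everything else follows because each object is assembled from the preceding ones by operations (differentiation of a $C^2$ map, taking norms, dividing by a scalar, and taking a cross product) that preserve continuity.

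First I would note that because $\vec X$ is $C^2$, its first derivative $\vec T(s) = d\vec X(s)/ds$ is $C^1$ and in particular continuous, while the second derivative $d\vec T(s)/ds = d^2\vec X(s)/ds^2$ exists and is continuous. The curvature $K(s) = \|d\vec T(s)/ds\|$ is then continuous, being the composition of the continuous Euclidean norm with a continuous vector field. Next, since the curve is curved, $K(s) > 0$ throughout, so $\vec N(s) = (d\vec T(s)/ds)/K(s)$ is well-defined everywhere; as the quotient of a continuous vector field by a continuous, nowhere-vanishing scalar, it is continuous. Finally $\vec B(s) = \vec T(s) \times \vec N(s)$ is continuous as the cross product of two continuous vector fields.

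I would then confirm that $(\vec T, \vec N, \vec B)$ is genuinely an orthonormal frame, justifying the name. Here $\vec T$ is a unit vector because the parameterization is by arclength; $\vec N$ is a unit vector by construction, having been divided by its own norm; and $\vec T \perp \vec N$ follows by differentiating the identity $\vec T(s) \cdot \vec T(s) = 1$ to get $\vec T(s) \cdot (d\vec T(s)/ds) = 0$, hence $\vec T(s) \cdot \vec N(s) = 0$. Then $\vec B = \vec T \times \vec N$ is automatically a unit vector orthogonal to both, so the frame is complete.

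Because every quantity is built by continuity-preserving operations, no genuine obstacle arises; the only point requiring any care is the well-definedness of $\vec N$, which would fail at a point where $K$ vanishes (the quotient $0/0$) and is exactly the reason the curved hypothesis is imposed. I therefore expect this to be a short verification rather than a proof with a hard step, and the burden is essentially bookkeeping: tracking the drop in regularity from $C^2$ for $\vec X$ to $C^1$ for $\vec T$ to merely $C^0$ for $K$, $\vec N$, and $\vec B$.
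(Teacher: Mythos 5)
Your proof is correct and follows essentially the same route as the paper's: differentiate through the chain of definitions, use the curved hypothesis ($K \neq 0$) to justify the division defining $\vec N$, and conclude continuity of $\vec B$ from the cross product. Your extra verification that $(\vec T, \vec N, \vec B)$ is orthonormal (via differentiating $\vec T \cdot \vec T = 1$) makes explicit a fact the paper only asserts, but it does not change the argument.
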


\begin{proof}
  Because $\vec X(s)$ is differentiable, $\vec T(s)$ exists.
  Because $\vec X(s)$ is twice differentiable, $K(s)$ exists,
  and because $\vec X(s)$ is $C^2$, $K(s)$ is continuous.
  Because the curve is curved, $K(s) \neq 0$,
  so we do not divide by $0$ in computing $\vec N(s)$,
  and thus $\vec N(s)$ exists and is continuous.
  The cross product in $\vec B(s)$ exists and is continuous
  because $\vec T(s)$ and $\vec N(s)$ are guaranteed to be normalized
  (hence nonzero) and orthogonal to each other (hence not parallel).
\end{proof}

The same lemma specializes to 2D, by dropping the $\vec B(s)$ part:

\begin{corollary}
  For any curved $C^2$ 2D curve $\vec x(s)$,
  the frame $(\vec t(s), \vec n(s))$
  curvature $k(s)$ exist and are continuous.
\end{corollary}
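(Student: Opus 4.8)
The plan is to obtain this as a direct specialization of the preceding Lemma, since the planar frame is nothing but the spatial Frenet frame with the binormal deleted. I would follow the same chain of implications, checking at each stage that the relevant derivative exists and is continuous and that no division by zero occurs.

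First I would note that $\vec x(s)$ is $C^2$ and hence in particular differentiable, so the tangent $\vec t(s) = d\vec x(s)/ds$ exists; moreover $\vec t$ is then $C^1$, hence continuous. Next, twice-differentiability of $\vec x(s)$ guarantees that $d\vec t(s)/ds$ exists, so the curvature $k(s) = \|d\vec t(s)/ds\|$ is well-defined, and the $C^2$ hypothesis (continuity of the second derivative) makes $k$ continuous. Finally, because the curve is \emph{curved}, we have $k(s) \neq 0$ for all $s \in (0,\ell)$, so the quotient defining $\vec n(s) = (d\vec t(s)/ds)/k(s)$ never divides by zero; $\vec n$ is then continuous, being a quotient of the continuous numerator $d\vec t/ds$ by the continuous, nowhere-vanishing denominator $k$.

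I expect no genuine obstacle here: the 2D claim is exactly the 3D Lemma restricted to the plane, and the only structural change is that the cross-product step producing $\vec B(s)$ is simply absent (there is no binormal or torsion in $\R^2$). Thus the argument is the 3D proof verbatim, minus its final sentence.
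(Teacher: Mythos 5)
Your proposal is correct and matches the paper's approach exactly: the paper proves this corollary by specializing the 3D lemma to 2D (dropping the $\vec B(s)$ part), which is precisely your argument of rerunning the 3D proof---existence and continuity of $\vec t(s)$ from differentiability, of $k(s)$ from the $C^2$ hypothesis, and of $\vec n(s)$ from $k(s) \neq 0$---with the cross-product step omitted. No gaps; nothing further is needed.
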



\section{Foldings}
\label{Foldings}

The following definitions draw from \cite{Hypar,GFALOP}.

We start with 2D (unfolded) notions.
A \term{piece of paper} is an open 2-manifold embedded in $\R^2$.
A \term{crease} $\vec x$ is a $C^2$ 2D curve contained in the piece of paper
and not self-intersecting (i.e., not visiting the same point twice).
A \term{crease point} is a point $\vec x(s)$ on the relative interior
of the crease (excluding endpoints).
The endpoints of a crease are \term{vertices}.
A \term{crease pattern} is a collection of creases that meet only
at common vertices.
Equivalently, a crease pattern is an embedded planar graph, where each
edge is embedded as a crease.
This definition effectively allows piecewise-$C^2$ curves,
by subdividing the edge in the graph with additional vertices;
``creases'' are the resulting $C^2$ pieces.
A \term{face} is a maximal open region of the piece of paper
not intersecting any creases or vertices.

Now we proceed to 3D (folded) notions.
A \term{(proper) folding} of a crease pattern is a
piecewise-$C^2$ isometric embedding of the piece of paper into 3D
that is $C^1$ on every face and not $C^1$ at every crease point and vertex.
Here \term{isometric} means that intrinsic path lengths are preserved
by the mapping; and \term{piecewise-$C^2$} means that the folded image
can be decomposed into a finite complex of $C^2$ open regions
joined by points and $C^2$ curves.
We use the terms \term{folded crease}, \term{folded vertex},
\term{folded face}, and \term{folded piece of paper} to refer to the image of
a crease, vertex, face, and entire piece of paper under the folding map.
Thus, each folded face subdivides into a finite complex of $C^2$ open regions
joined by points called \term{folded semivertices} and
$C^2$ curves called \term{folded semicreases}.
Each folded crease $\vec X(s)$ can be subdivided into a finite sequence of
$C^2$ curves joined by $C^1$ points called \emph{semikinks} and
not-$C^1$ points called \emph{kinks}.
(Here $C^1$/not-$C^1$ is a property measured of the crease $\vec X(s)$;
 crease points are necessarily not $C^1$ on the folded piece of paper.)
In fact, semivertices do not exist \cite[Corollary~2]{Hypar},
and neither do semikinks (Corollary~\ref{no semikink} below).

\begin{lemma}
\label{no straight}
  A curved crease $\vec x(s)$ folds into a 3D curve $\vec X(s)$
  that contains no line segments
  (and thus is curved except at kinks and semikinks).
\end{lemma}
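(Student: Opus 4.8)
The plan is to argue by contradiction. If $\vec X$ contained a straight segment on some subinterval, then its 3D curvature $K$ would vanish identically there; so I would show instead that $K$ is bounded below by the 2D curvature $k>0$ wherever $\vec X$ is $C^2$. Since a folding subdivides the crease into only finitely many $C^2$ pieces (separated by its kinks and semikinks), and any line segment inside $\vec X$ must overlap one such piece in a subsegment of positive length, it suffices to establish $K(s)\ge k(s)>0$ on each $C^2$ piece. This inequality also immediately gives the parenthetical claim that $\vec X$ is curved away from kinks and semikinks.

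First I would fix a $C^2$ piece of the crease. By the definition of a folding, this piece lies on the boundary of a $C^2$ folded face $S$, which carries a continuous unit normal $\vechat{u}(s)$ along the crease. Because $\|\vec T(s)\|=1$, the acceleration $\vec X''(s)=K(s)\,\vec N(s)$ is orthogonal to $\vec T(s)$, so I would split it into the part tangent to $S$ and the part along $\vechat{u}(s)$; their magnitudes are, by definition, the geodesic curvature $k_g(s)$ and the normal curvature $k_n(s)$ of the crease regarded as a curve on $S$. Taking norms yields the Pythagorean identity
\[ K(s)^2 = k_g(s)^2 + k_n(s)^2 \ge k_g(s)^2 . \]

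The heart of the argument is then to identify $|k_g(s)|$ with $k(s)$. Restricted to this one face, the folding is a $C^2$ isometric embedding of a flat planar region into $\R^3$, and geodesic curvature is intrinsic---it is the norm of the covariant derivative of the unit tangent and depends only on the first fundamental form---so it is preserved under the isometry. In the unfolded paper the crease is a plane curve whose geodesic curvature is exactly $k(s)$, giving $|k_g(s)|=k(s)$ and hence $K(s)^2\ge k(s)^2>0$, since the crease is curved. This contradicts $K\equiv 0$ on a segment.

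I expect the main obstacle to be regularity bookkeeping rather than the geometric idea. Globally the faces are only $C^1$ and merely piecewise-$C^2$, so the intrinsic computation above is legitimate only after localizing to a single $C^2$ piece, where $S$ is genuinely $C^2$ up to the crease and the classical isometry-invariance of geodesic curvature applies. I would also need to confirm the (easy) point that a hypothetical segment cannot be concealed entirely within the kink/semikink set, which is immediate because such points are isolated while a segment has positive length.
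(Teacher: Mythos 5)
Your proposal is correct in its geometric core but takes a genuinely different route from the paper. The paper's proof is purely metric: if $\vec X$ were straight on $[s_1,s_2]$, the intrinsic (on-surface) distance between $\vec X(s_1)$ and $\vec X(s_2)$ would equal the arc length of $\vec X$ over $[s_1,s_2]$, which by isometry equals the arc length of $\vec x$ over $[s_1,s_2]$; but because $\vec x$ is curved and the paper is open, 2D contains a strictly shorter path between $\vec x(s_1)$ and $\vec x(s_2)$, contradicting preservation of path lengths. That argument uses nothing about the folded faces---no normals, no second fundamental form, no invariance of geodesic curvature---which matters because this lemma sits \emph{before} any of that machinery exists in the paper (the surface normals $\vec P_L,\vec P_R$ are defined only in the next subsection, and only at cone-free points). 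Your argument is instead an infinitesimal version of what the paper proves later for smoothly folded creases (Lemma~\ref{geodesic curvature} and Lemma~\ref{curvature in 2D and 3D}, which give $\hat K \cos \frac{1}{2}\rho = \hat k$ and hence $k<K$): intrinsic invariance of geodesic curvature plus $K^2=k_g^2+k_n^2$ yields $K\ge k>0$ at good points. One real advantage of your route: the pointwise inequality actually proves the parenthetical claim that $\vec X$ is curved away from kinks and semikinks, which does not literally follow from the mere absence of line segments (a 3D curve can have isolated curvature zeros without containing a segment); the paper is loose on exactly this point, and the nonvanishing of $K$ is what is later used (e.g., in Lemma~\ref{no tangent}, where $\vec N$ must exist).

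The weak step is the one you flag and then assert: that after restricting to a $C^2$ piece of the crease, the adjacent folded face is ``genuinely $C^2$ up to the crease.'' This does not follow from the paper's definitions, and it conflates two different localizations: the crease subdividing into finitely many $C^2$ \emph{curve} pieces says nothing about one-sided regularity of the \emph{surface} along those pieces, and the piecewise-$C^2$ decomposition of the folding consists of $C^2$ \emph{open} regions, with no stated guarantee that their parametrizations extend $C^2$-ly to the boundary. Concretely, at a cone-ruling apex (which lies on the crease) the incident region is $C^2$ as an open set while its second derivatives blow up approaching the apex, so ``$C^2$ up to the crease'' genuinely fails there; endpoints of semicreases raise the same caution. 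To repair this you must use finiteness of the complex in the definition of piecewise-$C^2$ to conclude such bad points are finite in number, pass to a sub-arc of the hypothetical segment avoiding them, and adopt the reading of ``finite complex of $C^2$ open regions joined by points and $C^2$ curves'' under which the remaining incident region is $C^2$ up to that sub-arc, where the classical isometry-invariance of geodesic curvature (itself only cited, not proved, in the paper) applies. This is plausibly repairable bookkeeping, but it is precisely the bookkeeping the paper's two-line metric argument was designed to avoid, and the paper itself only invokes the geodesic-curvature machinery later, under cone-free hypotheses where the surface normal along the crease is well defined.
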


\begin{proof}
  Suppose $X(s)$ is a 3D line segment for $s \in [s_1, s_2]$.
  Then the distance between $\vec X(s_1)$ and $\vec X(s_2)$ as measured on the
  folded piece of paper is the length of this line segment,
  i.e., the arc length of $\vec X$ over $s \in [s_1, s_2]$ which,
  by isometry, equals the arc length of $\vec x$ over $s \in [s_1, s_2]$.
  However, in the 2D piece of paper, there is a shorter path connecting
  $\vec x(s_1)$ and $\vec x(s_2)$ because the 2D crease is curved
  (and not on the paper boundary, because the paper is an open set),
  contradicting isometry.
%
\end{proof}

\subsection{Developable Surfaces}

A folded face is also known as an uncreased developable surface:
it is \term{uncreased} in the sense that it is $C^1$,
and \term{developable} in the sense that every point $p$ has a
neighborhood isometric to a region in the plane.
The following theorem from \cite{Hypar} characterizes what
uncreased developable surfaces look like:

\begin{theorem}[Corollaries 1--3 of \cite{Hypar}] \label{rule lines}
  Every interior point $p$ of an uncreased developable surface $M$ not
  belonging to a planar neighborhood belongs to a unique rule segment $C_p$.
  The rule segment's endpoints are on the boundary of~$M$.
  In particular, every semicrease is such a rule segment.
\end{theorem}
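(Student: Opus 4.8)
The plan is to reduce the statement to the classical structure theory of $C^2$ developable surfaces and then deal with the low-regularity junctions separately. By the definition of a folding, $M$ is globally $C^1$ and decomposes into finitely many $C^2$ open regions joined along folded semicreases and semivertices. On any one $C^2$ region I would combine developability with the Theorema Egregium: since every point has a neighborhood isometric to a planar region, the Gaussian curvature vanishes identically, so the second fundamental form $\mathrm{II}$ is everywhere degenerate, $\operatorname{rank}\mathrm{II}\le 1$. At an interior point $p$ that lies in no planar neighborhood, $\mathrm{II}(p)\neq 0$, so $\mathrm{II}$ has rank exactly $1$ there and a well-defined one-dimensional kernel, the asymptotic direction.

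Next I would show that the integral curve of this asymptotic direction through $p$ is an honest straight segment in $\R^3$ along which the tangent plane is constant: differentiating the unit surface normal along the kernel direction gives zero, so the normal (hence the tangent plane) is constant along the curve, which forces it to be a line segment. This segment is the candidate rule segment $C_p$. For uniqueness, if two distinct rule segments passed through $p$ then the tangent plane would be constant in two independent directions, forcing $\mathrm{II}(p)=0$; this in turn propagates to a planar neighborhood of $p$, contradicting the hypothesis.

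For the boundary claim I would argue that a maximal rule segment cannot terminate at an interior point $q$ of $M$. Using the global $C^1$ regularity, the tangent plane (and with it the asymptotic direction) extends continuously to $q$; then either $q$ is again non-planar, so the segment extends past it and contradicts maximality, or $\mathrm{II}(q)=0$, which again yields a planar neighborhood and contradicts that $q$ is a limit of rulings. Hence both endpoints of $C_p$ lie on $\partial M$. The final sentence follows similarly: a semicrease lies in the non-planar region (a planar neighborhood is $C^2$, indeed flat, so contains no semicrease), where the ruling field is defined. If the semicrease were transverse to the rulings at some point, the rulings crossing it from both sides, each carrying a constant tangent plane, would glue the two adjacent $C^2$ pieces into a single $C^2$ surface across it, contradicting that it is a genuine non-$C^2$ junction. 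So the semicrease is everywhere tangent to the ruling direction, hence coincides with a rule segment reaching $\partial M$.

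The step I expect to be the main obstacle is precisely the low regularity at the semicreases: $\mathrm{II}$, and therefore the asymptotic-direction field, is a priori defined only on the open $C^2$ pieces, so the crux is to show that this direction field extends continuously across each semicrease using nothing more than the global $C^1$ structure and developability, and relatedly that a ruling never stops at an interior junction. The piecewise-$C^2$ hypothesis is essential here, as it excludes the Nash--Kuiper type $C^1$ pathologies for which no such ruling structure exists. This continuity is what makes both the boundary-extension and the semicrease claims go through, and it is the technical heart of the argument behind Corollaries 1--3 of \cite{Hypar}.
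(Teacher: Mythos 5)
A preliminary remark: the paper does not prove this theorem at all --- it is imported verbatim as Corollaries 1--3 of \cite{Hypar} --- so there is no in-paper proof to compare against, and your proposal must stand on its own. It does not, for a concrete reason: the claim ``at an interior point $p$ that lies in no planar neighborhood, $\mathrm{II}(p)\neq 0$,'' together with its repeated converse use (``$\mathrm{II}(p)=0$ propagates to a planar neighborhood of $p$''), is false. Flat points of a developable surface need not have planar neighborhoods: take the graph of $f$ with $f(x)=0$ for $x\le 0$ and $f(x)=e^{-1/x}$ for $x>0$, a $C^\infty$ developable surface; every point of the line $x=0$ has $\mathrm{II}=0$ yet no planar neighborhood. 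The theorem is required to cover exactly such points --- the paper explicitly stipulates that boundaries of planar regions \emph{are} rule segments --- but at such a point the kernel of $\mathrm{II}$ is two-dimensional, so your construction produces no ruling direction there, and your uniqueness argument and your boundary-extension argument (``either $q$ is non-planar, or $\mathrm{II}(q)=0$, which yields a planar neighborhood'') both collapse. So the proposal fails already in the everywhere-$C^2$ setting, before any regularity subtleties arise.

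Beyond that, the step you yourself flag as ``the main obstacle'' --- continuing the ruling structure across semicreases, and showing a ruling cannot terminate at an interior non-$C^2$ point --- is exactly the content that distinguishes Corollaries 1--3 of \cite{Hypar} from the classical $C^2$ structure theory (Hartman--Nirenberg, Massey), and you defer it rather than prove it; a proof whose acknowledged crux is left open is an outline, not a proof. Two smaller but real gaps of the same kind: (i) ``the normal is constant along the integral curve, which forces it to be a line segment'' is not immediate --- constancy of the normal only places the curve in a fixed affine tangent plane, and straightness of the null curves of a flat surface needs a Codazzi-type argument (delicate at mere $C^2$ regularity); (ii) your claim that rulings meeting a semicrease transversally from both sides would ``glue the two adjacent $C^2$ pieces into a single $C^2$ surface'' is asserted, not argued, and is essentially equivalent to the statement being proved.
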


\begin{corollary} \label{uncreased region}
  Any folded face decomposes into planar regions and
  nonintersecting rule segments (including semicreases)
  whose endpoints lie on creases.
\end{corollary}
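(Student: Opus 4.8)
The plan is to apply Theorem~\ref{rule lines} to each folded face separately and then read off the claimed decomposition. As noted just above the theorem, a folded face is by definition $C^1$ and locally isometric to the plane, so it is exactly an uncreased developable surface $M$, and the theorem applies to it verbatim. The entire corollary should therefore reduce to invoking the structure theorem face by face, with only bookkeeping to assemble the local statements into the global one.

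First I would partition the points of the folded face into two classes: those possessing a planar neighborhood, and those that do not. The first class is an open subset of $M$, and I would declare its connected components to be the \emph{planar regions} of the decomposition. For every point $p$ of the second class, Theorem~\ref{rule lines} supplies a unique rule segment $C_p$ through $p$ whose endpoints lie on the boundary of $M$, and the union of these segments covers the second class exactly. This already exhibits the face as planar regions together with a family of rule segments; and since the theorem explicitly states that every semicrease is such a rule segment, the semicreases are automatically included.

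Next I would verify the two remaining assertions, nonintersection and the location of endpoints. Nonintersection of relative interiors is immediate from uniqueness in Theorem~\ref{rule lines}: if two of these rule segments shared an interior point $p$, each would have to be the unique rule segment $C_p$ through $p$, hence they coincide. For the endpoints, the boundary of a folded face is the image under the folding of the boundary of the corresponding 2D face, and by the definition of a crease pattern this boundary is composed of creases together with their endpoint vertices; so the endpoints produced by the theorem lie on creases, as claimed.

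The step I expect to require the most care is confirming that the two classes of points form a genuine partition that meshes with the rule-segment family—in particular, that a rule segment assigned to a nonplanar point never has its relative interior straying into a planar region, and conversely that the planar regions need no further subdivision into rule segments. This is precisely the content already packaged into Theorem~\ref{rule lines}, through the maximality and uniqueness of $C_p$ and the fact that its endpoints sit on the boundary of $M$ rather than in the planar part. Consequently the corollary should demand no new geometric argument beyond quoting that theorem on each face and observing the disjointness dictated by uniqueness.
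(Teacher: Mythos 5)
Your proposal is correct and matches the paper's treatment: the paper states Corollary~\ref{uncreased region} with no proof at all, regarding it as an immediate consequence of Theorem~\ref{rule lines} applied to each folded face, which is exactly your face-by-face reading (planar neighborhoods form the planar regions, the remaining points carry unique rule segments ending on the face boundary, i.e., on creases). The one delicate point---that a priori two segments could cross at a point having a planar neighborhood, where the theorem's uniqueness clause is silent---is the step you flag yourself; it is settled not by the theorem's literal wording (maximality and boundary endpoints alone do not forbid a segment's interior from entering a planar region) but by the cited results of \cite{Hypar}, as the paper itself acknowledges immediately after the corollary when it asserts that every rule segment has a nonplanar neighborhood, so your appeal to the packaged citation is at the same level of rigor as the paper's own.
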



For a folded piece of paper, we use the term \emph{(3D) rule segment} for
exactly these segments $C_p$ computed for each folded face,
for all points $p$ that are not folded vertices, not folded crease points,
and not belonging to a planar neighborhood.
In particular, we view the interior of planar regions as not containing any
rule segments (as they would be ambiguous); however,
the boundaries of planar regions are considered rule segments.
As a consequence, all rule segments have a neighborhood that is nonplanar.

For each 3D rule segment in the folded piece of paper,
we can define the corresponding 2D rule segment by the inverse mapping.
By isometry, 2D rule segments are indeed line segments.

Define a \emph{cone ruling} at a crease point $\vec x(s)$ to be a fan of
2D rule segments emanating from $\vec x(s)$ in a positive-length interval of
directions $[\theta_1, \theta_2]$.

\subsection{Orientation}

We orient the piece of paper in the $xy$ plane
by a consistent normal $\vec e_z$ (in the $+z$ direction)
called the \emph{top side}.
This orientation defines, for a 2D crease $\vec x = \vec x(s)$
in the crease pattern, a \emph{left normal}
$\vechat n(s) = \vec e_z \times \vec t(s)$.
Where $\vec x(s)$ is curved and thus $\vec n(s)$ is defined,
we have $\vechat n(s) = \pm \vec n(s)$ where the sign specifies whether
the left or right side corresponds to the convex side of the curve.
We can also characterize a 2D rule segment incident to $\vec x(s)$
as being \emph{left} of $\vec x$ when the vector emanating from $\vec x(s)$
has positive dot product with $\vechat n(s)$, and \emph{right} of $\vec x$
when it has negative dot product.
(In Lemma~\ref{no tangent} below, we prove that no rule segment is tangent
 to a crease, and thus every rule segment is either left or right of the crease.)

We can also define the \emph{signed curvature} $\hat k(s)$ to flip sign where
$\vechat n(s)$ does: $\hat k(s) \vechat n(s) = k(s) \vec n(s)$.
Then $\hat k(s)$ is positive where the curve turns left and
negative where the curve turns right (relative to the top side).


\subsection{Unique Ruling}

Call a crease point $\vec x(s)$ \emph{uniquely ruled on the left}
if there is exactly one rule segment left of $\vec x(s)$;
symmetrically define \emph{uniquely ruled on the right};
and define \emph{uniquely ruled} to mean uniquely ruled on
both left and right.

By Corollary~\ref{uncreased region},
there are two possible causes for a crease point $\vec x(s)$ to be not
uniquely ruled (say on the left).
First, there could be one or more cone rulings (on the left) at $\vec x(s)$.
Second, there could be one or more planar 3D regions incident to $\vec X(s)$
(which, in 2D, lie on the left of $\vec x(s)$, meaning the points have
positive dot product with $\vechat n(s)$).

\begin{figure}
  \centering
  \includegraphics[width=0.6\linewidth]{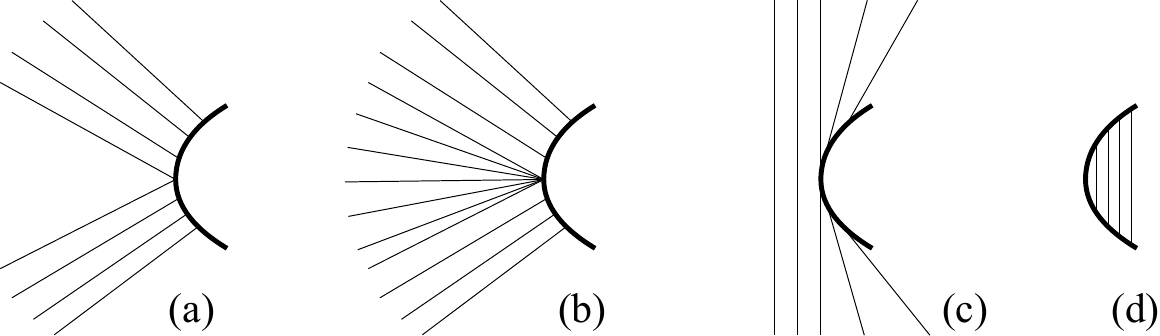}
  \caption{Possibilities for a crease to be not uniquely ruled.}
  \label{fig:not-uniquely-ruled-cases}
\end{figure}

One special case of unique ruling is when a rule segment is tangent to
a curved crease.  Ultimately, in Lemma~\ref{no tangent}, we will prove
that this cannot happen, but for now we need that the surface normals
remain well-defined in this case.
There are two subcases depending on whether the rule segment is on the
convex or concave side of the crease, as in
Figures~\ref{fig:not-uniquely-ruled-cases}(c) and~(d).
The rule segment's direction in 3D and surface normal vector remain
well-defined in this case, by taking limits of nearby rule segments.
In the concave subcase~(d), we take the limit of rule segments on the same
side of the curve.
In the convex subcase~(c), the rule segment splits the surface locally into two
halves, and we take the limit of rule segments in the half not containing
the crease.
Because the surface normals are thus well-defined, we do not need to
distinguish this case in our proofs below.

%

Call a crease point $\vec x(s)$ \emph{cone free} if there are no cone rulings
at $\vec x(s)$; similarly define \emph{cone free on the left/right}.
Such a point may still have a planar region, but only one:

\begin{lemma} \label{only one planar}
  If a crease point $\vec x(s)$ is cone free, then it has at most one
  planar region on each side.
\end{lemma}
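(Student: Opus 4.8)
The plan is to prove the contrapositive, one side at a time: if a crease point $\vec x(s)$ that is cone free on the left nevertheless had two distinct planar regions on its left, I would produce a cone ruling on the left, contradicting cone-freeness; the right side is identical. So suppose $P_1 \neq P_2$ are planar regions incident to $\vec x(s)$ on the left, and restrict attention to a small left neighborhood of $\vec x(s)$ (points $q$ with $(q-\vec x(s))\cdot\vechat n(s)>0$). The first observation is that $P_1$ and $P_2$ cannot share a boundary rule segment: a folding is $C^1$ on each face, so along a shared segment the two tangent planes would agree, and since each $P_i$ is flat its tangent plane is its own supporting plane; hence the supporting planes would coincide, making $P_1$ and $P_2$ coplanar and, being connected through the shared segment, a single planar region.

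Because $P_1$ and $P_2$ cannot abut along a rule segment, the open cones of directions at $\vec x(s)$ pointing into $P_1$ and into $P_2$ cannot be adjacent, so they are separated by a positive-width interval $S$ of directions along which a small step lands in neither planar region. By Corollary~\ref{uncreased region}, every point of this non-planar sector lies on a rule segment. After shrinking $S$ to a closed sub-interval of directions bounded away from the crease tangent $\pm\vec t(s)$, I may assume the crease itself does not enter $S$ except at $\vec x(s)$.

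The main step — and the main obstacle — is to show that the rule segments filling $S$ all issue from $\vec x(s)$, so that together they constitute a cone ruling. The two boundaries of $S$, shared with $P_1$ and with $P_2$, are limiting rule segments, and since $\vec x(s)$ lies in $\overline{P_i}\cap\overline S$ each of these bounding segments passes through $\vec x(s)$. For the interior, I would invoke the endpoint constraint of Corollary~\ref{uncreased region}: every rule-segment endpoint lies on a crease, and near the tip of $S$ the only crease point in sight is $\vec x(s)$ itself (the crease is tangent to $\vec t(s)$ and so avoids $S$ near $\vec x(s)$); hence every rule segment meeting $S$ sufficiently close to $\vec x(s)$ must terminate at $\vec x(s)$. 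As the direction ranges over the positive-length interval $S$, these segments sweep a fan emanating from $\vec x(s)$, which is exactly a cone ruling and contradicts cone-freeness. In 3D this is the intuitive statement that a $C^1$ developable surface bridging the two non-coplanar flat pieces $P_1,P_2$ around the boundary point $\vec X(s)$ must do so through a cone with apex $\vec X(s)$, since the tangent plane is constant along each rule segment and must rotate continuously from the plane of $P_1$ to that of $P_2$. The delicate point to make fully rigorous is precisely this emanation claim — excluding rule segments that pass near $\vec x(s)$ without ending there — which I would settle by combining the crease-endpoint constraint with a limiting/compactness argument on the rulings as they approach $\vec x(s)$, using that the ruling varies continuously across a face.
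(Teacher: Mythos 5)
Your overall architecture matches the paper's proof: take two planar regions on one side of $\vec x(s)$, consider the sector between them, note that it must be covered by rule segments (Corollary~\ref{uncreased region}), and argue that these segments must all emanate from $\vec x(s)$, yielding a cone ruling that contradicts cone-freeness. Your preliminary observation that two planar regions cannot abut along a shared rule segment (their tangent planes would agree, forcing coplanarity and hence a single maximal planar region) is correct, and is a point the paper leaves implicit when it simply picks two \emph{adjacent} planar regions with a positive-angle wedge between them.

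However, the step you yourself flag as delicate---that every rule segment meeting the sector sufficiently close to $\vec x(s)$ must terminate at $\vec x(s)$---is a genuine gap, and it is the crux of the whole lemma. The crease-endpoint constraint alone does not close it: a rule segment can pass arbitrarily close to $\vec x(s)$ while both of its endpoints lie on faraway creases, so ``the only crease in sight is $\vec x(s)$'' proves nothing by itself, as you note. The missing idea, which is exactly how the paper finishes, is to use the two segments bounding the wedge---the boundaries of $P_1$ and $P_2$ facing the wedge, which by the paper's convention are themselves rule segments containing $\vec x(s)$---as \emph{barriers}. Rule segments are noncrossing (Corollary~\ref{uncreased region}), so any rule segment covering a wedge point near the apex is trapped inside the wedge; by Theorem~\ref{rule lines} its endpoints must lie on creases, but near the apex the closed wedge contains no crease point other than $\vec x(s)$, and in particular the boundaries of $P_1$ and $P_2$ are not creases, so the segment cannot end on them. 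An elementary geometric fact then forces the conclusion: a straight segment inside a wedge of angle less than $180^\circ$ that passes within distance $\varepsilon$ of the apex must either pass through the apex or meet one of the bounding rays within distance $O(\varepsilon)$ of the apex. Since the latter is forbidden, every rule segment near the apex emanates from $\vec x(s)$, and sweeping over the positive-length interval of directions gives the cone ruling. This replaces the undeveloped ``limiting/compactness'' argument you propose. Note also that your move of shrinking $S$ to a closed sub-interval of directions bounded away from $\pm\vec t(s)$ actually works against you: the shrunk sector's bounding rays are arbitrary rays, not rule segments, so they cannot serve as barriers---and the shrinking is unnecessary, since the wedge between two adjacent planar regions on one side automatically avoids the tangent directions.
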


\begin{proof}
  Refer to Figure~\ref{fig:one-planar-region}.
  Suppose $\vec x(s)$ had at least two planar regions on, say, the left side.
  Order the regions clockwise around $\vec x(s)$, and pick two adjacent
  planar regions $R_1$ and $R_2$.  By Corollary~\ref{uncreased region},
  the wedge with apex $\vec x(s)$ between $R_1$ and $R_2$ must be covered by
  rule segments.  But by Theorem~\ref{rule lines}, a rule segment cannot
  have its endpoints on the boundaries of $R_1$ and $R_2$, as it must
  extend all the way to creases.  Thus the only way to cover the wedge
  locally near $\vec x(s)$ is to have a cone ruling at $\vec x(s)$.
\end{proof}
\begin{figure}
  \centering
  \includegraphics[width=0.3\linewidth]{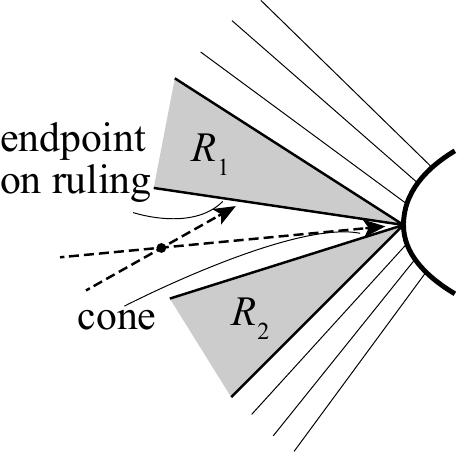}
  \caption{Two adjacent planar regions at a point.}
  \label{fig:one-planar-region}
\end{figure}

\subsection{Surface Normals}

In 3D, the orientation defines a top-side normal vector at every $C^1$ point.%
\footnote{For example, take infinitesimally small triangles around the point,
  oriented counterclockwise in 2D, and compute their normals in 3D.}
For a crease point $\vec X(s)$ that is cone free on the left,
we can define a unique \emph{left surface normal} $\vec P_L(s)$.
First, if there is a planar region on the left of $\vec X(s)$,
then by Lemma~\ref{only one planar} there is only one such planar region,
and we define $\vec P_L(s)$ to be
the unique top-side normal vector of the planar region.
Otherwise,
$\vec X(s)$ is uniquely ruled on the left, and we define $P_L(s)$ to be
the top-side surface normal vector which is constant along
this unique rule segment.
(As argued above, this definition makes sense even when the rule segment
 is a zero-length limit of rule segments.)
Similarly, we can define the right surface normal $\vec P_R(s)$
when $\vec X(s)$ is cone free on the right.



%
%
%

\section{Bisection Property}
\label{Bisection Property}

In this section, we prove that, at a cone-free folded curved crease,
the binormal vector bisects the left and right surface normal vectors,
which implies that the osculating plane of the crease bisects the
two surface tangent planes.  Proving this bisection property requires
several steps along the way, and has several useful consequences.

%

\subsection{$C^2$ Case}

First we prove the bisection property at $C^2$ crease points,
using the following simple lemma:

\begin{lemma} \label{geodesic curvature}
  For a $C^2$ folded curved crease $\vec X(s)$ that is cone-free on the left,
  $$(K(s) \vec N(s)) \cdot (\vec P_L(s) \times \vec T(s)) = \vechat k(s).$$
  For a $C^2$ folded curved crease $\vec X(s)$ that is cone-free on the right,
  $$(K(s) \vec N(s)) \cdot (\vec P_R(s) \times \vec T(s)) = \vechat k(s).$$
\end{lemma}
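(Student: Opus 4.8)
The plan is to recognize the left-hand side $(K(s)\vec N(s))\cdot(\vec P_L(s)\times\vec T(s))$ as the \emph{geodesic curvature} of the folded crease $\vec X(s)$ taken with respect to the left folded face, and then to exploit the fact that geodesic curvature is an intrinsic invariant, preserved by the isometric folding map. Since the unfolded paper is flat, the geodesic curvature of $\vec x(s)$ there is simply its signed curvature $\hat k(s)$, and the identity follows at once.

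In more detail, because $\vec X(s)$ is cone-free on the left, the discussion preceding Lemma~\ref{only one planar} supplies a well-defined continuous left surface normal $\vec P_L(s)$, namely the top-side unit normal of the left folded face along the crease. As $\vec P_L(s)\perp\vec T(s)$ and both are unit, the triple $(\vec T(s),\,\vec P_L(s)\times\vec T(s),\,\vec P_L(s))$ is a right-handed orthonormal (Darboux) frame along the crease. Differentiating $\vec T$ and decomposing $d\vec T/ds=K\vec N$ in this frame, the coefficient of $\vec P_L\times\vec T$ is by definition the geodesic curvature $k_g(s)$, while the coefficient of $\vec P_L$ is the normal curvature. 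Thus the left-hand side of the lemma is exactly $k_g(s)$.

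Next I would establish invariance via the classical fact that geodesic curvature depends only on the first fundamental form, hence is preserved by local isometries. The left folded face is an uncreased developable surface (Corollary~\ref{uncreased region}), so the folding restricted to it is an isometry carrying $\vec x(s)$ to $\vec X(s)$ with arclength preserved; therefore $k_g(s)$ equals the geodesic curvature of $\vec x(s)$ in the plane. For a plane curve the surface normal is $\vec e_z$, the in-surface normal is $\vec e_z\times\vec t=\vechat n$, and $d\vec t/ds=k\vec n=\hat k\,\vechat n$ by definition of the signed curvature, so that planar geodesic curvature is $(d\vec t/ds)\cdot\vechat n=\hat k(s)$. This gives the stated identity; the right-hand case is identical with $\vec P_R$ replacing $\vec P_L$, and the same sign appears because both surface normals are oriented to the top side.

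The main obstacle I anticipate is orientation bookkeeping, ensuring the sign emerges as $+\hat k$ rather than $-\hat k$: one must check that the top-side convention on the folded face, the left-normal convention $\vechat n=\vec e_z\times\vec t$, and the handedness of the Darboux frame are mutually consistent through the folding. A secondary technical point is regularity, since the folded face is only guaranteed $C^1$; rather than invoke a surface-based definition of $k_g$, it is cleanest to \emph{take} $(K\vec N)\cdot(\vec P_L\times\vec T)$ as the definition of geodesic curvature along the $C^2$ crease and to justify its isometric invariance through the developable unrolling, where the ruling direction along which $\vec P_L$ is constant makes the unrolling explicit and smooth enough to differentiate the frame.
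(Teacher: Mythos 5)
Your proposal is correct and follows essentially the same route as the paper's own proof: identify the left-hand side as the geodesic curvature of $\vec X(s)$ on the left surface, invoke the isometric invariance of geodesic curvature, and evaluate it in the flat unfolded state as $(k(s)\vec n(s))\cdot(\vec e_z\times\vec t(s)) = \hat k(s)$. Your added Darboux-frame justification and attention to orientation and regularity only make explicit what the paper leaves to a citation.
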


\begin{proof}
  We prove the left case; the right case is symmetric.
  The left-hand side is known as the geodesic curvature at $\vec X(s)$
  on surface $S_L$, and is known to be invariant under isometry\xxx{cite}.
  In the unfolded 2D state, the geodesic curvature is
  $$(k(s) \vec n(s)) \cdot (\vec e_z \times \vec t(s))
  = (k(s) \vec n(s)) \cdot \vechat n(s) = \hat k(s).$$
\end{proof}

\begin{lemma} \label{osculating plane bisects C^2}
  For a $C^2$ cone-free folded curved crease $\vec X(s)$,
  $\vec B(s)$ bisects $\vec P_L(s)$ and $\vec P_R(s)$.
  In particular, the tangent planes of the surfaces on both sides
  of $\vec X(s)$ form the same angle with the osculating plane.
\end{lemma}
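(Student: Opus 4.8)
The plan is to work entirely within the normal plane of the space curve $\vec X(s)$, i.e.\ the plane spanned by $\vec N(s)$ and $\vec B(s)$, and to pin down both surface normals there using Lemma~\ref{geodesic curvature}. The starting observation is that $\vec X(s)$ lies on each of the two adjacent surfaces, so its tangent $\vec T(s)$ is tangent to both; hence $\vec P_L(s)$ and $\vec P_R(s)$ are each orthogonal to $\vec T(s)$ and may be written in the Frenet normal plane as $\vec P_L = a_L \vec N + c_L \vec B$ and $\vec P_R = a_R \vec N + c_R \vec B$ (no $\vec T$ component). Note $\vec B(s)$ is defined because the $C^2$ crease is curved at $s$: by Lemma~\ref{no straight} the folded crease is curved away from kinks and semikinks, so $K(s)\neq 0$.

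First I would substitute these expansions into Lemma~\ref{geodesic curvature}. Using the right-handed Frenet relations $\vec N\times\vec T=-\vec B$ and $\vec B\times\vec T=\vec N$, one computes $\vec P_L\times\vec T = c_L\vec N - a_L\vec B$, whence $(K\vec N)\cdot(\vec P_L\times\vec T) = K\,c_L$. Lemma~\ref{geodesic curvature} then reads $K(s)\,c_L = \hat k(s)$, and the symmetric computation on the right gives $K(s)\,c_R = \hat k(s)$. Thus $c_L = c_R = \hat k(s)/K(s)$: the two surface normals have the \emph{same} $\vec B$-component, and this component is nonzero because the crease is curved ($\hat k(s)\neq0$).

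To finish, I would use that $\vec P_L$ and $\vec P_R$ are unit vectors, so $a_L^2 = a_R^2 = 1 - c_L^2$ and hence $a_L = \pm a_R$. If $a_L = a_R$ then $\vec P_L = \vec P_R$, meaning the two faces share a tangent plane along the crease and the folding is $C^1$ there---contradicting that a proper folding is not $C^1$ at crease points. Therefore $a_L = -a_R$, giving $\vec P_L + \vec P_R = 2c_L\,\vec B$, a nonzero vector parallel to $\vec B(s)$. Since the sum of two unit vectors points along the bisector of the angle between them, $\vec B(s)$ bisects $\vec P_L(s)$ and $\vec P_R(s)$. The ``in particular'' clause is then immediate: the osculating plane has normal $\vec B$ and the two surface tangent planes have normals $\vec P_L$ and $\vec P_R$, so the equal angles $\angle(\vec B,\vec P_L)=\angle(\vec B,\vec P_R)$ are exactly the (equal) dihedral angles the tangent planes make with the osculating plane.

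The main obstacle I expect is the bookkeeping rather than any deep difficulty: one must fix the right-handed Frenet orientation consistently so that the cross products carry the correct signs, and one must explicitly exclude the degenerate branch $\vec P_L=\vec P_R$. The exclusion is precisely where the hypothesis of a genuine (not-$C^1$) crease enters; the curvedness hypothesis is what guarantees $c_L\neq 0$, so that the bisector is honestly the $\vec B$ direction and the statement is not vacuous.
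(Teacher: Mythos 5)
Your proposal is correct and takes essentially the same route as the paper: both arguments rest on Lemma~\ref{geodesic curvature} to conclude that $\vec P_L$ and $\vec P_R$ have the same dot product with $\vec B$ (your $c_L=c_R=\hat k/K$ is exactly the paper's equality of triple products after cancelling $K$), and then use coplanarity in the plane orthogonal to $\vec T$ to deduce bisection. The only difference is bookkeeping: you expand in the $(\vec N,\vec B)$ basis and explicitly exclude the degenerate branch $\vec P_L=\vec P_R$ using non-$C^1$-ness at crease points, a case the paper's ``same angle plus coplanar'' step leaves implicit.
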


\begin{proof}
A $C^2$ cone-free folded curved crease $\vec X(s)$ has unique left and right
surface normals $\vec P_L(s)$ and $\vec P_R(s)$.
By Lemma~\ref{geodesic curvature}, the left and right geodesic curvatures match:
$$(K(s) \vec N(s)) \cdot (\vec P_L(s) \times \vec T(s))
= (K(s) \vec N(s)) \cdot (\vec P_R(s) \times \vec T(s)).$$
The $K(s)$ scalars cancel, leaving a triple product:
$$\vec N(s) \cdot (\vec P_L(s) \times \vec T(s))
= \vec N(s) \cdot (\vec P_R(s) \times \vec T(s)),$$
which is equivalent to
$$\vec P_L(s) \cdot (\vec T(s) \times \vec N(s))
= \vec P_R(s) \cdot (\vec T(s) \times \vec N(s)).$$
Therefore $\vec B(s) = \vec T(s) \times \vec N(s)$ forms the same angle with
$\vec P_L(s)$ and $\vec P_R$.
Because $\vec B$, $\vec P_L$ and $\vec P_R$ lie in a common plane
orthogonal to $\vec T$, $\vec B$ bisects $\vec P_L$ and $\vec P_R$.
\end{proof}

\begin{figure}
  \centering
  \includegraphics[width=0.6\linewidth]{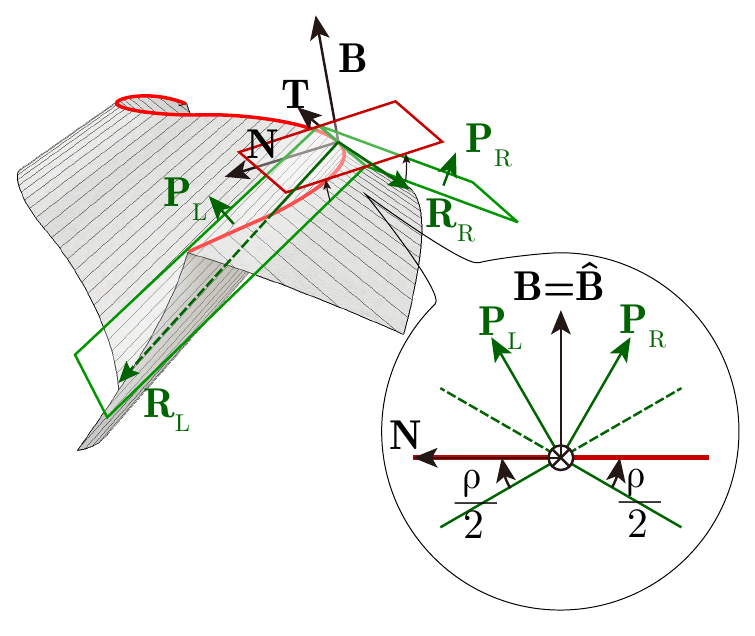}
  \caption{Binormal vector bisects surface normals.}
  \label{fig:bisection}
\end{figure}

\subsection{Top-Side Frenet Frame}

By Lemma~\ref{osculating plane bisects C^2},
at $C^2$ cone-free points $\vec X(s)$,
we can define the \emph{top-side normal} of the osculating plane
$\vechat B=\pm \vec B =  \pm \vec T\times \vec N$
whose sign is defined such that
$\vechat B \cdot \vec P_L = \vechat B \cdot \vec P_R > 0$.
Thus $\vechat B$ consistently points to the front side of the surface.
By contrast, $\vec B$'s orientation depends on whether the 2D curve locally
turns left or right (given by the sign of $k(s)$), flipping orientation at
inflection points (where $k(s) = 0$).

More formally, we will use the \emph{top-side Frenet frame}
given by $(\vec T(s), \vechat N(s), \vechat B(s))$ where
$\vechat N(s) = \vechat B(s) \times \vec T(s)$.

\begin{lemma} \label{top-side Frenet semikink}
  Consider a folded curved crease $\vec X(s)$
  that is cone-free at a semikink $s = \tilde s$.
  The top-side Frenet frames are identical in positive and negative limits:
  $$\lim_{s \to \tilde s^+} (\vec T(s), \vechat N(s), \vechat B(s))
  = \lim_{s \to \tilde s^-} (\vec T(s), \vechat N(s), \vechat B(s))$$
  and thus the top-side Frenet frame is continuous at $s = \tilde s$.
\end{lemma}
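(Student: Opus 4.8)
The plan is to reduce the whole statement to the continuity of the two surface normals $\vec P_L$ and $\vec P_R$ across the semikink, and then reconstruct the entire frame from them. First I would dispose of the easy component: because $\vec X$ is $C^1$ at the semikink $\tilde s$, the tangent $\vec T(s)$ is continuous there, so $\lim_{s\to\tilde s^+}\vec T(s)=\lim_{s\to\tilde s^-}\vec T(s)=\vec T(\tilde s)$. Since $\vechat N=\vechat B\times\vec T$, the remaining two frame vectors are controlled once $\vechat B$ is shown to be continuous, so the problem collapses to understanding $\vechat B$.

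The engine of the argument is an explicit formula for $\vechat B$ in terms of the surface normals. The normals $\vec P_L(s)$ and $\vec P_R(s)$ are perpendicular to tangent planes that each contain $\vec T(s)$, hence both lie in the normal plane $\vec T(s)^\perp$, which also contains $\vec B(s)$. By Lemma~\ref{osculating plane bisects C^2}, $\vec B$ bisects $\vec P_L$ and $\vec P_R$ inside this plane, and by definition $\vechat B$ is the sign choice with $\vechat B\cdot\vec P_L=\vechat B\cdot\vec P_R>0$. These conditions single out the internal angle bisector, so $\vechat B=(\vec P_L+\vec P_R)/\|\vec P_L+\vec P_R\|$; the denominator is nonzero precisely because the top-side orientation forbids $\vec P_L=-\vec P_R$. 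This exhibits $\vechat B$, and therefore $\vechat N$, as a continuous function of the triple $(\vec T,\vec P_L,\vec P_R)$, so continuity of the whole top-side frame at $\tilde s$ reduces to continuity of $\vec P_L$ and $\vec P_R$.

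For the surface normals I would treat each side separately, say the left. On a neighborhood of $\tilde s$ the crease borders a single $C^1$ developable face $F_L$, since a semikink is an interior crease point rather than a vertex and so the bordering face does not change across it. Because $\tilde s$ is cone-free on the left, $\vec X(\tilde s)$ has a unique incident left rule segment (or unique planar region) $\ell$ with a well-defined normal $\vec P_L(\tilde s)$. For $s$ near $\tilde s$ the left rule segments cannot cross $\ell$ by Corollary~\ref{uncreased region}, so as $s\to\tilde s^\pm$ they are trapped near $\ell$ and converge to it. Choosing interior sample points on these segments that converge to an interior point of $\ell$, and using that the top-side normal of $F_L$ is continuous on $F_L$ and constant along each rule segment, I obtain $\vec P_L(s)\to\vec P_L(\tilde s)$ from both sides. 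The symmetric argument gives continuity of $\vec P_R$, and combining this with the bisector formula and the continuity of $\vec T$ finishes the proof.

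The main obstacle I anticipate is exactly this last step: making rigorous the convergence of the neighboring incident rule segments to $\ell$ and transferring the \emph{interior} continuity of the face normal out to its boundary value $\vec P_L(s)$ along the crease. Non-crossing pins the nearby rule segments near $\ell$, but I would still need to exclude pathologies such as the length of $\ell$ degenerating to zero or cone rulings accumulating at $\tilde s$ from the adjacent open $C^2$ pieces (so that the frame is actually defined along the sequences over which I take limits). Cone-freeness at $\tilde s$ together with the $C^2$ structure on each side should supply precisely what is needed to legitimize the sampling-point limit, and this is where I would spend most of the care.
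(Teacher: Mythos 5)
Your proof is correct and follows essentially the same route as the paper's: tangent continuity from the $C^1$ condition at the semikink, continuity of $\vec P_L(s)$ and $\vec P_R(s)$ from cone-freeness, and then the bisection property together with the sign condition $\vechat B \cdot \vec P_L = \vechat B \cdot \vec P_R > 0$ pinning down $\vechat B$, hence $\vechat N = \vechat B \times \vec T$. The differences are only in presentation: you make the bisector explicit as $(\vec P_L + \vec P_R)/\|\vec P_L + \vec P_R\|$ where the paper argues uniqueness abstractly, and you elaborate the rule-segment limit argument for the continuity of the surface normals, a step the paper simply asserts as a consequence of having no cone ruling at $\tilde s$.
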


\begin{proof}
  First, $\vec T(\tilde s)$ is continuous because $\vec X(s)$ is
  $C^1$ at a semikink $s = \tilde s$.

  Second, by Lemma~\ref{osculating plane bisects C^2}, in the positive and
  negative limits, $\vec B(s)$ bisects $\vec P_L(s)$ and $\vec P_R(s)$.
  Because there is no cone ruling at $s = \tilde s$,
  the left and right surface normals $\vec P_L(s)$ and $\vec P_R(s)$
  have equal positive and negative limits at $\tilde s$,
  so $\vec P_L(\tilde s)$ and $\vec P_R(\tilde s)$ are continuous.
  Thus $\vec B(\tilde s^+)$ and $\vec B(\tilde s^-)$ must lie on a common
  bisecting line of $\vec P_L(\tilde s)$ and $\vec P_R(\tilde s)$,
  and $\vechat B(\tilde s)$ is uniquely defined by having positive dot product
  with $\vec P_1(\tilde s)$ and $\vec P_2(\tilde s)$.
  This gives us a unique definition of $\vechat B(s)$.

  Third, $\vechat N(s)$ is continuous as $\vechat B(s) \times \vec T(s)$.
  Therefore $(\vec T(s), \vechat N(s), \vechat B(s))$ is continuous at
  $s = \tilde s$.
\end{proof}


At $C^2$ points $X(s)$, we can define the \emph{signed curvature} $\hat K(s)$
to flip sign where $\vechat N(s)$ does:
$\hat K(s) \vechat N(s) = K(s) \vec N(s)$.
As in 2D, $\hat K(s)$ is positive where the curve turns left and
negative where the curve turns right (relative to the top side).



\subsection{General Bisection Property}

By combining Lemmas~\ref{osculating plane bisects C^2}
and~\ref{top-side Frenet semikink}, we obtain a stronger bisection lemma:

\begin{corollary} \label{osculating plane bisects}
  For a cone-free folded curved crease $\vec X(s)$,
  $\vechat B(s)$ bisects $\vec P_L(s)$ and $\vec P_R(s)$.
  In particular, the tangent planes of the surfaces on both sides
  of $\vec X(s)$ form the same angle with the osculating plane.
\end{corollary}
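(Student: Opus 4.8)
The plan is to glue the $C^2$ bisection property (Lemma~\ref{osculating plane bisects C^2}) across semikinks using the frame-continuity result (Lemma~\ref{top-side Frenet semikink}). First I would recall that, by the definitions in Section~\ref{Foldings}, a cone-free folded curved crease $\vec X(s)$ decomposes into a finite sequence of maximal $C^2$ arcs joined at $C^1$ points (semikinks) and not-$C^1$ points (kinks); the vectors $\vec T(s)$, $\vec N(s)$, $\vec B(s)$, and hence $\vechat B(s)$, are defined precisely away from the kinks, where the crease fails to be even $C^1$.

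On the open interior of each $C^2$ arc the crease is genuinely $C^2$ and cone-free, so Lemma~\ref{osculating plane bisects C^2} applies verbatim: $\vec B(s)$ bisects $\vec P_L(s)$ and $\vec P_R(s)$. Since $\vechat B(s) = \pm\vec B(s)$ is by definition the choice with $\vechat B(s)\cdot\vec P_L(s) = \vechat B(s)\cdot\vec P_R(s) > 0$, it is the front-pointing internal bisector, so $\vechat B(s)$ bisects $\vec P_L(s)$ and $\vec P_R(s)$ throughout each arc interior. This already settles the corollary at every $C^2$ point, and in particular the osculating plane (with normal $\vec B$) makes equal angles with the two surface tangent planes there.

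It remains to handle the semikinks, i.e.\ the $C^1$ junctions between consecutive $C^2$ arcs. Here I would invoke Lemma~\ref{top-side Frenet semikink}: at a cone-free semikink $s=\tilde s$ the one-sided limits of the top-side Frenet frame $(\vec T, \vechat N, \vechat B)$ agree, and (as established inside its proof) $\vec P_L$ and $\vec P_R$ are continuous there. The bisection identity $\vechat B(s)\cdot\vec P_L(s) = \vechat B(s)\cdot\vec P_R(s)$ holds on both sides of $\tilde s$ by the previous paragraph, so taking the common one-sided limits and using continuity of all three vectors yields the same identity at $s=\tilde s$; the positivity condition likewise passes to the limit, so $\vechat B(\tilde s)$ is again the internal bisector. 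Combining the arc interiors with their semikink endpoints covers every non-kink point of $\vec X$, which is the full domain on which $\vechat B$ is defined, completing the argument.

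There is no deep obstacle here; the work is entirely in the bookkeeping, and the one point that requires care is the orientation/sign. Lemma~\ref{osculating plane bisects C^2} is phrased for $\vec B$, whose orientation flips at inflection points ($k(s)=0$), so I must pass through the front-side normalization $\vechat B$—rather than $\vec B$ itself—to obtain a continuous bisector that glues across semikinks, since $\vec B$ could otherwise jump to the opposite (external) bisector. Once the frame is taken to be the top-side frame, Lemma~\ref{top-side Frenet semikink} supplies exactly the continuity needed, and the limit argument is immediate.
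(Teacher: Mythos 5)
Your proposal is correct and takes essentially the same route as the paper: the paper obtains this corollary precisely by combining Lemma~\ref{osculating plane bisects C^2} (bisection on the $C^2$ pieces) with Lemma~\ref{top-side Frenet semikink} (continuity of the top-side frame, and of $\vec P_L$, $\vec P_R$, at cone-free semikinks), which is exactly your decompose-and-glue argument. The paper states this combination without spelling out the details; your write-up simply makes the limit argument and the $\vechat B=\pm\vec B$ sign normalization explicit.
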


\subsection{Consequences}

Using the bisector property, we can prove the nonexistence of
a few strange situations.

\begin{lemma} \label{no planar}
  A crease $\vec X$ curved at $s$ cannot have a positive-length interval
  $s \in (s-\varepsilon, s+\varepsilon)$ incident to a planar region.
\end{lemma}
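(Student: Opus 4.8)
The plan is to argue by contradiction using the bisection property. Suppose the crease $\vec X$ is incident to a planar region $R$ along the whole interval; without loss of generality $R$ lies on the left, and since $K$ is continuous I may shrink the interval so that $\vec X$ is curved and $C^2$ throughout it. By Corollary~\ref{uncreased region} a planar region incident to a crease has that crease on its boundary, so the arc $\vec X(s)$ over this interval lies entirely in the single plane $\Pi_0$ carrying $R$, and the left surface normal $\vec P_L(s)$ is the constant unit normal of $\Pi_0$. In particular, the left tangent plane equals $\Pi_0$.

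Next I would identify the osculating plane. Because $\vec X$ is curved, its Frenet frame is defined, and since the arc lies in $\Pi_0$ both $\vec T(s)$ and $\vec N(s) = (d\vec T(s)/ds)/K(s)$ lie in $\Pi_0$; hence the osculating plane is exactly $\Pi_0$. Thus the osculating plane coincides with the left tangent plane, i.e.\ they make zero angle. Now I would invoke the bisection property to transfer this to the right side: by Lemma~\ref{osculating plane bisects C^2} the osculating plane makes the same angle with the left and right tangent planes, so the right tangent plane also makes zero angle with $\Pi_0$ and therefore coincides with it. Then $\vec P_R(s)$ and $\vec P_L(s)$ are both unit top-side normals of the same plane, so $\vec P_R(s) = \vec P_L(s)$. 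But equal left and right surface normals mean the folding is $C^1$ across the crease point, contradicting the requirement that a folding be not $C^1$ at any crease point.

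To license Lemma~\ref{osculating plane bisects C^2} I must supply a $C^2$ crease point in the interval that is cone free on both sides. The left is automatically cone free, since the flat region $R$ fills the entire left neighborhood of the crease and a cone ruling's positive-measure fan of rule segments cannot lie in the interior of a planar region (which, by convention, contains no rule segments; only its boundary rays are rule segments). For the right, I would argue that a cone-free-on-the-right point exists in the interval because cone rulings at distinct nearby crease points would emit overlapping fans of rule segments, forcing crossings that contradict the nonintersection in Corollary~\ref{uncreased region}; hence cone rulings cannot occupy a whole subinterval on the right.

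I expect the main obstacle to be exactly this cone-free bookkeeping rather than the geometric core. The heart of the argument---``osculating plane $=$ the planar region's plane, so the bisection property collapses the right tangent plane onto it as well, yielding $C^1$''---is short and robust once one knows the relevant point is cone free and the arc genuinely lies in $\Pi_0$. The care needed is in showing cleanly that the planar region forces cone-freeness on the left and that cone rulings cannot densely fill the right side of the interval.
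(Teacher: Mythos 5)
Your proposal is correct and takes essentially the same route as the paper's own (three-line) proof: the osculating plane of the curved crease must coincide with the plane of the incident planar region, i.e.\ with the left tangent plane; the bisection property (the paper cites Corollary~\ref{osculating plane bisects}, you cite the $C^2$ version, Lemma~\ref{osculating plane bisects C^2}) then forces the right tangent plane onto that same plane, contradicting that the folding is not $C^1$ along the crease. Your additional bookkeeping---checking cone-freeness on the planar side and arguing (via disjointness of positive-area fans) that cone points cannot fill a subinterval on the other side, so that a $C^2$, curved, cone-free point exists at which to apply the bisection lemma---addresses a hypothesis the paper's proof invokes silently, so it is a refinement of, not a departure from, the published argument.
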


\begin{proof}
  If this situation were to happen, then the osculating plane of the curve
  must equal the plane of the planar region, which is say the left surface
  plane.  By Corollary~\ref{osculating plane bisects},
  the right surface plane must be the same plane.  But then the folded
  piece of paper is actually planar along the crease, contradicting that
  it is not $C^1$ along the crease.
\end{proof}

\begin{lemma} \label{no tangent}
  A rule segment cannot be tangent to a cone-free curved crease point 
  (at a relative interior point, in 2D or 3D).
\end{lemma}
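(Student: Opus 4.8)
The plan is to mirror the argument of Lemma~\ref{no planar}: show that a tangent rule segment forces the surface tangent plane on one side of the crease to coincide with the crease's osculating plane, and then invoke the bisection property to conclude that the other side's tangent plane coincides with it too, so the folding would be $C^1$ across the crease---contradicting the definition of a folding. Work on one side, say the face $S_L$ with left surface normal $\vec P_L$; the other side is symmetric. Because the folding is isometric and $C^1$ on each face, its differential at $\vec x(s_0)$ carries $\vec t(s_0)$ to $\vec T(s_0)$ and carries the 2D rule direction to the 3D rule direction; hence the 2D rule segment is tangent to $\vec x$ exactly when the 3D rule segment is tangent to $\vec X$. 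This disposes of the ``2D or 3D'' clause, and it suffices to derive a contradiction in 3D.

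So suppose the 3D rule segment meeting the crease at $\vec X(s_0)$ is tangent to it, i.e.\ it is the straight segment $\vec X(s_0) + r\,\vec T(s_0)$. Since $\vec x(s_0)$ is cone-free, $\vec P_L(s_0)$ is well defined by the limit construction set up above (valid even for a tangent rule segment in subcases~(c) and~(d)) and is constant along this rule segment, so the left tangent plane along the segment is the fixed plane $\Pi$ through $\vec X(s_0)$ with normal $\vec P_L(s_0)$ and containing $\vec T(s_0)$. The key step is to show $\vec N(s_0)\in\Pi$. The rule segment is a straight line lying in $S_L$ through $\vec X(s_0)$ with tangent $\vec T(s_0)$; being straight it has zero acceleration, hence zero normal curvature there, and since normal curvature depends only on the tangent direction, the crease---which shares the tangent $\vec T(s_0)$---also has zero normal curvature at $s_0$: that is, $(K(s_0)\,\vec N(s_0))\cdot\vec P_L(s_0)=0$. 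The crease is curved, so (as in Lemma~\ref{geodesic curvature}) $K(s_0)\neq 0$ and the Frenet frame is defined; thus $\vec N(s_0)\cdot\vec P_L(s_0)=0$, and together with $\vec T(s_0)\cdot\vec P_L(s_0)=0$ (the face normal is orthogonal to the in-surface tangent) this gives that $\vec P_L(s_0)$ is orthogonal to the plane spanned by $\vec T(s_0)$ and $\vec N(s_0)$. Hence $\vec P_L(s_0)=\pm\vec B(s_0)=\vechat B(s_0)$ and $\Pi$ is exactly the osculating plane.

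Now apply the general bisection property, Corollary~\ref{osculating plane bisects}: $\vechat B(s_0)$ bisects $\vec P_L(s_0)$ and $\vec P_R(s_0)$ with $\vechat B(s_0)\cdot\vec P_L(s_0)=\vechat B(s_0)\cdot\vec P_R(s_0)>0$. Substituting $\vec P_L(s_0)=\vechat B(s_0)$ gives $\vechat B(s_0)\cdot\vec P_R(s_0)=1$, and as $\vec P_R(s_0)$ is a unit vector this forces $\vec P_R(s_0)=\vechat B(s_0)=\vec P_L(s_0)$. The two top-side surface normals therefore agree at $\vec X(s_0)$, so the fold angle there is zero and the folding is $C^1$ at the crease point $\vec x(s_0)$, contradicting that a folding is not $C^1$ at any crease point.

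The step I expect to be the main obstacle is the vanishing-normal-curvature claim, precisely because $\vec X(s_0)$ lies on the \emph{boundary} of the developable face and the tangent rule segment may be only a tangential or zero-length limit of nearby rulings. One must check that the $\vec P_L(s_0)$ produced by the limit construction really is the normal of the single tangent plane $\Pi$ along the ruling, and that the statement ``normal curvature is determined by the tangent direction'' applies to the crease viewed as a curve on the closed face. Once this is secured, the bisection step closes the argument exactly as in Lemma~\ref{no planar}, and both subcase~(c) and subcase~(d) are handled uniformly, since the argument uses only the well-definedness of $\vec P_L(s_0)$ and not whether the segment is on the convex or concave side.
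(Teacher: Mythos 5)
Your proposal is correct and follows essentially the same route as the paper: establish that the tangent-side surface tangent plane coincides with the crease's osculating plane (equivalently $\vec N\cdot\vec P_L=0$), then apply Corollary~\ref{osculating plane bisects} to force $\vec P_R=\vec P_L$, contradicting that the folding is not $C^1$ at crease points. The only difference is one of packaging: the paper splits into two cases (left surface a tangent developable of the crease vs.\ an isolated tangency) and, in the latter, obtains $\vec N\cdot\vec P_L=0$ by one-sidedly differentiating $\vec P_L\cdot\vec T=0$ using constancy of the normal along the ruling---which is precisely the ``rulings are asymptotic directions / normal curvature depends only on tangent direction'' fact you invoke, carrying the same boundary-regularity caveat you flag.
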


\begin{proof}
  Suppose by symmetry that a rule segment is tangent to a crease point
  on its left side.
  If a rule segment is tangent to the crease point $\vec x(s)$ in 2D,
  then it must also be tangent to $\vec X(s)$ in 3D.
  There are two cases: (1) the left surface is a tangent
  surface generated from the crease; (2) the surface is trimmed by the crease
  and is only tangent at the point $\vec X(s)$.

  In Case~1, there is a finite portion of the crease that is $C^2$ and tangent
  to the incident rule segment.
  Then, for that portion of the crease (including~$s$),
  the tangent plane of the left surface is the osculating plane of the curve.
 
  In Case~2, consider surface normal $\vec P_L(s)$ at $\vec X(s)$.
  By assumption, the tangent vector $\vec T$ is parallel to the rule segment
  incident to $\vec X(s)$.
  Suppose by symmetry that $\vec T$ is actually the direction of the rule
  segment from $\vec X(s)$.
  (Otherwise, we could invert the parameterization of $\vec X$.)
  Because the surface normal is constant along the rule segment,
  and thus in the rule-segment direction, we have
  $${d \vec P_L \over d s^+} = \vec 0. $$
  Because $\vec P_L$ and $\vec T$ are perpendicular,
  ${d \over d s^+}(\vec P_L \cdot \vec T) = 0$, which expands to
  $$
  {d \vec P_L \over d s^+} \cdot \vec T +  \vec P_L \cdot {d \vec T \over d s^+} = 0.
  $$
  Thus we obtain $\vec P_L \cdot {d \vec T \over d s^+} = 0$.
  Because the folded crease is not straight (Lemma~\ref{no straight}),
  $\vec N$ is perpendicular to $\vec P_L$.
  Therefore the left tangent plane equals the osculating plane.

  By Corollary~\ref{osculating plane bisects}, in either case,
  the right tangent plane must also equal the osculating plane,
  meaning that the folded piece of paper is actually planar along the crease,
  contradicting that it is not $C^1$ along the crease.
\end{proof}


When the crease is $C^2$,
Lemma~\ref{no tangent} also implicitly follows from the Fuchs--Tabachnikov
relation between fold angle and rule-segment angle
\cite{Fuchs-Tabachnikov-1999, Fuchs-Tabachnikov-2007-developable}.

\begin{corollary} \label{rule lines exist}
  For a crease $\vec X$ curved and cone-free at $s$, the point $\vec X(s)$ has
  an incident positive-length rule segment on the left side of $\vec X$
  and an incident positive-length rule segment on the right side of $\vec X$.
\end{corollary}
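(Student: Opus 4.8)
The plan is to prove the left-side statement and obtain the right side by the symmetric argument. I would split the goal into two claims: (i) at least one rule segment is incident to $\vec X(s)$ on the left, and (ii) any such rule segment has positive length.

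For (i), I would work in the 2D preimage and use Corollary~\ref{uncreased region}: a neighborhood of $\vec x(s)$ on the left decomposes into planar regions and nonintersecting rule segments whose endpoints lie on creases. Since $\vec X(s)$ is cone-free, Lemma~\ref{only one planar} bounds this to at most one planar region on the left. If there is none, then $\vec x(s)$ is uniquely ruled on the left and its single incident rule segment is the desired one. If there is one planar region, I would invoke Lemma~\ref{no planar}: because $\vec X$ is curved at $s$, no positive-length interval of the crease around $s$ can border that planar region. Hence the planar region cannot occupy the whole left half-neighborhood of $\vec x(s)$ (which would force the crease to border it along an interval), so part of the left side near $\vec x(s)$ is covered by rule segments; taking the limit of these rule segments as their footpoints approach $\vec x(s)$ produces a rule segment incident to $\vec X(s)$ on the left.

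For (ii), I would apply Lemma~\ref{no tangent}, which says no rule segment is tangent to a cone-free curved crease point. A degenerate, zero-length incident rule segment can only arise as the tangent limit of shrinking rule segments --- precisely the convex and concave subcases discussed just after the definition of cone ruling --- so non-tangency forces the incident rule segment to be nondegenerate. By Theorem~\ref{rule lines} its endpoints lie on the boundary of the folded face, so it has positive length.

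The step I expect to be the main obstacle is (i), specifically ruling out that the left side of $\vec X(s)$ is swallowed by a single planar region so that no honest incident rule segment survives. This is where both hypotheses are genuinely used and must interlock: cone-freeness (via Lemma~\ref{only one planar}) caps the number of planar regions, while curvature (via Lemma~\ref{no planar}) prevents a planar region from clinging to the crease along an interval. Once an incident rule segment is secured, step (ii) is a comparatively routine limiting argument resting on Lemma~\ref{no tangent}.
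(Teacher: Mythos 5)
Your proposal is correct and takes essentially the same route as the paper's own proof: existence of an incident rule segment on each side follows from Lemma~\ref{no planar} together with Corollary~\ref{uncreased region}, and positive length follows because a zero-length (tangent-limit) rule segment would contradict Lemma~\ref{no tangent}. Your additional case analysis via Lemma~\ref{only one planar} is a harmless elaboration of the same argument, not a different approach.
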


\begin{proof}
  First, by Lemma~\ref{no planar}, $\vec X(s)$ is not locally surrounded
  by a flat region on either side, so by Corollary~\ref{uncreased region},
  $\vec X(s)$ must have a rule segment on its left and right sides.
  Furthermore, such a rule segment cannot be a zero-length limit of nearby
  rule segments, because such a rule segment would be tangent to the curve,
  contradicting Lemma~\ref{no tangent}.
\end{proof}

\xxx{above corollary should also extend to cone case...}


\begin{corollary}
  If a face's boundary is a $C^1$ curved closed curve,
  then the folded face's boundary is not $C^1$.
\end{corollary}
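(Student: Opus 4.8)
The plan is to argue by contradiction: suppose the folded boundary $\vec X$ is $C^1$, and extract a contradiction from the rule-segment structure of the folded face. First I would record the shape of the unfolded face. Since its boundary is a $C^1$ curved closed curve, the signed curvature $\hat k$ is continuous and nowhere zero, hence of one sign; combined with the theorem of turning tangents (total turning $2\pi$ around a simple closed curve bounding the face), this makes the boundary a convex simple closed curve, so the face is a convex topological disk $D$ whose boundary $\vec x$ is traversed monotonically in tangent direction.

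Next I would analyze the rulings. By Corollary~\ref{uncreased region} the folded face decomposes into planar regions and pairwise nonintersecting rule segments whose endpoints lie on the crease; pulling back through the folding isometry, $D$ is filled by non-crossing chords (the 2D rule segments) with endpoints on $\partial D$, together with planar pieces that, by Lemma~\ref{no planar}, cannot meet the curved boundary along any arc. The crux is then a purely topological observation: sending each boundary point to the opposite endpoint of its chord defines a continuous, orientation-reversing involution $\sigma$ of the boundary circle---orientation-reversing precisely because the chords do not cross---and every orientation-reversing involution of $S^1$ has fixed points. At a fixed point $\vec x(s^\ast)=\sigma(\vec x(s^\ast))$ the chord degenerates to a point, i.e. a rule segment is tangent to the crease there.

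Finally I would invoke Lemma~\ref{no tangent}: no rule segment is tangent to a cone-free curved crease point. Since $\vec x(s^\ast)$ is curved, it cannot be cone-free, so an incident cone ruling must sit at $\vec x(s^\ast)$; and a curved crease carrying an incident cone ruling cannot fold $C^1$---it must kink at the cone ruling. This contradicts the assumption that $\vec X$ is $C^1$, proving the statement. (In fact the argument produces at least two such forced points, consistent with the reflection-type involution having exactly two fixed points.)

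I expect the main obstacle to be making the topological step fully rigorous in the presence of exactly the degenerate features the earlier lemmas do not outright exclude: interior planar regions and cone rulings render $\sigma$ only partially defined or multivalued, so one must verify that at each singular location of the chord field one still finds either a boundary tangency (hence a cone ruling, via Lemma~\ref{no tangent}) or a cone ruling directly. The second delicate point is the implication ``incident cone ruling $\Rightarrow$ kink,'' which is the smooth-folding characterization developed in the sequel; by contrast the cone-free subcase is immediate from Lemma~\ref{no tangent} and already shows that a uniformly smooth, cone-free configuration with a $C^1$ curved closed boundary is impossible.
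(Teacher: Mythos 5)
Your endgame coincides exactly with the paper's: locate a rule segment tangent to the boundary (equivalently, a zero-length limit of rule segments), apply Lemma~\ref{no tangent} to conclude that such a point cannot be cone-free, and then apply Theorem~\ref{cone kink} to conclude that the folded boundary kinks there, contradicting $C^1$-ness. The difference is how the tangency is produced. The paper selects, inside the laminar family of planar regions, a ruled region adjacent to at most one planar region, and runs a shrinking/limit argument on its rule segments until they converge to a tangent segment. You instead propose a fixed-point argument: the endpoint map $\sigma$ of the non-crossing chord family is a continuous involution of the boundary circle, it must be orientation-reversing because the chords do not cross (a fixed-point-free orientation-preserving involution is conjugate to the antipodal map, whose chords all cross), and an orientation-reversing homeomorphism of $S^1$ has fixed points, which here are degenerate (tangent) chords. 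In the clean situation---no interior planar regions, no cone rulings---this is correct, and it is a genuinely different and rather elegant device. (The opening convexity discussion is harmless but unnecessary: non-crossing of the chords comes from Corollary~\ref{uncreased region}, not from convexity.)

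The gap is the degenerate case, and it is genuine rather than a formality, because it is precisely where the paper's proof does its work. Cone rulings are indeed harmless (they yield a kink immediately by Theorem~\ref{cone kink}, so you may assume there are none), but interior planar regions are not: a planar region may touch the curved boundary at isolated points (Lemma~\ref{no planar} forbids only positive-length contact), and at such a point $\sigma$ is multivalued, jumping between the endpoints of the two rule segments that bound the planar wedge. Then $\sigma$ is not a continuous involution, and the degree/intermediate-value reasoning cannot simply be quoted; one must exclude the possibility that all of the ``winding'' of the endpoint relation is absorbed by such jumps. This can be repaired---for instance, pass to the closed, symmetric, non-crossing endpoint relation $R \subset S^1 \times S^1$, restrict attention to one side of a fixed chord, and take a maximal nested chain of chords: its intersection is a limit chord which, if nondegenerate, would bound a subregion whose points still need covering by strictly smaller chords, contradicting maximality; hence the limit chord is degenerate, i.e.\ a point of $R$ on the diagonal. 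But that nested-chords limit is exactly the paper's shrinking argument in different clothing, and it, not the involution statement, is the substantive step; as written, your proposal pushes the real difficulty into the ``one must verify'' sentence. A secondary, smaller omission: even in the clean case, continuity of $\sigma$ is asserted rather than proved (it follows from closedness of the union of rule segments together with uniqueness of the ruling, but it does require an argument).
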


\begin{proof}
  Consider the decomposition from Corollary~\ref{uncreased region}
  applied to the face, resulting in planar and ruled regions.
  By Lemma~\ref{no planar}, the ruled regions' boundary collectively
  cover the face boundary.
  The planar regions form a laminar (noncrossing) family in the face,
  so there must be a ruled region adjacent to only one planar region
  (or zero if the entire folded face is ruled).
  This ruled region is either the entire folded face
  or bounded by a portion of the face boundary
  and by a single rule segment (bounding a planar region).
  For each rule segment in the ruled region,
  we can discard the side that (possibly) contains the boundary rule segment,
  effectively shrinking the rule region while preserving its boundary structure
  of partial face boundary and one rule segment.
  In the limit of this process, we obtain a rule segment that is tangent
  to the face boundary.
  By Lemma~\ref{no tangent}, this situation can happen only if the
  face is cone ruled at some point, which by Theorem~\ref{cone kink}
  implies that the folded face boundary is not $C^1$.
\end{proof}


\section{Smooth Folding}
\label{Smooth Folding}

A \emph{smoothly folded crease} is a folded crease that is $C^1$, i.e.,
kink-free.
In Corollary~\ref{no semikink} below, we will show that a
smoothly folded crease is furthermore $C^2$, i.e., it cannot have semikinks.
A \emph{smooth folding} of a crease pattern is a folding in which every
crease is smoothly folded.
In this section, we characterize smooth folding as cone-free.


\begin{theorem} \label{cone kink}
  If a folded crease $\vec X$ has a cone ruling at a point $\vec X(s)$,
  then $\vec X$ is kinked at~$s$.
\end{theorem}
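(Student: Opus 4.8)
The plan is to argue by contradiction: assume the folded crease $\vec X$ is $C^1$ (not kinked) at the cone point $s$, and extract a contradiction from the cone ruling. By the left/right symmetry, suppose the cone ruling lies on the left of $\vec x(s)$. Since $\vec X$ is piecewise-$C^2$, the one-sided tangent limits $\vec T(s^-)$ and $\vec T(s^+)$ exist, and the $C^1$ assumption forces them to agree; write $\vec T_0 := \vec T(s^-) = \vec T(s^+)$.

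First I would follow the innermost left rule segment as the crease parameter $u$ approaches $s$, tracking both its $2$D direction angle $\theta(u)$ (measured from the oriented tangent $\vec t(u)$) and the $3$D direction $\vec R(u)$ of its folded image. Because rule segments do not cross (Corollary~\ref{uncreased region}), $\theta(u)$ varies monotonically near $s$, so the one-sided limits $\beta_+ := \lim_{u\to s^+}\theta(u)$ and $\beta_- := \lim_{u\to s^-}\theta(u)$ exist, as do the limiting $3$D directions $\vec R_+ := \lim_{u\to s^+}\vec R(u)$ and $\vec R_- := \lim_{u\to s^-}\vec R(u)$. The same noncrossing property sandwiches the cone's fan $[\theta_1,\theta_2]$ inside $[\beta_+,\beta_-]$, so the portion of the left face spanned by rulings from $\vec R_+$ to $\vec R_-$ contains the cone.

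The crux is to compare two angles. On one hand, developability makes the left face an isometric image of its planar unfolding, so the angle between the crease and its rule segment at the smooth point $\vec X(u)$ equals the $2$D angle $\theta(u)$; passing to the limit yields $\angle(\vec T_0,\vec R_+)=\beta_+$ and $\angle(\vec T_0,\vec R_-)=\beta_-$. On the other hand, the folded ruling directions trace a $C^1$ curve on the unit sphere centered at the apex $\vec X(s)$, and its spherical arc length from $\vec R_+$ to $\vec R_-$ equals $\beta_--\beta_+$ (again by isometry of the unfolding, applied to cross-sections at positive distance from the apex). Since this arc passes through the cone, which is nonplanar, it is not a great-circle arc, so the spherical distance is strictly smaller: $\angle(\vec R_+,\vec R_-)<\beta_--\beta_+$. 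But the reverse triangle inequality for angular distance on the sphere gives $\angle(\vec R_+,\vec R_-)\ge\bigl|\angle(\vec T_0,\vec R_+)-\angle(\vec T_0,\vec R_-)\bigr|=\beta_--\beta_+$, a contradiction. Hence $\vec X$ cannot be $C^1$ at $s$, i.e.\ it is kinked.

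The hard part will be making the two isometry claims rigorous at the apex, which is a singular (cone) point of the surface: I must justify $\angle(\vec T_0,\vec R_\pm)=\beta_\pm$ and the spherical arc length $\beta_--\beta_+$ entirely through limits of quantities measured at nearby smooth crease points and at positive radius, never evaluating anything at the apex itself, and I must check that the limiting rulings and angles are well defined using the monotonicity coming from noncrossing. A secondary point to nail down is the strict inequality $\angle(\vec R_+,\vec R_-)<\beta_--\beta_+$, namely that a cone ruling is genuinely nonplanar so its cross-section is not a geodesic arc; this follows from the convention that rule segments occur only where the surface is locally nonplanar, but it should be invoked explicitly.
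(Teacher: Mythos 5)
Your mechanism is the same one the paper uses---assume $C^1$ for contradiction, compare 2D and 3D angles at the apex via isometry, get strictness because the cone ruling is genuinely bent, and contradict a spherical triangle inequality---but your route through it is genuinely different and substantially heavier. The paper never forms your limit rulings $\vec R_\pm$: it works with the fan's extreme rule vectors $\vec R_1, \vec R_2$ and the tangent rays $\pm\vec T$ at the apex itself, uses the 2D identity $180^\circ = \angle(-\vec t,\vec r_1)+\angle(\vec r_1,\vec r_2)+\angle(\vec r_2,\vec t)$, the forward triangle inequality $180^\circ = \angle(-\vec T,\vec T) \le \angle(-\vec T,\vec R_1)+\angle(\vec R_1,\vec R_2)+\angle(\vec R_2,\vec T)$, and the termwise bound that each 3D angle is at most its 2D counterpart, with strictness only in the middle (fan) term. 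Because one-sided inequalities suffice in that arrangement, the paper owes none of what you owe: no existence of $\beta_\pm$ or $\vec R_\pm$ (your ``noncrossing implies monotone'' step is in fact only approximately monotone---noncrossing forbids $\theta(u_1)<\theta(u_2)$ only up to an error vanishing as $u_2 \to s$---though the limit does still exist), and no analysis of the surface pinched between the limit rulings and the fan. Your plan instead rests on exact equalities $\angle(\vec T_0,\vec R_\pm)=\beta_\pm$ and on the link-length identity, which carry obligations you flag but underestimate: angle preservation at a nearby crease point degrades to ``$\le$'' if that point itself carries a cone ruling, so the limits must be taken through cone-free parameters (these are dense, since distinct fans are disjoint regions of positive area, but this needs saying); and the claim that the spherical curve of ruling directions from $\vec R_+$ to $\vec R_-$ has length exactly $\beta_- - \beta_+$ requires showing that the sector between the limit rulings and the fan consists only of apex rulings and planar sectors (a trapping argument: a rule segment strictly inside that sector near the apex cannot reach any crease without crossing other rulings). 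All of this can be completed, so your proof is correct in outline; what the paper's version buys by working at the apex with inequalities is precisely the disappearance of these limit and structure arguments, which is why its proof is a half page.
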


\begin{proof}
  Assume by symmetry that $\vec X(s)$ has a cone ruling on the left side,
  say clockwise from rule vector $\vec R_1$ to rule vector $\vec R_2$.
  Because the unfolded crease $\vec x$ is $C^1$, it has a tangent vector
  $\vec t$, so the left side of $\vec x(s)$ is, to the first order,
  the cone clockwise from $-\vec t$ to~$\vec t$.
  Thus we have $-\vec t$, $\vec r_1$, $\vec r_2$, and $\vec t$ appearing in
  clockwise order around $\vec x(s)$, giving us the angle relation:
  $$ 180^\circ = \angle(-\vec t, \vec t) =
     \angle(-\vec t, \vec r_1) + \angle(\vec r_1, \vec r_2) +
     \angle(\vec r_2, \vec t). $$

  Now assume for contradiction that $\vec X$ is $C^1$ at~$s$,
  so we can define the tangent vector $\vec T(s)$.
  By triangle inequality on the sphere, we have
  $$180^\circ = \angle(-\vec T, \vec T) \leq
    \angle(-\vec T, \vec R_1) + \angle(\vec R_1, \vec R_2) +
    \angle(\vec R_2, \vec T). $$
  The latter three 3D angles must be smaller or equal to
  than the corresponding angles in 2D, by isometry.
  Furthermore, $\angle(R_1,R_2) < \angle(\vec r_1, \vec r_2)$,
  because the surface must be bent along the entire cone ruling
  (otherwise it would have a flat patch).
  Therefore
  $$\angle(-\vec T, \vec R_1) + \angle(\vec R_1, \vec R_2) +
    \angle(\vec R_2, \vec T) <
    \angle(-\vec t, \vec r_1) + \angle(\vec r_1, \vec r_2) +
     \angle(\vec r_2, \vec t) = 180^\circ, $$
  a contradiction.
\end{proof}

\begin{figure}
  \centering
  \includegraphics[width=\linewidth]{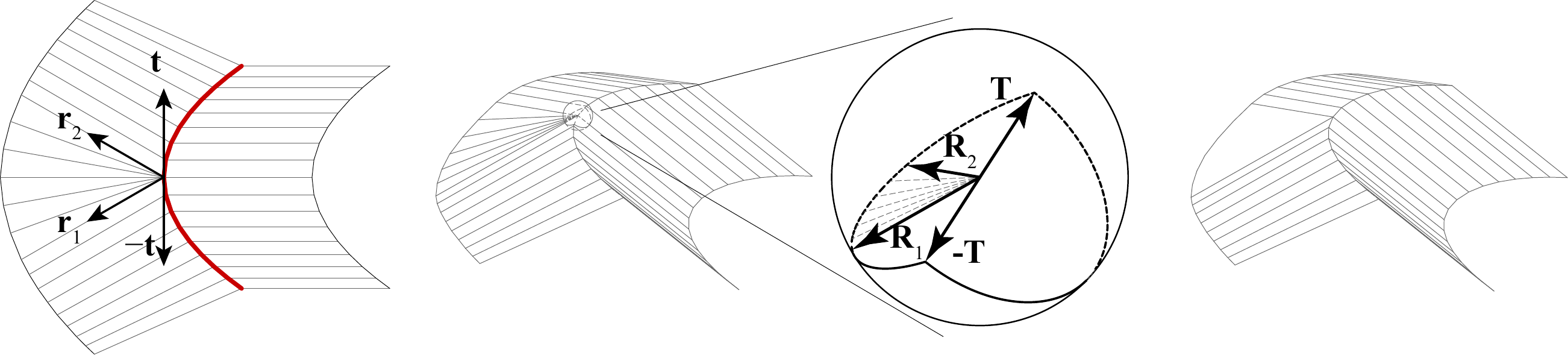}
  \caption{Cone rulings must fold into a kink in 3D.}
  \label{fig:cone-apex}
\end{figure}

Now we get a characterization of smooth folding:

\begin{corollary} \label{cone = kink}
  A folded curved crease $\vec X$ is kinked at $s$ if and only if it has
  a cone ruling at $\vec X(s)$.
\end{corollary}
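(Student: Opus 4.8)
The plan is to prove the two directions separately. The forward direction---if $\vec X$ has a cone ruling at $\vec X(s)$ then $\vec X$ is kinked at $s$---is exactly Theorem~\ref{cone kink}, so nothing new is required. The substance is the reverse direction, which I would establish in contrapositive form: if $\vec X(s)$ is cone-free, then $\vec X$ is $C^1$ at $s$ (hence not kinked).

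For the reverse direction, the guiding idea is that the crease tangent is pinned down by the two surface normals. Since the crease lies on both the left and right folded surfaces, at each nearby parameter its tangent $\vec T$ is perpendicular to both $\vec P_L$ and $\vec P_R$; passing to one-sided limits, each limiting tangent $\vec T(s^\pm)$ is perpendicular to $\vec P_L(s)$ and $\vec P_R(s)$. Because $\vec X(s)$ is a genuine crease point, the folded paper is not $C^1$ there, so the two surface tangent planes differ; both normals being top-side, $\vec P_L(s) \neq \vec P_R(s)$ and $\vec P_L(s) \times \vec P_R(s) \neq \vec 0$. Thus each one-sided limit satisfies $\vec T(s^\pm) = \pm \vec P_L(s) \times \vec P_R(s) / \| \vec P_L(s) \times \vec P_R(s) \|$. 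The first key step is therefore to show that, at a cone-free point, $\vec P_L$ and $\vec P_R$ have matching left and right limits, i.e., are continuous at $s$---the same fact already invoked in the proof of Lemma~\ref{top-side Frenet semikink}. I would justify it as follows: the 2D rule segments on a given side are noncrossing (Corollary~\ref{uncreased region}), so their directions have one-sided limits as the parameter approaches $s$; if those limits disagreed, the intervening wedge at $\vec x(s)$ would have to be covered, again by Corollary~\ref{uncreased region}, either by rule segments emanating from $\vec x(s)$---a cone ruling, contradicting cone-freeness---or by a single planar region, across whose boundary the $C^1$-ness of the folded face forces the adjacent normals to agree. Either way $\vec P_L$, and symmetrically $\vec P_R$, is continuous at $s$.

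Given continuity of the two normals, the second step is routine: $\vec P_L(s) \times \vec P_R(s)$ is continuous and nonzero, so $\vec T$ agrees with a continuous unit vector up to sign. To rule out a sign flip (a cusp of $\vec X$), I would use that the unfolded tangent $\vec t$ is continuous (the 2D crease is $C^2$) together with isometry on each face, so the forward orientation of $\vec T$ cannot reverse across $s$. Hence $\vec T$ is continuous at $s$, that is, $\vec X$ is $C^1$ there, completing the contrapositive and the equivalence.

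The main obstacle I anticipate is the continuity of the surface normals at cone-free points: making the wedge dichotomy airtight---in particular confirming that the planar-region subcase genuinely forces normal agreement through $C^1$-ness of the face, and that no other covering of the wedge is consistent with Lemmas~\ref{only one planar} and~\ref{no planar}. By comparison, the identity $\vec T \perp \vec P_L, \vec P_R$, the cross-product formula, and the sign bookkeeping are mechanical.
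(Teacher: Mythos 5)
The forward direction is fine and matches the paper (it is just Theorem~\ref{cone kink}). The problems are in the converse, and they sit exactly at the two steps you dismissed as mechanical. First, you cannot deduce $\vec P_L(s) \neq \vec P_R(s)$ from ``the folded paper is not $C^1$ at a crease point.'' At a point where the crease may be kinked, the kink itself is a source of non-$C^1$-ness: the derivative of the folding jumps in the crease direction even if the two limiting tangent planes coincide. So non-$C^1$-ness does not force the planes to differ; that inference (made in the paper's Lemma~\ref{fold angle nonzero}) is legitimate only for a crease already known to be smoothly folded, which is precisely what you are trying to prove. Your argument therefore has an unhandled case $\vec P_L(s) = \vec P_R(s)$, in which perpendicularity pins down nothing.

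Second, and more fundamentally, ruling out the sign flip is the heart of the theorem, and ``isometry on each face plus continuity of $\vec t$'' does not do it. Isometry only gives extrinsic distance at most intrinsic distance, which a cusp does not violate, and an embedded arc-length-parameterized space curve can perfectly well have antipodal one-sided tangents (think of two internally tangent circular arcs joined at the tangency point). What actually excludes the cusp is an intrinsic-angle argument: the left face has geodesic angle $180^\circ$ at $\vec x(s)$; because the point is cone-free and no rule segment is tangent to the crease (Lemma~\ref{no tangent}), the uniquely ruled left surface can be extended so that $\vec X(s)$ lies in the interior of a $C^1$ surface, where geodesic angles equal Euclidean angles; hence the backward and forward tangents of $\vec X$ subtend exactly $180^\circ$, so there is neither a cusp nor any other kink. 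This---together with a separate case for an incident planar region, handled via Lemmas~\ref{only one planar} and~\ref{no planar}---is precisely the paper's proof of the converse, and note that it simultaneously disposes of your first gap, since it never needs $\vec P_L \neq \vec P_R$. Your step (1), continuity of the surface normals at cone-free points, is fine in spirit (the paper asserts it while proving Lemma~\ref{top-side Frenet semikink}), though your wedge dichotomy would also have to exclude rule segments crossing the wedge as chords with both endpoints on the crease; but repairing steps (2) and (3) essentially requires importing the paper's angle-preservation argument, at which point the normal-continuity route becomes superfluous.
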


\begin{proof}
  Theorem~\ref{cone kink} proves the ``if'' implication.

  To prove the converse, consider a cone-free crease point $\vec X(s)$.
  In 2D, we have a $180^\circ = \angle(-\vec t, \vec t)$ angle
  on either side of the crease.
  We claim that this $180^\circ$ angle between the backward tangent and
  forward tangent is preserved by the folding, so the folded
  crease $\vec X$ has a continuous tangent and thus is $C^1$ at~$s$.

  First, suppose that there is no planar region incident to $\vec X(s)$
  on say the left side.  Then the left side is locally a uniquely ruled $C^2$
  surface, with no rule segments tangent to the curve by Lemma~\ref{no tangent},
  and thus the surface can be extended slightly to include $\vec X(s)$
  in its interior.  In a $C^1$ surface, it is known\xxx{reference}
  that geodesic (2D) angles equal Euclidean (3D) angles, so folding
  preserves the $180^\circ$ angle between the backward and forward tangents.

  Now suppose that there is a planar region on the left side of $\vec X(s)$.
  By Lemma~\ref{only one planar}, there can be only one,
  and by Lemma~\ref{no planar}, there must be two uniquely ruled surfaces
  separating such a planar region from the crease.
  These three surfaces meet smoothly with a common surface normal, as the
  surface is $C^2$ away from the crease, so the overall angle between the
  backward and forward tangents of the crease equals the sum of the three
  angles of the surfaces at $\vec X(s)$.
  The previous paragraph argues that the two uniquely ruled surfaces
  preserve their angles, and
  the planar region clearly preserves its angle (it is not folded).
  Hence, again, folding preserves the $180^\circ$ angle between the backward
  and forward tangents.
\end{proof}


\section{Mountains and Valleys}
\label{Mountains and Valleys}

\subsection{Crease}
Refer to Figure~\ref{fig:bisection}.
For a smoothly folded (cone-free) crease $\vec X$,
the \emph{fold angle} $\rho \in (-180^\circ,180^\circ)$ at $\vec X(s)$
is defined by
$\cos \rho = \vec P_L \cdot \vec P_R$ and
$\sin \rho = [(\vec P_L \times \vec P_R) \cdot \vec T]$.
The crease is \emph{valley} at $s$ if the fold angle is negative, i.e.,
$(\vec P_L \times \vec P_R) \cdot \vec T < 0$. 
The crease is \emph{mountain} at $s$ if the fold angle is positive, i.e.,
$(\vec P_L \times \vec P_R) \cdot \vec T > 0$.

\begin{lemma} \label{fold angle nonzero}
  A smoothly folded curved crease $\vec X$ has a continuous fold angle
  $\rho \neq 0$.
\end{lemma}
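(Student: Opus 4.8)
The plan is to prove the two assertions—continuity of $\rho$ and $\rho \neq 0$—essentially independently, and to treat the continuity claim as the substantive one. First I would reduce to the cone-free setting: since $\vec X$ is smoothly folded it is kink-free, so by Corollary~\ref{cone = kink} it is cone-free at every $s$, and hence the left and right surface normals $\vec P_L(s)$ and $\vec P_R(s)$ are both well-defined throughout (and the crease is curved away from semikinks by Lemma~\ref{no straight}). I then need the three ingredients $\vec T(s)$, $\vec P_L(s)$, $\vec P_R(s)$ to be continuous in $s$. Continuity of $\vec T$ is immediate from $C^1$-ness. Away from semikinks the crease is $C^2$ and each adjacent surface is $C^2$ off the crease, so $\vec P_L$ and $\vec P_R$ are continuous there; at each (cone-free) semikink, the argument inside the proof of Lemma~\ref{top-side Frenet semikink} shows exactly that $\vec P_L$ and $\vec P_R$ have equal left and right limits, hence are continuous. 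Consequently $\cos \rho = \vec P_L \cdot \vec P_R$ and $\sin \rho = (\vec P_L \times \vec P_R) \cdot \vec T$ are continuous functions of $s$.

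Next I would upgrade continuity of the pair $(\cos\rho, \sin\rho)$ to continuity of $\rho$ as a real-valued function into $(-180^\circ, 180^\circ)$. Since this pair lies on the unit circle, the only obstruction is the branch value $\rho = \pm 180^\circ$, i.e.\ the possibility $\vec P_L = -\vec P_R$. To exclude it I would invoke the bisection property: by Corollary~\ref{osculating plane bisects} the top-side osculating normal $\vechat B(s)$ bisects $\vec P_L(s)$ and $\vec P_R(s)$, and by its defining sign convention $\vechat B \cdot \vec P_L = \vechat B \cdot \vec P_R > 0$. A common direction $\vechat B$ with strictly positive dot product against both normals cannot exist when they are antipodal, so $\vec P_L \neq -\vec P_R$; equivalently, both normals lie strictly within $90^\circ$ of $\vechat B$ inside the plane orthogonal to $\vec T$, forcing $\rho$ strictly inside $(-180^\circ, 180^\circ)$. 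The pair $(\cos\rho, \sin\rho)$ therefore traces a path bounded away from $(-1,0)$, and $\rho$ is recovered continuously from it.

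For $\rho \neq 0$ I would argue directly from the definition of a proper folding. If $\rho(s) = 0$ at some crease point, then $\cos \rho = \vec P_L \cdot \vec P_R = 1$, which for unit vectors forces $\vec P_L = \vec P_R$. The two surfaces adjacent to the crease would then share a common tangent plane and top-side normal along the crease, making the folded piece of paper $C^1$ at that crease point—contradicting that a folding is, by definition, not $C^1$ at any crease point. Hence $\rho$ is nowhere zero, and combined with the previous paragraph it is a continuous nonvanishing fold angle.

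The step I expect to be the main obstacle is the branch-cut argument of the second paragraph: rigorously ruling out $\rho = \pm 180^\circ$, which is precisely the flat-folded (tangent-planes-coincident, normals-opposite) degeneracy. This is the point where one must use that the crease is genuinely folded rather than doubled back on itself, and where the well-definedness of $\vechat B$ with a strictly positive dot product against both normals must be justified; the bisection corollary is what makes this work, since it compels $\vec P_L$ and $\vec P_R$ to be symmetric about a single front-pointing direction rather than antipodal.
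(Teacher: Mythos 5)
Your proof is correct and follows essentially the same route as the paper's: cone-freeness (via Corollary~\ref{cone = kink}) gives continuity of $\vec P_L$ and $\vec P_R$, hence of $\rho$, and $\rho = 0$ would force $\vec P_L = \vec P_R$, contradicting that a proper folding is not $C^1$ at crease points. The additional care you take---handling semikinks through the argument of Lemma~\ref{top-side Frenet semikink}, and excluding the antipodal case $\vec P_L = -\vec P_R$ via the sign convention in Corollary~\ref{osculating plane bisects}---simply fills in details the paper's two-line proof leaves implicit.
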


\begin{proof}
  By Corollary~\ref{cone = kink}, the crease is cone-free,
  so the surface normals $\vec P_L(s)$ and $\vec P_R(s)$ are continuous.
  If the resulting fold angle $\rho(s)$ were zero, then
  we would have $\vec P_L(s) = \vec P_R(s)$, contradicting that
  the folded piece of paper is not $C^1$ at crease point $\vec X(s)$.
\end{proof}

\begin{corollary}
  A smoothly folded curved crease $\vec X$ is mountain or valley throughout.
\end{corollary}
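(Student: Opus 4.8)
The plan is to reduce this statement to an elementary intermediate value argument, leaning entirely on Lemma~\ref{fold angle nonzero}. Recall that a crease is defined as a $C^2$ curve $\vec x : (0,\ell) \to \R^2$, so its parameter domain $(0,\ell)$ is a connected interval, and by isometry the folded crease $\vec X$ inherits this same connected parameter domain. This connectedness is precisely the hypothesis needed to forbid a sign change of a continuous nonvanishing function, so establishing it (trivially, from the definition) is the one structural ingredient I would make explicit.

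First I would restate the definitions in terms of a single scalar. By definition $\vec X$ is mountain at $s$ exactly when $(\vec P_L \times \vec P_R) \cdot \vec T > 0$ and valley at $s$ exactly when $(\vec P_L \times \vec P_R) \cdot \vec T < 0$; since $\sin \rho = (\vec P_L \times \vec P_R) \cdot \vec T$ and $\rho \in (-180^\circ, 180^\circ)$, these are just the conditions $\rho(s) > 0$ and $\rho(s) < 0$. Thus the assertion ``mountain or valley throughout'' is equivalent to the statement that $\rho(s)$ has constant sign on $(0,\ell)$.

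Next I would invoke Lemma~\ref{fold angle nonzero}, which supplies both facts I need at once: the fold angle $\rho(s)$ is continuous on $(0,\ell)$, and $\rho(s) \neq 0$ for every $s$. A continuous real-valued function on a connected interval that attained both a positive and a negative value would, by the intermediate value theorem, have to vanish somewhere in between; since $\rho$ never vanishes, it cannot take values of both signs. Therefore either $\rho(s) > 0$ for all $s$, so $\vec X$ is mountain throughout, or $\rho(s) < 0$ for all $s$, so $\vec X$ is valley throughout.

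I do not expect any real obstacle here: all of the analytic content---that $\rho$ is well-defined, continuous, and bounded away from zero---has already been discharged in Lemma~\ref{fold angle nonzero}, which itself rests on cone-freeness (Corollary~\ref{cone = kink}) and on the failure of $C^1$-ness at crease points forcing $\vec P_L \neq \vec P_R$. The remaining work is a one-line application of the intermediate value theorem, with the only thing worth stating being the connectedness of the parameter interval $(0,\ell)$ that makes that theorem applicable.
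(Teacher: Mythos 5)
Your proposal is correct and follows exactly the paper's argument: both invoke Lemma~\ref{fold angle nonzero} to get that $\rho(s)$ is continuous and nonvanishing, then apply the intermediate value theorem to conclude the sign cannot change. The extra care you take in spelling out the connectedness of the parameter interval and the equivalence between the sign of $\rho$ and the mountain/valley definitions is implicit in the paper's two-line proof but adds nothing structurally different.
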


\begin{proof}
  By Lemma~\ref{fold angle nonzero}, $\rho(s)$ is continuous and nonzero.
  By the intermediate value theorem, $\rho(s)$ cannot change sign.
\end{proof}

\begin{lemma}
\label{curvature in 2D and 3D}
  For a smoothly folded curved crease $\vec X(s)$,
  $$\hat K(s) \cos {1 \over 2} \rho(s) = \hat k(s).$$
  In particular, folding increases curvature:
  $|\hat k(s)| < |\hat K(s)|$, i.e., $k(s) < K(s)$.
\end{lemma}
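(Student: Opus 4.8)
The plan is to evaluate the geodesic-curvature identity of Lemma~\ref{geodesic curvature} in the top-side Frenet frame $(\vec T(s), \vechat N(s), \vechat B(s))$, using the bisection property of Corollary~\ref{osculating plane bisects} to pin down the positions of $\vec P_L$ and $\vec P_R$. Because $\vec X$ is smoothly folded, it is cone-free (Corollary~\ref{cone = kink}), so the surface normals $\vec P_L(s)$ and $\vec P_R(s)$ are defined and continuous, the fold angle $\rho(s)$ is available, and at $C^2$ points the signed curvature $\hat K(s)$ is defined via $K \vec N = \hat K \vechat N$.

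First I would rewrite the left-case identity of Lemma~\ref{geodesic curvature}, namely $\hat k = (K \vec N) \cdot (\vec P_L \times \vec T)$, by substituting $K \vec N = \hat K \vechat N$ to obtain $\hat k = \hat K \, [\vechat N \cdot (\vec P_L \times \vec T)]$. The whole lemma then reduces to the purely geometric claim $\vechat N \cdot (\vec P_L \times \vec T) = \cos \tfrac12 \rho$.

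To establish this, I would coordinatize the normal plane (spanned by $\vechat N$ and $\vechat B$) using the bisection property: $\vechat B$ bisects the unit normals $\vec P_L$ and $\vec P_R$, so each makes angle $\tfrac12 \rho$ with $\vechat B$. Fixing the sign via the fold-angle convention $\sin \rho = (\vec P_L \times \vec P_R) \cdot \vec T$ forces $\vec P_L = \sin \tfrac12 \rho \, \vechat N + \cos \tfrac12 \rho \, \vechat B$ (and symmetrically $\vec P_R = -\sin \tfrac12 \rho \, \vechat N + \cos \tfrac12 \rho \, \vechat B$). A short computation in the right-handed frame, using $\vechat B \times \vec T = \vechat N$ and $\vechat N \times \vec T = -\vechat B$, then gives $\vec P_L \times \vec T = \cos \tfrac12 \rho \, \vechat N - \sin \tfrac12 \rho \, \vechat B$, whence $\vechat N \cdot (\vec P_L \times \vec T) = \cos \tfrac12 \rho$, exactly the claim; rerunning it with $\vec P_R$ reproduces the identity and serves as a sign check.

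Finally, for the curvature-increase statement, Lemma~\ref{fold angle nonzero} gives $\rho \neq 0$, and $\rho \in (-180^\circ, 180^\circ)$ gives $\tfrac12 \rho \in (-90^\circ, 90^\circ)$, so $0 < \cos \tfrac12 \rho < 1$; taking absolute values in the identity yields $|\hat k| = |\hat K| \cos \tfrac12 \rho < |\hat K|$, which is $k < K$ since $|\hat k| = k$ and $|\hat K| = K$. I expect the only real obstacle to be the consistent bookkeeping of signs above---placing $\vec P_L$ and $\vec P_R$ at $\pm \tfrac12 \rho$ from $\vechat B$ compatibly with both the orientation of the top-side Frenet frame and the signed definition of $\rho$. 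The geometric content (half the fold angle is the tilt of each surface tangent plane away from the osculating plane) is immediate from bisection, and everything else is a routine triple-product expansion.
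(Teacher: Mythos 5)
Your proposal is correct and follows essentially the same route as the paper: both rest on the geodesic-curvature invariance of Lemma~\ref{geodesic curvature} together with the bisection property, reducing everything to the identity $\vechat N \cdot (\vec P_L \times \vec T) = \cos\tfrac12\rho$. The only difference is presentational---the paper reads $\cos\tfrac12\rho = \vec P_L\cdot\vechat B$ directly off the bisection picture and applies the cyclic triple-product identity, while you derive the same fact by explicitly coordinatizing $\vec P_L,\vec P_R$ in the $(\vechat N,\vechat B)$ basis (which also makes the sign bookkeeping, and the final strict inequality via $\rho\neq 0$, more explicit than the paper bothers to).
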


\begin{proof}
  Referring to Figure~\ref{fig:bisection}, we have
  $$\cos {1 \over 2} \rho(s) = \vec P_L(s) \cdot \vechat B(s).$$
  By definition of $\vechat B(s)$, this dot product is the triple product
  $$\vec P_L(s) \cdot (\vec T(s) \times \vechat N(s)) =
    \vechat N(s) \cdot (\vec P_L(s) \times \vec T(s))$$
  (similar to the proof of Lemma~\ref{osculating plane bisects C^2}).
  Multiplying by $\hat K(s)$, we obtain
  $$(\hat K(s) \vechat N(s)) \cdot (\vec P_L(s) \times \vec T(s))
  = (K(s) N(s)) \cdot (\vec P_L(s) \times \vec T(s)).$$
  By Lemma~\ref{geodesic curvature}, this geodesic curvature is $\vechat k(s)$.
\end{proof}

\begin{corollary}
\label{no semikink}
  A folded crease cannot have a semikink, and thus
  a smoothly folded crease $\vec X$ is $C^2$.
\end{corollary}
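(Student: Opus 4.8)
The plan is to argue by contradiction: suppose the folded crease $\vec X$ has a semikink at some $s = \tilde s$, so that $\vec X$ is $C^1$ but not $C^2$ there, and derive that in fact $\vec X$ must be $C^2$ at $\tilde s$. The first observation is that a semikink is, by definition, a $C^1$ point; it therefore lies in the interior of a maximal $C^1$ sub-arc of the folded crease, and every point of that arc is cone-free by Corollary~\ref{cone = kink} (a point is kinked iff it carries a cone ruling). In particular, the hypotheses of the structural lemmas proved above hold throughout a punctured neighborhood of $\tilde s$, and at $\tilde s$ itself.

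Next I would assemble the continuity of the ingredients of the signed-curvature identity. By Lemma~\ref{top-side Frenet semikink}, at a cone-free semikink the top-side Frenet frame $(\vec T, \vechat N, \vechat B)$ --- and, as shown in its proof, the surface normals $\vec P_L, \vec P_R$ --- have matching positive and negative limits, hence are continuous at $\tilde s$. Consequently the fold angle $\rho(s)$, determined by $\cos \rho = \vec P_L \cdot \vec P_R$ and $\sin \rho = (\vec P_L \times \vec P_R) \cdot \vec T$, is continuous at $\tilde s$; and since $\rho \in (-180^\circ, 180^\circ)$ we have $\cos \frac12 \rho(\tilde s) > 0$. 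The 2D signed curvature $\hat k(s)$ is continuous because the crease $\vec x$ is $C^2$.

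Now I would feed these facts into Lemma~\ref{curvature in 2D and 3D}, which gives $\hat K(s) = \hat k(s) / \cos \frac12 \rho(s)$ on each $C^2$ side of $\tilde s$. The right-hand side is continuous at $\tilde s$, being a quotient of a continuous function by a continuous, strictly positive one, so $\hat K$ extends continuously across $\tilde s$. Since $\frac{d \vec T}{d s} = K \vec N = \hat K \, \vechat N$ and both $\hat K$ and $\vechat N$ are continuous at $\tilde s$, the one-sided limits of $\frac{d \vec T}{d s}$ agree there; because $\vec X$ is already $C^1$, the standard calculus fact that a continuous function whose derivative has a two-sided limit at a point is differentiable there with continuous derivative upgrades $\vec X$ to $C^2$ at $\tilde s$, contradicting the assumption that $\tilde s$ is a semikink. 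Finally, a smoothly folded crease is $C^1$ by definition, hence kink-free, and by what we just proved semikink-free; having neither kinks nor semikinks, it consists of a single $C^2$ piece, which yields the ``thus $C^2$'' conclusion.

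I expect the main obstacle to be the continuity of the fold angle (equivalently, of the surface normals $\vec P_L$ and $\vec P_R$) across the semikink: this is exactly where cone-freeness is essential, and it is the content I would borrow wholesale from Lemma~\ref{top-side Frenet semikink}. Once that continuity is in hand, the rest is the routine bookkeeping of substituting into the curvature identity and invoking the one-sided-derivative criterion for genuine differentiability.
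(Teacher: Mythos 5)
Your proof is correct and follows essentially the same route as the paper's: apply Lemma~\ref{curvature in 2D and 3D} on both sides of the putative semikink to get matching one-sided limits of $\hat K$, combine with the continuity of $\vechat N$ from Lemma~\ref{top-side Frenet semikink}, and conclude that ${d^2 \vec X / ds^2} = \hat K \vechat N$ is continuous, so no semikink exists. The details you add beyond the paper's write-up --- cone-freeness at the semikink via Corollary~\ref{cone = kink}, continuity of the fold angle, and the one-sided-derivative criterion --- are exactly the steps the paper leaves implicit, so they strengthen rather than change the argument.
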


\begin{proof}
  Suppose $\vec X(s)$ had a semikink at $s = \tilde s$.
  Applying Lemma~\ref{curvature in 2D and 3D} with positive and negative
  limits, we obtain that
  $$\lim_{s \to \tilde s^+} \hat K(s)
  = {\hat k(s) \over \cos{{1\over 2} \rho}}
  = \lim_{s \to \tilde s^-} \hat K(s),$$
  and thus the signed curvature $\hat K(s)$ is continuous at $s=\tilde s$.
  By Lemma~\ref{top-side Frenet semikink}, $\vechat N(s)$ is continuous
  at $s=\tilde s$.
  Therefore ${d^2 \vec X(s) \over {ds}^2} = \hat K(s) \vechat N(s)$ is
  continuous at $s = \tilde s$,
  so $\vec X(\tilde s)$ is not actually a semikink.
\end{proof}



\begin{lemma}
  A smoothly folded crease $\vec X$ is valley if and only if
  $(\vec P_L \times \vechat B) \cdot \vec T < 0$,
  and mountain if and only if
  $(\vec P_L \times \vechat B) \cdot \vec T > 0$.
\end{lemma}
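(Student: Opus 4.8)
The plan is to reduce everything to a single signed-angle computation in the plane $\vec T^\perp$ orthogonal to the tangent, in which all of $\vec P_L$, $\vec P_R$, $\vec B$, and $\vechat B$ lie. I would orient $\vec T^\perp$ by $\vec T$ via the right-hand rule; then for any two unit vectors $\vec A, \vec C \in \vec T^\perp$ the quantity $(\vec A \times \vec C)\cdot\vec T$ is exactly the sine of the signed angle from $\vec A$ to $\vec C$. Under this reading, the defining equation $\sin\rho = (\vec P_L \times \vec P_R)\cdot\vec T$ says precisely that the fold angle $\rho$ is the signed angle from $\vec P_L$ to $\vec P_R$, consistent with $\cos\rho = \vec P_L \cdot \vec P_R$.

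Next I would pin down where $\vechat B$ sits. By Corollary~\ref{osculating plane bisects}, $\vechat B$ bisects $\vec P_L$ and $\vec P_R$, and by its definition $\vechat B \cdot \vec P_L = \vechat B \cdot \vec P_R > 0$, so $\vechat B$ is the \emph{internal} bisector rather than its antipode. Since $|\rho| < 180^\circ$ (the crease is smoothly folded, so $\rho \in (-180^\circ,180^\circ)$), the internal bisector of two unit vectors separated by signed angle $\rho$ sits at signed angle $\tfrac{1}{2}\rho$ from $\vec P_L$; the positivity condition $\vechat B\cdot\vec P_L>0$ is exactly what rules out the $\tfrac{1}{2}\rho + 180^\circ$ alternative. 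Hence $(\vec P_L \times \vechat B)\cdot\vec T = \sin\tfrac{1}{2}\rho$, to be compared with $(\vec P_L \times \vec P_R)\cdot\vec T = \sin\rho$.

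Finally, the conclusion follows by comparing signs. Because $\tfrac{1}{2}\rho \in (-90^\circ,90^\circ)$, the value $\sin\tfrac{1}{2}\rho$ has the same sign as $\rho$, and likewise $\sin\rho$ has the same sign as $\rho$ on $(-180^\circ,180^\circ)$. Therefore $(\vec P_L \times \vechat B)\cdot\vec T$ and $(\vec P_L \times \vec P_R)\cdot\vec T$ share the sign of $\rho$, so the former is negative exactly when the crease is valley and positive exactly when it is mountain, by the definitions of mountain and valley. I expect the only real care to be in the sign bookkeeping—fixing the orientation of $\vec T^\perp$ consistently with the fold-angle definition, and using the positivity $\vechat B\cdot\vec P_L>0$ to select the $\tfrac{1}{2}\rho$ bisector—after which there is no analytic difficulty.
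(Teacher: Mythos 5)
Your proof is correct and takes essentially the same approach as the paper's: both arguments work in the oriented plane perpendicular to $\vec T$, use the bisection property (Corollary~\ref{osculating plane bisects}) together with the positivity $\vechat B \cdot \vec P_L = \vechat B \cdot \vec P_R > 0$ to identify $\vechat B$ as the internal bisector sitting at signed angle $\tfrac{1}{2}\rho$ from $\vec P_L$, and then conclude that $(\vec P_L \times \vechat B)\cdot\vec T$ has the same sign as $(\vec P_L \times \vec P_R)\cdot\vec T = \sin\rho$. The only difference is cosmetic: the paper argues qualitatively that the three cross products are parallel because $\vechat B$ lies between $\vec P_L$ and $\vec P_R$ in a common half-plane, while you compute $(\vec P_L \times \vechat B)\cdot\vec T = \sin\tfrac{1}{2}\rho$ explicitly.
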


\begin{proof}
  Refer to Figure~\ref{fig:bisection}.
  Vectors $\vec P_L$, $\vec P_R$, and $\vechat B$ are all perpendicular
  to~$\vec T$, and thus live in a common oriented plane with normal $\vec T$.
  By the choice of $\vechat B$ to have positive dot products with $\vec P_L$ and
  $\vec P_R$, the three vectors in fact live in a common half-plane.
  In this plane, we can see the fold angle
  $\rho = \angle (\vec P_L, \vec P_R)$, where $\angle$ measures the convex
  angle between the vectors, signed positive when the angle is convex in the
  counterclockwise orientation within the oriented plane with normal $\vec T$,
  and signed negative when clockwise.

  By Corollary~\ref{osculating plane bisects},
  $\vec P_L \cdot \vec B = \vec P_R \cdot \vec B$, so
  $\vec P_L \cdot \vechat B = \vec P_R \cdot \vechat B$.
  Thus $\cos \angle (\vec P_L, \vechat B) = \cos \angle (\vec P_R, \vechat B)$, i.e.,
  $|\angle (\vec P_L, \vechat B)| = |\angle (\vec P_R, \vechat B)|$.
  
  If $\angle (\vec P_L, \vechat B) = \angle (\vec P_R, \vechat B)$,
  then $\vec P_L = \vec P_R$, contradicting that $\vec X$ is a crease.
  Therefore, $\angle (\vec P_L, \vechat B) = \angle (\vechat B, \vec P_R)
  = \pm {1 \over 2} \angle (\vec P_L, \vec P_R)$.
  Because $|\angle (\vec P_L, \vechat B)| < 90^\circ$, we must in fact have
  $\angle (\vec P_L, \vechat B) = \angle (\vechat B, \vec P_R)
  = {1 \over 2} \angle (\vec P_L, \vec P_R)$,
  i.e., $\vechat B$ bisects the convex angle $\angle (\vec P_L, \vec P_R)$.
  Hence $\vechat B$ lies in between $\vec P_L$ and $\vec P_R$
  within the half-plane.
  Therefore the cross products $\vec P_L \times \vec P_R$,
  $\vec P_L \times \vechat B$, and $\vechat B \times \vec P_R$ are all parallel,
  so their dot products with $\vec T$ have the same sign.
\end{proof}

\subsection{Rule Segment}

We can also define whether a rule segment bends the paper mountain or valley;
refer to Figure~\ref{fig:rulelineMV}.
Consider a relative interior point $\vec Y$ of a rule segment with direction
vector $\vec R$, with top-side surface normal $\vec P$.
Then we can construct a local Frenet frame at $\vec Y$
with tangent vector $\vec Q = \vec R \times \vec P$,
normal vector $\vec P$, and binormal vector $\vec R$.
These frames define a 3D curve $\vec Y(t)$ where $\vec Y(0) = \vec Y$,
which follows the principle curvature of the surface.
Parameterize this curve by arc length.

\begin{figure}
  \centering
  \includegraphics[width=0.8\linewidth]{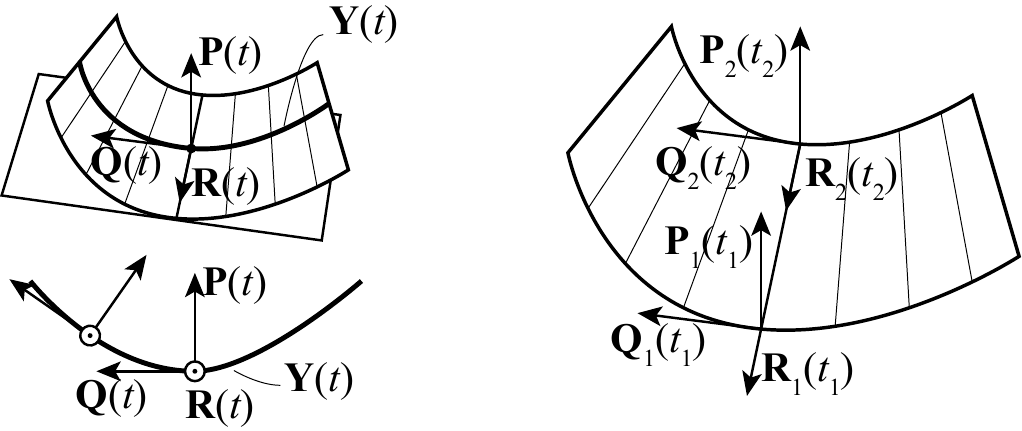}
  \caption{Defining a frame around an interior point to define
           mountain vs.\ valley bending.}
  \label{fig:rulelineMV}
\end{figure}

First consider the case when the surface is $C^2$ at $Y(t)$.
The surface \emph{bends valley} at $\vec Y(t)$ if the curvature vector
${d^2 \vec Y(t) \over d t^2} = {d \vec Q(t) \over d t}$ is on the top side,
i.e., has positive dot product with $\vec P(t)$; and it \emph{bends mountain}
if ${d \vec Q(t) \over d t} \cdot \vec P(t) < 0$.
In particular, at $t=0$, we determine whether the original rule segment
bends mountain or valley at $\vec Y$.

If the surface is not $C^2$ at $Y(t)$, then the rule segment is a semicrease,
which connects two $C^2$ surfaces sharing a surface normal at the crease;
refer to Figure~\ref{fig:semicreaseMV}.
In this case, the surface bends valley at $\vec Y(t)$ when the two surfaces
bends valley; or one of the surface is planar, and the other bends valley.
Similarly, the surface bends mountain at $\vec Y(t)$ when the two surfaces
bends mountain; or one of the surface is planar, and the other bends mountain.
At an inflection point, there is no mountain/valley assignment.

\begin{figure}
  \centering
  \includegraphics[width=1.0\linewidth]{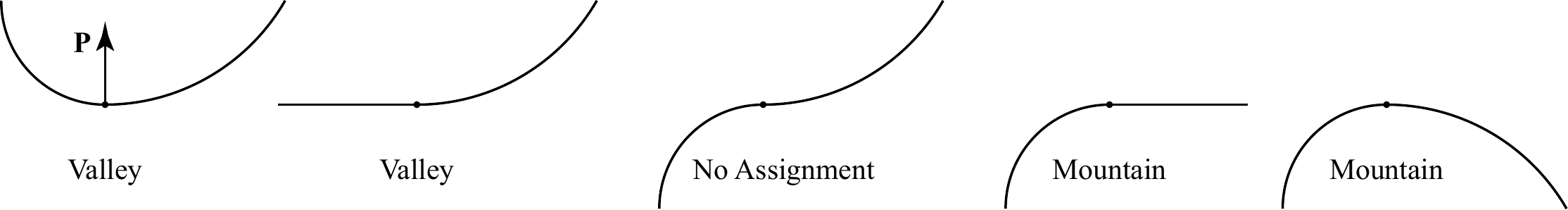}
  \caption{Definition of mountain and valley for a semicrease.}
  \label{fig:semicreaseMV}
\end{figure}


\begin{lemma}
\label{bends same direction}
  A developable uncreased surface bends the same direction (mountain or valley)
  at every relative interior point of a rule segment.
\end{lemma}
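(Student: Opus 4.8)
The plan is to reduce the mountain/valley bending at a point to the sign of a single normal curvature, and then to show that this sign is constant along the rule segment. Fix a rule segment, and recall that an uncreased ($C^2$) developable surface has a constant tangent plane along each ruling, so both the rule direction $\vec R$ and the top-side normal $\vec P$ are constant along the rule segment; hence the transverse direction $\vec Q = \vec R \times \vec P$ is constant as well. At a relative interior point $\vec Y$, differentiating $\vec Q \cdot \vec P = 0$ along the curve $\vec Y(t)$, which leaves $\vec Y$ in the unit direction $\vec Q$, gives ${d\vec Q \over dt}\cdot\vec P = -\vec Q\cdot{d\vec P\over dt} = \kappa_n(\vec Q)$, the normal curvature of the surface in the direction $\vec Q$. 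Thus $\vec Y$ bends valley exactly when $\kappa_n(\vec Q) > 0$ and mountain exactly when $\kappa_n(\vec Q) < 0$, and the lemma reduces to showing that $\kappa_n(\vec Q)$ does not change sign as $\vec Y$ ranges over the relative interior of the rule segment.

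First I would treat the case where the surface is $C^2$ along the rule segment. Writing the surface locally in ruled form $\vec S(u,v) = \vec c(u) + v\,\vec r(u)$, with the rule segment at a fixed parameter $u$ and $v$ the arclength coordinate along it, the developability condition forces $\vec r'$ to lie in the tangent plane. A short computation then shows that the only nonzero coefficient of the second fundamental form is $L = \vec S_{uu}\cdot\vec P$, and that the developability relation $\vec r''\cdot\vec P = \alpha\,(\vec c''\cdot\vec P)$ makes $L$ factor as $(1 + v\,\alpha(u))\,(\vec c''\cdot\vec P)$, where $1 + v\,\alpha(u)$ is precisely the scalar for which $\vec S_u\times\vec S_v = (1 + v\,\alpha(u))\,\vec c'\times\vec r$. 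Consequently $\kappa_n(\vec Q)$ has the form (a function of $u$ alone) divided by $1 + v\,\alpha(u)$, so its sign is governed entirely by the sign of $1 + v\,\alpha(u)$. This factor vanishes only where the parameterization degenerates, i.e., at a singular point of the developable surface (a cone apex or an edge of regression) — never at a relative interior point of a genuine rule segment of a regular surface. Hence $1 + v\,\alpha(u)$, and therefore $\kappa_n(\vec Q)$, keeps a constant sign along the rule segment.

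The same conclusion can be seen conceptually, which I would include as the guiding intuition: a developable surface is the envelope of a one-parameter family of tangent planes, each hinged along a ruling, and moving in the $\vec Q$ direction advances to neighboring rulings while the tangent plane rotates about the ruling axis. The normal curvature $\kappa_n(\vec Q)$ is the rate of this rotation with respect to arclength; between two fixed neighboring rulings the direction of rotation is the same at every point of the ruling (it is a property of the family of planes, i.e., of $u$), while only the transverse spacing of the rulings — hence the magnitude of $\kappa_n(\vec Q)$ — varies along the ruling. Thus the sign is constant.

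Finally I would dispatch the non-$C^2$ case, where the rule segment is a semicrease joining two $C^2$ developable surfaces that share the normal $\vec P$, and hence $\vec R$ and $\vec Q$, along it. Applying the $C^2$ argument to each side separately shows that each adjacent surface bends in a constant direction along the shared rule segment; since the semicrease's bending is defined from the common (or planar-plus-one) behavior of the two sides, it too is constant along the rule segment. The main obstacle is the $C^2$ case, and specifically verifying that the zero of $\kappa_n(\vec Q)$ along a ruling coincides with a degeneracy of the surface rather than occurring at a regular interior point; the developability identity $\vec r''\cdot\vec P = \alpha\,(\vec c''\cdot\vec P)$, which forces $L \propto 1 + v\,\alpha$, is the crux that rules this out.
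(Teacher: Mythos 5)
Your proposal is correct in substance, but it takes a genuinely different route from the paper's in the $C^2$ case, so a comparison is worthwhile. The paper never writes down a fundamental form: it takes two points $\vec Y_1$, $\vec Y_2$ on the rule segment, observes that their principal-curvature frames $(\vec Q,\vec R,\vec P)$ coincide wherever the two principal-curvature curves $\vec Y_1(t_1)$, $\vec Y_2(t_2)$ meet a common ruling, and compares ${d \vec Q_2 \over d t_2}\cdot\vec P$ with ${d \vec Q_1 \over d t_1}\cdot\vec P$ by the chain rule, the signs agreeing because ${d t_2 \over d t_1}>0$. Your version instead performs the classical ruled-surface computation: $M=N=0$ by developability, $L=(1+v\alpha)(\vec c''\cdot\vec P)$, so the transverse normal curvature is a function of $u$ alone divided by $(1+v\alpha)$ up to a positive factor, and constancy of sign reduces to $1+v\alpha\neq 0$ along the segment. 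Notice that the two proofs hinge on the same geometric crux, packaged differently: the paper's asserted ${d t_2 \over d t_1}>0$ is essentially equivalent to your asserted nonvanishing of $1+v\alpha$ (a focal point between the two heights would reverse the reparameterization). You correctly flag this as the crux, but your justification---that a zero of $1+v\alpha$ is ``a singular point of the developable surface''---conflates degeneracy of the parameterization with singularity of the surface: a plane, for instance, admits ruled parameterizations whose Jacobian vanishes along a curve of perfectly regular points. The clean way to close the step inside your own framework is: if $\vec c''\cdot\vec P\neq 0$ and $1+v_0\alpha=0$ at a relative interior point, your formula forces $\kappa_n(\vec Q)\to\infty$ as $v\to v_0$ along the ruling, contradicting that the surface is $C^2$ (hence has locally bounded second fundamental form) at that interior point; and if $\vec c''\cdot\vec P=0$ the whole ruling is an inflection and there is nothing to prove. (Alternatively, one can invoke the paper's earlier corollary that rule segments are noncrossing, which is also what makes the paper's ${d t_2 \over d t_1}>0$ true.) For the non-$C^2$ case, your one-sided application of the $C^2$ result to each adjacent surface is cleaner than the paper's argument, which instead shows that an inflection locus is a union of rulings and that a sign change along the semicrease would force a rule-segment crossing; both work, since the semicrease's mountain/valley status is defined from the bending of the two adjacent $C^2$ pieces. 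What each approach buys: yours gives the quantitative decay law $\kappa_n\propto 1/(1+v\alpha)$ along a ruling and stays within standard surface theory; the paper's is softer, avoids fundamental forms entirely, and reuses the machinery (noncrossing rule segments, shared frames along rulings) it has already built.
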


\begin{proof}
  First consider the case when the surface is $C^2$.
  Consider two points $\vec Y_1$ and $\vec Y_2$ on the rule segment, with
  principle curvature frames $(\vec Q_i(t_i), \vec R_i(t_i), \vec P_i(t_i))$.
  Choose $t_2$ as a function of $t_1$ so that $\vec Y_1(t_1)$ and
  $\vec Y_2(t_2)$ lie on a common rule segment.
  Then the frames are in fact identical:
  $\vec R_1(t_1) = \vec R_2(t_2)$ is the common rule direction,
  $\vec P_1(t_1) = \vec P_2(t_2)$ is the common top-side surface normal,
  and $\vec Q_1(t_1) = \vec Q_2(t_2)$ is their cross product.
  Because the surface is locally $C^2$ around the ruled segment $\vec Y_1$ and $\vec Y_2$,
  we have ${d t_2 \over d t_1} > 0$, so
  $$ {d \vec Q_2(t_2) \over d t_2} \cdot \vec P = 
      {d \vec Q_1(t_1) \over d t_2} \cdot \vec P = 
      {d t_2 \over d t_1}{d\vec  Q_1(t_1) \over d t_1} \cdot \vec P.
  $$
  Therefore, the surface bends the same direction.

  Next consider the case when the surface is not $C^2$, i.e., the rule segment is a semicrease between $C^2$ surfaces $S^+$ and $S^-$.
  By the above argument, in a $C^2$ patch, the inflection occurs along the rule segment where ${d \vec Q(t) \over d t} \cdot \vec P = 0$ is satisfied.
  Also, if surface is not $C^2$, then it is on a rule segment.
  Therefore, if the $S^-$ surface is bent in a different direction at $\lim_{t\rightarrow t_1^-}\vec Y_1(t)$ and $\lim_{t\rightarrow t_2^-}\vec Y_2(t_2)$, then a path from $\vec Y_1$ to $\vec Y_2$ must cross a rule segment.
  Because rule segments do not intersect\xxx{cite}, $S^+$ and $S^-$ keeps their own bending orientations.
  Therefore, the assignment for the semicrease is unchanged along the segment.
\end{proof}

By Lemma~\ref{bends same direction}, we can define the bending direction of a
rule segment: a developable uncreased surface bends mountain or valley at a
rule segment if a relative interior point of the rule segment bends mountain
or valley, respectively.  Furthermore, because the frames are identical, we
can define the principle curvature frame $(\vec Q, \vec R, \vec P)$ of a rule
segment by the principle curvature frame at any relative interior point on the
rule segment.

\subsection{Crease vs.\ Rule Segment}

Next we consider the mountain-valley relation between a rule segment and a
crease.


First consider a smoothly folded crease $\vec X$ with left and right surface
\emph{ruling vectors} $\vec R_L$ and $\vec R_R$,
defined as unit vectors which lie along the rule
segments on surfaces $S_L$ and $S_R$ incident to~$\vec X$.
(If there is a planar region incident to $\vec X$,
 these ruling vectors will not be unique.)
A left-side ruling vector $\vec R_L$ lives in the plane perpendicular to $\vec P_L$.
Therefore, the vector can be represented by $$\vec R_L = (\cos \theta_L) \vec T + (\sin \theta_L) (\vec P_L \times \vec T),$$ where we call $\theta_L$ the \emph{left-side ruling angle} of the ruling, which is nonzero by Lemma~\ref{no tangent}.
Because the ruling angle is intrinsic, the ruling vector in 2D is represented
by $\vec r_L = (\cos \theta_L) \vec t + (\sin \theta_L) \vechat b$.
The orientation of the left-side ruling vector is chosen to orient to the left, i.e., $\vec r_L \cdot \vechat b > 0$, so $\theta_L$ is positive.
Similarly, ruling vector  $\vec R_R$ on the right surface is represented by $\vec R_R = (\cos \theta_R) \vec T - (\sin \theta_R) (\vec P_R \times \vec T)$, using \emph{right-side ruling angle} $\theta_R$.
The orientation is chosen to be on the right side, so $\theta_R>0$.

\begin{lemma}
  Consider a uniquely ruled smoothly folded crease $\vec X$
  with locally $C^2$ surfaces on both sides (no semicreases).
  Then the rule segment on the left side of $\vec X$ bends valley
  if and only if $\vec N \cdot \vec P_L > 0$.
  Symmetrically, the surface bends valley on the right side if and only if $\vec N \cdot \vec P_R > 0$.
\end{lemma}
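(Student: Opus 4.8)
The plan is to compute the normal curvature of the crease $\vec X$ viewed as a curve on the left surface $S_L$, and to relate it to the principal curvature of $S_L$ in the direction transverse to the rulings, which is exactly the quantity the mountain/valley definition measures. The bridge between the two is Euler's theorem for normal curvatures.

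First I would observe that, since $\vec X$ lies on the locally $C^2$ surface $S_L$ with unit surface normal $\vec P_L$, Meusnier's theorem gives the normal curvature of $\vec X$ on $S_L$ as $\kappa_n = (K(s)\vec N(s)) \cdot \vec P_L = K(s)\,(\vec N \cdot \vec P_L)$. Because the crease is curved and smoothly folded, Lemma~\ref{curvature in 2D and 3D} guarantees $K(s) > k(s) > 0$, so the sign of $\kappa_n$ coincides with the sign of $\vec N \cdot \vec P_L$. Next, since $S_L$ is an uncreased developable (hence flat) surface, its two principal directions at $\vec X(s)$ are the ruling direction $\vec R_L$, with principal curvature $0$, and the orthogonal tangent direction $\vec Q = \vec R_L \times \vec P_L$, with principal curvature $\kappa_2 = \frac{d\vec Q}{dt} \cdot \vec P_L$, both taken with respect to the common normal $\vec P_L$. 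This $\vec Q$ is precisely the tangent of the principal-curvature curve $\vec Y(t)$ used in defining mountain/valley bending of a rule segment, so by that definition (and Lemma~\ref{bends same direction}, which makes the bending of the whole rule segment well defined) the left surface bends valley along the rule segment if and only if $\kappa_2 > 0$.

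I would then apply Euler's theorem. By the representation $\vec R_L = (\cos\theta_L)\vec T + (\sin\theta_L)(\vec P_L \times \vec T)$, the crease direction $\vec T$ makes angle $\theta_L$ with the ruling $\vec R_L$, so Euler's formula with $\kappa_1 = 0$ along the ruling gives $\kappa_n = \kappa_1 \cos^2\theta_L + \kappa_2 \sin^2\theta_L = \kappa_2 \sin^2\theta_L$. Since $\theta_L \in (0^\circ, 180^\circ)$ is nonzero by Lemma~\ref{no tangent}, we have $\sin^2\theta_L > 0$, whence $\kappa_n$ and $\kappa_2$ share the same sign. Chaining the two sign equalities yields the desired chain: the left surface bends valley $\iff \kappa_2 > 0 \iff \kappa_n > 0 \iff \vec N \cdot \vec P_L > 0$. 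The right-hand statement follows by the symmetric argument applied to $S_R$, using $\vec P_R$, $\vec R_R$, and the right-side ruling angle $\theta_R > 0$.

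I expect the main obstacle to be the careful bookkeeping of orientations and sign conventions, rather than any substantive difficulty. One must verify that the principal curvature $\kappa_2$ computed with respect to the normal $\vec P_L$ literally matches the bending indicator $\frac{d\vec Q}{dt} \cdot \vec P_L$ of the earlier mountain/valley definition, and that Euler's theorem is invoked with the same orientation of $\vec P_L$ in both terms; one should also confirm the standard fact that the rulings of a nonplanar developable surface are exactly its zero-curvature (asymptotic) principal direction, so that the transverse direction $\vec Q$ carries the full normal curvature. Once these conventions are pinned down, the sign-tracking argument is immediate and requires no further computation.
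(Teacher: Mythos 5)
Your overall strategy is sound and genuinely different in flavor from the paper's: where the paper runs a self-contained frame computation (differentiating the orthogonality relations $\vec Q \cdot \vec P = 0$ and $\vec T \cdot \vec P = 0$ along the principal-curvature curve, and using the ruling correspondence $s(t)$ with $ds/dt > 0$), you package the same geometry as Meusnier plus Euler plus the fact that the ruling direction is the kernel of the shape operator of a developable. The sign conventions you track are consistent with the paper's definitions. But there is one genuine gap: Euler's and Meusnier's theorems are applied at the crease point $\vec X(s)$, which is a \emph{boundary} point of $S_L$, so the $\kappa_2$ in your identity $\kappa_n = \kappa_2 \sin^2\theta_L$ is the principal curvature at $\vec X(s)$ itself. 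The mountain/valley bending of the rule segment, however, is defined (and by Lemma~\ref{bends same direction} made well defined) at \emph{relative interior} points of the segment, where $\kappa_2$ takes different values --- it genuinely varies along a ruling (like $1/\ell$ on a cone). Your equivalence ``bends valley $\iff \kappa_2 > 0$'' silently identifies the sign at interior points with the sign at the boundary point, and these are quantities at different points of the surface.

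In the direction $\vec N \cdot \vec P_L > 0 \Rightarrow$ valley this is harmless: local $C^2$-ness gives one-sided continuity of the second fundamental form, so $\kappa_2 > 0$ at $\vec X(s)$ propagates to nearby interior points of the segment, and Lemma~\ref{bends same direction} finishes. But in the ``only if'' direction the limit argument yields only $\kappa_2 \geq 0$ at the crease, hence $\vec N \cdot \vec P_L \geq 0$; strictness needs an extra ingredient. Two ways to close it: (i) rule out $\vec N \cdot \vec P_L = 0$ outright --- that equality would say the osculating plane of $\vec X$ coincides with the left tangent plane, whence Corollary~\ref{osculating plane bisects} forces the right tangent plane to coincide as well, making the folded paper $C^1$ across the crease, a contradiction (this is exactly the argument pattern of Lemmas~\ref{no planar} and~\ref{no tangent}); or (ii) invoke the structure of developables, namely that $1/\kappa_2$ is affine along a ruling, so $\kappa_2 > 0$ on the open segment together with finiteness of $\kappa_2$ at the $C^2$ boundary forces $\kappa_2 > 0$ there. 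The paper's proof never meets this issue: its formula $V(t) = (\csc\theta)\,(ds/dt)\,K(s)\,\vec N(s)\cdot\vec P_L(s)$ is derived at an interior point $\vec Y(t)$ yet already has the crease-endpoint quantity on the right-hand side, because the constancy of the normal along rulings converts $d\vec P/dt$ at the interior point into the derivative of $\vec P_L$ along the crease. With patch (i) or (ii) added, your argument becomes a complete and attractively short alternative proof.
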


\begin{proof}
  Build the principle curvature frame $(\vec Q(t), \vec R(t), \vec P(t))$ of rule segment
  parameterized by the arclength $t$ in the principle curvature direction.
  Consider corresponding point $\vec X(s)$ and the arclength parameter $s=s(t)$
  along the crease at the rule segment parameterized by $t$.
  Because the surface is locally $C^2$ around the rule segment, ${d s \over d t} > 0$.
  Because we consider the left side of the surface, $\vec P_L(s) = \vec P(t)$.
  Let $\theta$ be the angle between $\vec R(t)$ and $\vec T(s)$,
  i.e., $\vec T(s) = \sin \theta \vec Q(t) + \cos \theta \vec R(t)$.
  By Lemma~\ref{no tangent}, $0<\theta<\pi$, and we get 
  \begin{equation}
  \vec Q = (\csc \theta) \vec T - (\cot \theta) \vec R.
  \label{eq: Q by T and R}
  \end{equation}

  Assume that the surface bends
  valley at the rule segment, i.e.,
  \begin{equation}
  V (t) = {d \vec Q(t) \over d t} \cdot \vec P(t) > 0.
  \label{eq: bends valley}
  \end{equation}
  Using orthogonality of vectors $\vec Q$ and $\vec P$, i.e., $\vec Q(t)\cdot \vec P(t) = 0$, and taking derivatives, we obtain
  $$
  {d \vec Q \over d t} \cdot \vec P  +  \vec Q \cdot {d \vec P \over d t} = 0.
  $$
  Then, 
  \begin{eqnarray*}
  V(t) &=& -\vec Q \cdot {d \vec P\over d t} \\
        &=& - \Big ((\csc \theta) \vec T - (\cot \theta) \vec R \Big) \cdot {d \vec P \over d t}\\
        &=& - (\csc \theta) \vec T \cdot {d \vec P \over d t}.
  \end{eqnarray*}
  Here, we used Equation~\ref{eq: Q by T and R}.
  By the orthogonality of vectors $\vec T$ and $\vec P$, we get 
  $$ \vec T \cdot {d \vec P \over d t}=   {d \vec T \over d t} \cdot \vec P. $$
  Then,
  \begin{eqnarray*}
   V(t) &=& (\csc \theta) {d \vec T(s) \over d t} \cdot \vec P(t) \\
         &=& (\csc \theta) {d s \over d t}{d \vec T(s) \over d s}\cdot \vec P(t) \\
         &=& (\csc \theta) {d s \over d t} K(s) \vec N(s) \cdot \vec P_L(s).
  \end{eqnarray*}
  Because $\csc \theta > 0$, ${d s \over d t}>0$, and $K(s) >0 $, Equation~\ref{eq: bends valley} is equivalent to
  $
  \vec N(s) \cdot \vec P_L(s) > 0.
  $
\end{proof}

Now we make a stronger statement, allowing the ruling vectors to be not unique
and the surfaces to be not $C^2$.

\begin{corollary}
  Consider a smoothly folded crease $\vec X$.
  Then a rule segment on the left side of $\vec X$ bends valley
  if and only if $\vec N \cdot \vec P_L > 0$.
  Symmetrically, the surface bends valley on the right side if and only if $\vec N \cdot \vec P_R > 0$.
\label{crease vs ruling MV}
\end{corollary}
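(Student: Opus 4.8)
The plan is to reduce the two genuinely new situations—non-unique ruling (a planar region incident to $\vec X$) and a non-$C^2$ surface (a semicrease incident to $\vec X$)—to the preceding lemma, which already settles a uniquely ruled crease bounded by $C^2$ surfaces. First I would note that a smoothly folded crease is cone-free by Corollary~\ref{cone = kink}, so by the analysis following Corollary~\ref{uncreased region} the only obstruction to unique ruling on the left is a planar region, and by Lemma~\ref{only one planar} there is at most one. Since the face $S_L$ is $C^1$ up to the crease, every rule segment incident to $\vec X(s)$ on the left shares the single left surface normal $\vec P_L(s)$ there (when a planar region is present, $\vec P_L(s)$ is by definition its normal), so the test quantity $\vec N(s)\cdot\vec P_L(s)$ is unambiguous no matter which incident rule segment we examine.

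Next I would classify a given left rule segment at $\vec X(s)$. If the surface is $C^2$ in a full neighborhood of the rule segment, the preceding lemma applies verbatim. Otherwise $C^2$-ness fails across the segment; because $S_L$ is $C^1$ (it is an uncreased face), such a segment is a semicrease separating two $C^2$ patches $S^+$ and $S^-$ that share a common normal along it, and the boundary segments of an incident planar region are exactly the subcase where one of $S^\pm$ is planar (both cannot be, since every rule segment has a nonplanar neighborhood). Here I would use that the computation in the proof of the preceding lemma is purely one-sided: it expresses the bending indicator $\frac{d\vec Q}{dt}\cdot\vec P$ of a $C^2$ patch along an incident rule segment as $(\csc\theta)\,\frac{ds}{dt}\,K(s)\,\vec N(s)\cdot\vec P_L(s)$, using only left-surface quantities and the one-sided monotonicity $\frac{ds}{dt}>0$ valid on the $C^2$ side. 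Applying this separately to $S^+$ and to $S^-$, each bends valley if and only if $\vec N(s)\cdot\vec P_L(s)>0$, since both share the normal $\vec P_L(s)$ at $\vec X(s)$; when one side is planar it simply does not bend and the determination is made by the curved side. In every case the two sides agree, so by the definition of semicrease bending the semicrease bends valley if and only if $\vec N(s)\cdot\vec P_L(s)>0$. The right side is symmetric.

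I expect the main obstacle to be justifying this one-sided use of the preceding lemma rigorously, namely that its displayed identity survives when $\frac{d\vec Q}{dt}$ and $\frac{ds}{dt}$ are read as one-sided derivatives from the $C^2$ patch. A safe fallback, should that localization not be granted directly, is a limiting argument: approach $\vec X(s)$ along the crease from within $S^+$ by nice crease points $\vec X(s')$ (uniquely ruled with $C^2$ left surface), invoke the lemma at each, and pass to the limit using that $\vec P_L$ is continuous at cone-free points (as in Lemma~\ref{top-side Frenet semikink}) and $\vec N$ is continuous since the crease is $C^2$ and curved; Lemma~\ref{bends same direction}, which makes bending constant along each rule segment, then carries the relation to the semicrease endpoint. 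A small side check is that $\vec N(s)\cdot\vec P_L(s)\neq 0$: otherwise the left tangent plane would coincide with the osculating plane and, by Corollary~\ref{osculating plane bisects}, force the crease flat—so the valley/mountain alternative is genuinely determined and no stray inflection point lurks along an incident rule segment.
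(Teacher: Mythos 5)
Your proposal is correct and follows essentially the same route as the paper: cone-freeness (Corollary~\ref{cone = kink}) reduces everything to the two cases of a rule segment lying between two $C^2$ ruled surfaces or between a plane and one $C^2$ ruled surface, and each case is settled by reduction to the preceding lemma. In fact the paper's own proof is precisely your ``fallback'' argument: it invokes the lemma at nearby crease points $s \neq \tilde s$, where the left surface is uniquely ruled and $C^2$, and passes to the limit $\lim_{s \to \tilde s^\pm} \vec N(s) \cdot \vec P_L(s)$, rather than justifying the one-sided-derivative version of the lemma at $\tilde s$ itself.
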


\begin{proof}
  Consider rule segments at $\vec X(\tilde s)$.
  By Theorem~\ref{cone kink}, the crease is cone free,
  so a rule segment is either (1)~between two $C^2$ ruled surfaces,
  or (2)~between a plane and a $C^2$ ruled surfaces.
  
  Consider Case~1, and let $S^-$ and $S^+$ be the two surfaces.
  Because there are no cone rulings, $S^-$ and $S^+$ are locally formed by
  unique rulings emanating from $\vec X(s)$ at $s<\tilde s$ and $s>\tilde s$,
  respectively.
  Then
  $$
  \lim_{s \to \tilde s^-} \vec N(s) \cdot \vec P_L(s) =
  \lim_{s \to \tilde s^+} \vec N(s) \cdot \vec P_L(s) =
  \vec N(s) \cdot \vec P_L(s).
  $$
  So both surfaces $S^-$ and $S^+$ bend valley if and only if $\vec N(s) \cdot \vec P_L(s) >0$.

  Next consider Case~2.
  By symmetry, assume that $S^-$ is planar and $S^+$ is $C^2$ ruled surface.
  Then $S^+$ is locally formed by unique rule segments emanating from
  $\vec X(s)$ at $s > \tilde s$.
  Hence $S^+$, and thus the rule segment, bends valley if and only if $\vec N(s) \cdot \vec P_L(s) >0$.
\end{proof}


\begin{theorem}
  Consider a smoothly folded curved crease $\vec X$.
  A rule segment incident to $\vec X(\tilde s)$ on the convex side of $\vec X(\tilde s)$
  has the same mountain/valley assignment as the crease, while
  a rule segment incident to $\vec X(\tilde s)$ on the concave side of $\vec X(\tilde s)$
  has the opposite mountain/valley assignment as the crease.
\label{convex side bends same direction}
\end{theorem}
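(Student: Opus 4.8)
The plan is to reduce everything to a single scalar comparison on, say, the left side of the crease, and then read off both the convex and concave cases according to the sign of the signed curvature $\hat k$; the right side will then follow by the usual left/right symmetry. The two facts I want to connect are already available. By Corollary~\ref{crease vs ruling MV}, a rule segment on the left side of $\vec X$ bends valley if and only if $\vec N \cdot \vec P_L > 0$; and by the preceding lemma, $\vec X$ is valley if and only if $(\vec P_L \times \vechat B) \cdot \vec T < 0$ (mountain if $>0$). So the whole theorem reduces to comparing the sign of $\vec N \cdot \vec P_L$ with the sign of $(\vec P_L \times \vechat B) \cdot \vec T$, while tracking which geometric side is the convex one.

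First I would rewrite the crease condition as a dot product with $\vechat N$. Using the cyclic property of the scalar triple product together with the definition $\vechat N = \vechat B \times \vec T$, we have
$$(\vec P_L \times \vechat B) \cdot \vec T = \vec P_L \cdot (\vechat B \times \vec T) = \vec P_L \cdot \vechat N,$$
so $\vec X$ is valley if and only if $\vechat N \cdot \vec P_L < 0$ and mountain if and only if $\vechat N \cdot \vec P_L > 0$. Next I would convert the Frenet normal $\vec N$ appearing in the rule-segment condition into the top-side normal $\vechat N$: by Lemma~\ref{curvature in 2D and 3D}, $\hat K \cos \tfrac12 \rho = \hat k$ with $\cos\tfrac12\rho > 0$, so $\hat K$ and $\hat k$ share a sign, and combined with $\hat K \vechat N = K \vec N$ this gives $\vec N = \mathrm{sign}(\hat k)\, \vechat N$. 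Finally I would translate convex/concave into the sign of $\hat k$: since $\hat k\, \vechat n = k\, \vec n$ yields $\vechat n = \mathrm{sign}(\hat k)\, \vec n$, the left side (positive dot product with $\vechat n$) is the convex side (negative dot product with $\vec n$) exactly when $\hat k < 0$, and is the concave side exactly when $\hat k > 0$.

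Putting these together finishes the left case. When the left side is convex ($\hat k < 0$) we have $\vec N = -\vechat N$, so the left rule segment bends valley $\iff \vec N \cdot \vec P_L > 0 \iff \vechat N \cdot \vec P_L < 0 \iff$ the crease is valley, i.e.\ the assignments agree. When the left side is concave ($\hat k > 0$) we have $\vec N = \vechat N$, so the left rule segment bends valley $\iff \vechat N \cdot \vec P_L > 0 \iff$ the crease is mountain, i.e.\ the assignments are opposite. For the right side I would use that $\vec P_L$ and $\vec P_R$ are reflections of each other across $\vechat B$ (they have equal $\vechat B$-components by Corollary~\ref{osculating plane bisects}), so $\vechat N \cdot \vec P_L = -\,\vechat N \cdot \vec P_R$; the identical computation with $\vec P_R$ and Corollary~\ref{crease vs ruling MV} then gives the symmetric conclusion, with convex now meaning $\hat k > 0$ on the right.

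The routine part is the sign bookkeeping; the main obstacle I anticipate is keeping three distinct notions of ``normal direction'' consistent: the intrinsic Frenet normal $\vec N$ (which flips at inflection points), the top-side normal $\vechat N$ (which does not), and the 2D convex/concave designation. The crease-versus-ruling comparison is phrased in terms of $\vec N$, whereas the crease mountain/valley is naturally phrased via $\vechat B$ and hence $\vechat N$, so the crux is exactly the identity $\vec N = \mathrm{sign}(\hat k)\,\vechat N$ coming from Lemma~\ref{curvature in 2D and 3D}; it is this factor of $\mathrm{sign}(\hat k)$ that produces the convex/concave dichotomy in the statement.
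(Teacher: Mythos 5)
Your proposal is correct and takes essentially the same approach as the paper: both reduce the crease's mountain/valley assignment (via the lemma characterizing valley by the sign of $(\vec P_L \times \vechat B)\cdot \vec T$) and the rule segment's bending direction (via Corollary~\ref{crease vs ruling MV}) to the sign of the single quantity $\vec N \cdot \vec P_L$, with the convex/concave dichotomy entering through $\mathrm{sign}(\hat k)$ relating $\vec N$ to $\vechat N$. The differences are only organizational: you use the cyclic triple-product identity and uniform sign bookkeeping across all four cases, while the paper expands $\vec P_L \times (\vec T \times \vec N)$ directly and treats one case (left side convex, valley crease) explicitly, leaving the rest to symmetry.
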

\begin{proof}
Assume by symmetry that the left side of the paper is the convex side ($\hat k(s) < 0$).
Also, assume that the crease is a valley, i.e., $(\vechat B \times \vec P_L) \cdot \vec T =( \vec P_L \times \vec  B) \cdot \vec T> 0$.
Then the top-side normal of the osculating plane is $\vechat B = -\vec B$, and thus $\vechat N = -\vec N$.

Now
\begin{eqnarray*}
(\vec P_L \times \vec  B) \cdot \vec T &= & 
\left(\vec P_L \times (\vec T \times \vec N)\right) \cdot \vec T \\
&= & \left(\vec T  (\vec P_L \cdot \vec N) - \vec N(\vec P_L \cdot \vec T)\right)\cdot \vec T > 0.
\end{eqnarray*}
The second term disappears because $\vec P_L \cdot \vec T =0$. Therefore
$
\vec P_L \cdot \vec N > 0
$,
so the left side is valley.
\end{proof}

\subsection{Creases Connected by a Rule Segment}




Now consider two creases connected by a rule segment.
By Lemma~\ref{convex side bends same direction}, we get the following.

\begin{corollary} \label{concave MV match}
  Consider two smoothly folded creases connected by a rule segment.
  If the rule segment is on the concave sides of both creases,
  or on the convex sides of both creases,
  then the creases must have the same direction (mountain or valley).
  If a rule segment is on the convex side of one crease and the
  concave side of the other crease, then the creases must have
  the opposite direction (one mountain and one valley).
\end{corollary}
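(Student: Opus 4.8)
The plan is to chain together the two preceding results---Lemma~\ref{bends same direction} and Theorem~\ref{convex side bends same direction}---through the single rule segment that joins the two creases. The key observation is that this connecting rule segment carries one well-defined bending direction: by Lemma~\ref{bends same direction}, a developable uncreased surface bends the same way (mountain or valley) at every relative interior point of a rule segment, so the segment has a single mountain/valley label, call it $D$, that does not depend on which of its two endpoints we approach.

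Next I would apply Theorem~\ref{convex side bends same direction} separately at each crease, viewing the shared segment as a rule segment incident to that crease. At crease $\vec X_1$: if the segment lies on the convex side of $\vec X_1$, then $D$ equals the assignment of $\vec X_1$; if it lies on the concave side, then $D$ is opposite to the assignment of $\vec X_1$. The identical dichotomy holds at $\vec X_2$. Writing $M_1$ and $M_2$ for the mountain/valley assignments of the two creases, each configuration thus yields a relation equating $D$ either to $M_i$ or to its negation $\overline{M_i}$.

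Finally I would compose these relations across the three cases. If the segment is on the convex side at both creases, then $D = M_1$ and $D = M_2$, forcing $M_1 = M_2$; if it is on the concave side at both, then $D = \overline{M_1}$ and $D = \overline{M_2}$, again forcing $M_1 = M_2$. If it is convex at one crease and concave at the other, say convex at $\vec X_1$ and concave at $\vec X_2$, then $D = M_1$ while $D = \overline{M_2}$, so $M_1 = \overline{M_2}$, i.e., the creases have opposite assignments. These are exactly the two asserted conclusions.

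I expect the argument itself to be routine bookkeeping rather than a genuine obstacle; the only point requiring care is the well-definedness of the single label $D$ along the whole segment, which is precisely what Lemma~\ref{bends same direction} supplies, together with the implicit assumption that the segment is not an inflection rule segment (where no mountain/valley assignment exists). I would also verify that the ``convex side''/``concave side'' designations at the two ends are the intrinsic ones used in Theorem~\ref{convex side bends same direction}, so that the segment genuinely qualifies as being on the stated side of each crease independently.
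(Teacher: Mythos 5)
Your proposal is correct and follows essentially the same route as the paper: the paper derives this corollary directly from Theorem~\ref{convex side bends same direction} applied at both ends of the connecting rule segment, with Lemma~\ref{bends same direction} (implicitly) guaranteeing that the segment carries a single bending direction. Your write-up simply makes explicit the case bookkeeping that the paper leaves to the reader.
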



\section{Lens Tessellation}
\label{Lens Tessellation}


In this section, we use the qualitative properties of rulings obtained
in previous sections to reconstruct rule segments from a crease pattern
of the generalized version of lens tessellation.

First, as illustrated in Figure~\ref{fig:lens-rulings},
we define the \emph{lens tessellation} parameterized by a convex $C^2$ function
$\ell : [0,1] \to [0, \infty)$ with $\ell(0) = \ell(1) = 0$,
horizontal offset $u \in [0,1)$, and vertical offset $v \in (0,\infty)$,
to consist of
\begin{enumerate}
\item mountain creases
  $\gamma^\pm_{i, 2j} = \{(t + i, \pm \ell(t) +  j v) \mid t \in [0,1]\}$
  for $i, j \in \mathbb Z$; and
\item valley creases
  $\gamma^\pm_{i, 2j+1} = \{(1-t + i + u, \pm \ell(1-t) + (j + {1 \over 2} ) v)\} \mid t \in [0,1]\}$
  for $i, j \in \mathbb Z$.
\end{enumerate}
Define the \emph{vertices} to be points of the form
$V_{i, 2j} =(i, j v)$ and $V_{i, 2j+1}=(i + u, (j+{1 \over 2}) v)$.
Four creases meet at each vertex.

Because $\ell(t)$ is convex, it has a unique maximum $\ell(t^*)$
at some $t = t^*$.
Define the \term{apex} $A_{i,k}$ of crease $\gamma^\pm_{i,k}$
to be the point of the crease at $t=t^*$, i.e.,
$A^\pm_{i,2j} = (t^* + i, \pm \ell(t^*) + j v)$ and
$A^\pm_{i,2j+1}= (1-t^* + i + u, \pm \ell(1-t^*) + (j + {1 \over 2}) v)$.


\begin{figure}[tb]
  \centering
  \includegraphics[width=1\linewidth]{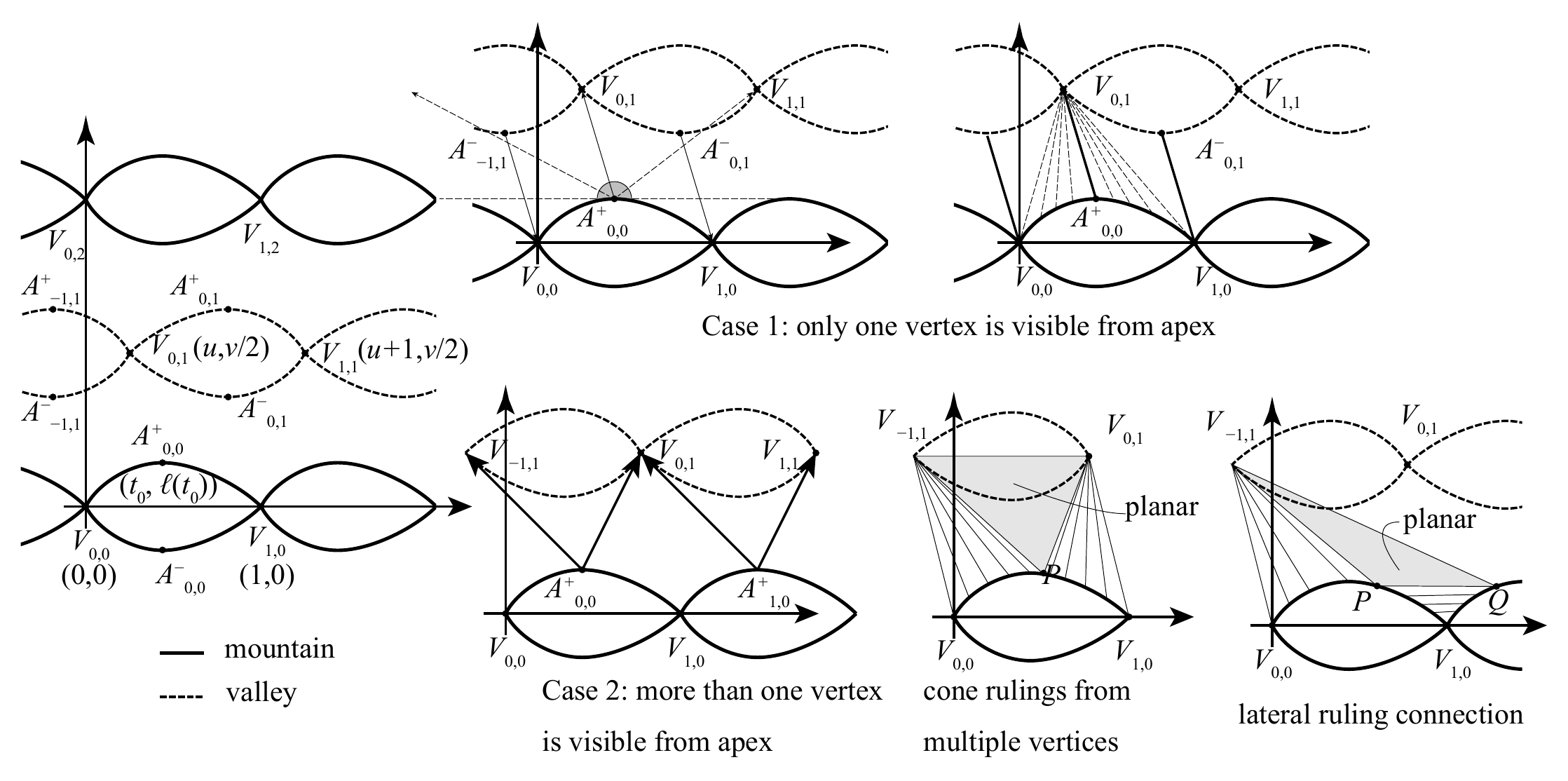}
  \caption{Ruling conditions for a lens tessellation.}
 \label{fig:lens-rulings}
\end{figure}

\subsection{Necessary Conditions}

Consider a crease point $\vec x(s)$.
A point $\vec y$ on the crease pattern (a vertex or crease point) is
\term{visible} from $\vec x(s)$ on the left (right) side of $\vec x$ at
$\vec x(s)$ if the oriented open line segment
$\overrightarrow{\vec x(s) \vec y}$ is on the
left (right) side of $\vec x(s)$ and does not share a point with the crease
pattern.
If $\vec x(s)$ and $\vec y$ are the endpoints of a rule segment,
then certainly they must be visible from each other.

\begin{theorem} \label{lens necessary}
  A lens tessellation can smoothly fold only if there is a vertex
  $V_{i,1}$ visible from every point on crease $\gamma^+_{0,0}$
  on the convex side.
\end{theorem}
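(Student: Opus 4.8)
The plan is to assume a smooth folding exists and read off the ruling structure on the convex (upper) side of $\gamma^+_{0,0}$, showing that all of its convex-side rule segments converge to a single vertex $V_{i,1}$, which is then automatically visible from every point of the crease. First I would record the global picture. By Corollary~\ref{cone = kink} a smooth folding is cone-free along every crease, so by Corollary~\ref{rule lines exist} every interior point $\vec x(s)$ of the mountain crease $\gamma^+_{0,0}$ carries a positive-length rule segment on its convex side. Geometrically the convex side of $\gamma^+_{0,0}$ is the side it bulges toward, namely upward toward the row of valley lenses centered at height $v/2$; the lower boundary of that strip is the ``scalloped'' curve formed by the valley arcs $\gamma^-_{i,1}$ joined at the pinch vertices $V_{i,1} = (i+u, v/2)$. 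Since a rule segment lies in a single face and terminates on creases (Theorem~\ref{rule lines}, Corollary~\ref{uncreased region}) and cannot cross a crease, each convex-side rule segment of $\gamma^+_{0,0}$ must terminate either on some arc $\gamma^-_{i,1}$ or at some vertex $V_{i,1}$.

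The key exclusion step is mountain/valley matching. A rule segment rising from $\gamma^+_{0,0}$ meets an arc $\gamma^-_{i,1}$ from below, i.e.\ on the side $\gamma^-_{i,1}$ bulges toward, which is its convex side; hence the rule segment lies on the convex side of \emph{both} creases. By Corollary~\ref{concave MV match} the two creases would then be forced to have the same mountain/valley assignment, but $\gamma^+_{0,0}$ is a mountain and $\gamma^-_{i,1}$ is a valley---a contradiction. Therefore every convex-side rule segment of $\gamma^+_{0,0}$ ends at one of the pinch vertices $V_{i,1}$.

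Next I would upgrade this to a \emph{single} vertex. First, there are no planar regions on the convex side: by the previous paragraph any bounding rule segment of such a region joins a point of $\gamma^+_{0,0}$ to a vertex $V_{i,1}$, so a convex-side planar region is a wedge with apex on $\gamma^+_{0,0}$ whose opposite boundary runs between two vertices and therefore contains a valley arc $\gamma^-_{i,1}$ over positive length, contradicting Lemma~\ref{no planar}. With planar regions excluded, each crease point is uniquely ruled on its convex side, determining a single vertex $V_{\sigma(s),1}$. If $\sigma$ were non-constant it would jump at some $s^*$, producing two distinct convex-side rule directions at $\vec x(s^*)$; the angular wedge between them must still be covered (Corollary~\ref{uncreased region}), which forces either a cone ruling at $\vec x(s^*)$---excluded by Theorem~\ref{cone kink}---or a planar wedge---excluded as above. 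Hence $\sigma \equiv i$ is constant.

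Finally, since every interior point $\vec x(s)$ of $\gamma^+_{0,0}$ is joined to the common vertex $V_{i,1}$ by a rule segment, and a rule segment is an open line segment lying in a face and therefore meeting no point of the crease pattern, $V_{i,1}$ is visible from $\vec x(s)$ on the convex side; passing to limits handles the endpoints. This yields a vertex $V_{i,1}$ visible from every point of $\gamma^+_{0,0}$ on the convex side, as claimed. I expect the main obstacle to be the single-vertex step: making the transition argument rigorous requires combining non-crossing of rulings with the simultaneous exclusion of both cone rulings (via Theorem~\ref{cone kink}) and planar wedges (via Lemma~\ref{no planar}), and it relies on the geometric fact that a convex-side rule segment cannot escape the strip below row~1 without crossing a valley arc. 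By comparison, the mountain/valley exclusion and the final visibility conclusion are routine once the earlier tools are invoked.
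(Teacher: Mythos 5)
Your overall skeleton (rule segments exist on the convex side by Corollary~\ref{rule lines exist}, valley arcs are excluded by mountain/valley matching, and a transition point would force a cone ruling or a planar wedge) follows the paper's strategy, but there is a genuine gap at the very first reduction: the claim that every convex-side rule segment of $\gamma^+_{0,0}$ must terminate on a valley arc $\gamma^-_{i,1}$ or a vertex $V_{i,1}$. The face lying above $\gamma^+_{0,0}$ is not a strip bounded only by that crease and the row-1 scalloped curve; its lower boundary also contains the neighboring mountain arcs $\gamma^+_{i,0}$, $i \neq 0$. A rule segment leaving a non-apex point of $\gamma^+_{0,0}$ on its convex side can terminate on, say, $\gamma^+_{1,0}$, approaching it from above, i.e.\ on the convex side of $\gamma^+_{1,0}$, without crossing any crease. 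Corollary~\ref{concave MV match} cannot exclude this: a rule segment on the convex sides of two creases forces \emph{equal} mountain/valley assignments, and both arcs are mountains, so the configuration is MV-consistent. This is precisely why the paper's proof starts at the apex $A^+_{0,0}$, where the horizontal tangent guarantees that the convex side meets only row-1 creases and vertices, and only then tracks how the ruling endpoint can change along the crease, with its case~(b) devoted exactly to the possibility that the endpoint slides onto $\gamma^+_{1,0}$.

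This omission propagates through the rest of your argument. Your exclusion of planar regions assumes their bounding rule segments join $\gamma^+_{0,0}$ to vertices $V_{i,1}$, so once endpoints on $\gamma^+_{1,0}$ are allowed, the ``opposite boundary'' of a planar wedge need not contain any valley arc (the paper's planar triangle $P\,Q\,V_{n,1}$, with $Q \in \gamma^+_{1,0}$, touches no valley arc over positive length, so neither Lemma~\ref{no planar} nor the non-$C^1$ condition applies directly to it). Likewise your single-vertex step only considers vertex-to-vertex jumps of $\sigma$. The paper closes the problematic case not by a contradiction from local tools but by a global geometric argument: planarity of $P\,Q\,V_{n,1}$ forbids it from meeting $\gamma^-_{n,1}$, and then the $180^\circ$ rotational symmetry between $\gamma^+_{1,0}$ and $\gamma^-_{n,1}$, together with convexity of the lens, shows that $V_{n,1}$ does see the whole curve, so a common visible vertex exists after all. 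Without some substitute for this step, your proof does not go through.
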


\begin{proof}
Refer to Figure~\ref{fig:lens-rulings}.
By Corollary~\ref{rule lines exist}, there must be a rule segment emanating
from $A^+_{0,0}$ on the convex side of $\gamma^+_{0,0}$.
The other endpoint $B$ of that rule segment must be visible from $A^+_{0,0}$
on the convex side of $\gamma^+_{0,0}$.
Because the tangent line of $\gamma^+_{0,0}$ at $A^+_{0,0}$ is horizontal,
any such visible point $B$ must lie on the union of creases $\gamma^-_{i,1}$
and vertices $V_{i,1}$ for $i \in \mathbb Z$.
By Theorem~\ref{concave MV match}, $B$ cannot be on the relative interior
of one of the valley creases $\gamma^-_{i,1}$ because then the rule segment
would be on the concave sides of creases of opposite direction.
Thus $B$ must be among the vertices $V_{i,1}$ for $i \in \mathbb Z$.

First consider the case that only one vertex $V_{n,1}$ is visible from
$A^+_{0,0}$ on the convex side of $\gamma^+_{0,0}$.
Then $A^+_{0,0} V_{n,1}$ must be a rule segment.
By symmetry, $V_{1,0} A^-_{n,1}$ is also a rule segment.
Consider a point on $\gamma^+_{0,0}$ between $A^+_{0,0}$ and $V_{0,1}$,
which by Corollary~\ref{rule lines exist} has a rule segment on the
positive side of $\gamma^+_{0,0}$.
This rule segment cannot cross the existing rule segments
$A^+_{0,0} V_{n,1}$ and $V_{1,0} A^-_{n,1}$, so its other endpoint must be
$V_{n,1}$, $A^-_{n,1}$, or
between $V_{n,1}$ and $A^-_{n,1}$ on curve $\gamma^{-}_{n,1}$.
By Theorem~\ref{concave MV match}, the only possible ruling is to have a cone
apex at $V_{n,1}$.
Similarly, rule segments from points between $V_{0,0}$ and $A^+_{0,0}$ on
$\gamma^+_{0,0}$ must end at $V_{n,1}$.
Therefore $V_{n,1}$ is visible from every point on crease $\gamma^+_{0,0}$
on the convex side.

Second consider the case in which more than one vertex $V_{i,1}$ is visible
from apex $A^+_{0,0}$ on the convex side of $\gamma^+_{0,0}$.
Suppose for contradiction that there is no common vertex visible
from the entire curve $\gamma^+_{0,0}$.
Similar to the previous case, there must be a rule segment from apex
$A^+_{0,0}$ to one of the vertices $V_{n,1}$.
But we assumed that some other point of $\gamma^+_{0,0}$ cannot see $V_{n,1}$.
By symmetry, suppose that point is to the right of $A^+_{0,0}$.
There is a transition point $P$ on the relative interior of $\gamma^+_{0,0}$
when the endpoints of rulings change from $V_{n,1}$ to either
(a)~another vertex $V_{m,1}$ with $m > n$ or (b)~a point on $\gamma^+_{1,0}$.
(See Figure~\ref{fig:lens-rulings}.)
At such a point $P$, we have two rule segments.
By Theorem~\ref{cone kink}, $P$ cannot be a cone apex.
Hence there must be a planar region between the two rule segments.
Specifically, in case (a), the triangle $P V_{n,1} V_{m,1}$ is planar,
which contains all of $\gamma^-_{n,1}$, contradicting that the folded
piece of paper is not $C^1$ on $\gamma^-_{n,1}$.
In case (b), let $Q$ be the point on $\gamma^+_{1,0}$.
The triangle $P Q V_{n,1}$ is planar.
This triangle cannot intersect $\gamma^-_{n,1}$, because the folded piece
of paper is not $C^1$ on $\gamma^-_{n,1}$.
In particular, the curve $\gamma^-_{n,1}$ cannot intersect the segment
$V_{n,1} V_{0,1}$ (which begins in the triangle).
Because $\gamma^+_{1,0}$ is a $180^\circ$ rotation of $\gamma^-_{n,1}$ mapping
$V_{n,1}$ to $V_{0,1}$, we symmetrically have that the curve $\gamma^+_{1,0}$
cannot intersect the same segment $V_{n,1} V_{0,1}$.  Thus this segment is
a visibility segment, as is $V_{n,1} V_{0,0}$.  By convexity of the lens,
$V_{n,1}$ can see the entire curve $\gamma^+_{1,0}$.
Therefore there is in fact a common vertex visible from the curve $\gamma^+_{0,0}$.
\end{proof}

%

\subsection{Existence / Sufficiency}

Finally we prove that the condition from Theorem~\ref{lens necessary}
is also sufficient:

\begin{theorem} \label{lens sufficient}
  A lens tessellation can fold smoothly if
  there is a vertex $V_{i,1}$ visible from
  every point on crease $\gamma^+_{0,0}$ on the convex side.
\end{theorem}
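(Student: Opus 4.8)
The plan is to exhibit an explicit smooth folding realizing the ruling structure forced by Theorem~\ref{lens necessary}, and then to propagate it over the whole plane by symmetry. Let $V=V_{n,1}$ be the vertex visible from all of $\gamma^+_{0,0}$ on the convex side. I would first fix the ruling pattern suggested by the necessity proof: inside each lens use the vertical (cylindrical) rule segments $\{x=\text{const}\}$, and on the convex side of each lens crease use the cone of rule segments running to the corresponding visible vertex. This decomposes the fundamental tile into three developable pieces---two cones (above $\gamma^+_{0,0}$ with apex $V$, and the mirror cone below $\gamma^-_{0,0}$) flanking one cylindrical lens---bounded by the four straight rule segments joining $V$, $V_{1,0}$, the mirror apex, and $V_{0,0}$. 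Visibility and convexity of $\ell$ guarantee that this flat ruling is embedded: rulings are noncrossing and, by the no-inflection hypothesis, never tangent to the creases, consistent with Lemma~\ref{no tangent}.

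Next I would build the 3D image of a single tile. The two curved creases $\gamma^\pm_{0,0}$ are the only creases interior to the tile, so the construction reduces to realizing one crease $\vec X(s)$ carrying a cylinder on its concave side and a cone (apex $\vec A$, the image of $V$) on its convex side. I would treat the fold-angle profile $\rho(s)$ along $\gamma^+_{0,0}$ as the free datum: Lemma~\ref{curvature in 2D and 3D} then fixes the spatial curvature $\hat K(s)=\hat k(s)/\cos\tfrac12\rho(s)$, Corollary~\ref{osculating plane bisects} fixes the common bisecting osculating plane, and the prescribed ruling directions (constant on the cylinder side, pointing to $\vec A$ on the cone side) fix the two ruling angles and hence the torsion; this determines $\vec X(s)$ up to rigid motion. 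The cylinder is recovered as $\vec c(x)+y\,\vec e$ with $\vec c$ a unit-speed cross-section in the plane perpendicular to the generator $\vec e$, and the cone as $\vec A+r(s)\,\vec u(s)$ with $\vec u$ a unit-speed spherical directrix of the correct total length; both are automatically isometric and $C^1$, and by Lemma~\ref{no planar} the cone is genuinely curved rather than a flat triangle.

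The heart of the argument---and the step I expect to be the main obstacle---is the closure condition: I must choose $\rho(s)$ so that the two surfaces share exactly the same $\vec X(s)$ while all convex-side rulings pass through one common apex $\vec A$, with the extreme rulings landing on the tile corners $V_{0,0}$ and $V_{1,0}$. Concurrency of the cone rulings is an integral constraint on $\rho$, and matching it against the cylinder side is a compatibility (ODE) problem; I would solve it in the symmetric (no-twist) configuration, placing $\vec A$ in the mirror plane through the crease apex $A^+_{0,0}$ (where the 2D tangent is horizontal and $\hat k$ is extremal), integrating $\vec X(s)$ outward from this plane, and using convexity of $\ell$ together with the flat-to-spatial isometry of the cone sector to check that the swept rulings close up onto a genuine cone whose two extreme rulings meet $V_{0,0}$ and $V_{1,0}$. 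Along the way I would verify that $\rho(s)\neq0$ throughout (Lemma~\ref{fold angle nonzero}), so that $\gamma^\pm_{0,0}$ are honest creases, that the crease interiors are cone-free so the folding is smooth there (Corollary~\ref{cone = kink}), and that the induced mountain/valley types agree with Theorem~\ref{convex side bends same direction} and Corollary~\ref{concave MV match}.

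Finally I would tile the plane by generating the remaining tiles as the images of this one under the $180^\circ$ rotations about the lines through its four straight boundary edges, reproducing the symmetry of the crease pattern. Because each such edge is a rule segment lying in the folded surface, its direction lies in the surface tangent plane, and a $180^\circ$ rotation about that line preserves any plane containing the line; hence adjacent tiles join with a common tangent plane, and the assembled surface is $C^1$ across every straight edge, introducing no spurious crease---exactly the ``common tangent to its neighbors'' property. It then remains to check that this group of rotations is consistent, i.e.\ that around each tiling vertex the composition of the four edge-rotations is the identity, so the copies fit without gap or overlap. Together with the per-tile verification above, this shows the assembled map is a proper isometric folding whose only creases are the smoothly folded $\gamma^\pm$ arcs, so the visibility condition is sufficient.
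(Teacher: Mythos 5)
Your overall skeleton (fold one kite consisting of a cylindrical lens flanked by two cones, then tile by symmetry) matches the paper's, but there are two genuine gaps, and they sit exactly where the real work is. First, the single-tile existence: you correctly identify the ``closure condition'' as the heart of the argument, but you only describe a plan to solve it; nothing in the proposal shows that a compatible fold-angle profile $\rho(s)$ exists. The paper does this concretely: it assumes the mirror-symmetric (no-twist) configuration, takes as free parameter the folded distance $v^*\in(0,v)$ between the two cone apices, observes that for each $t$ the rulings at parameter $t$ fold into an isosceles trapezoid of computable height $h(t)$, and thereby reduces existence to finding the projection of the folded lens curve in polar coordinates $(\theta(t),h(t))$ with unit speed, i.e.\ to the ODE $1 = h^2(d\theta/dt)^2 + (dh/dt)^2$. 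This is solvable iff $(dh/dt)^2 < 1$, which after computation is equivalent to
$v - v^* < 4\left[\tfrac{v}{2} - \left(\ell(t)+\ell'(t)(u-t)\right)\right]/\left[1+\ell'(t)^2\right]$,
and the right-hand side is positive precisely because the visible vertex $V_{0,1}$ lies above every tangent line of the lens; hence a solution exists for $v^*$ close enough to $v$. That is the \emph{only} place the theorem's hypothesis enters the 3D construction; your proposal invokes visibility only for the 2D ruling layout and never in the existence argument, which signals that the crux is missing. (The paper also needs Lemmas~\ref{ruled-surface-isometry} and~\ref{cylinder-isometry} to justify that matching arclength and ruling lengths suffices for an isometric folding; you assert the cone and cylinder are ``automatically isometric,'' which is exactly what needs proof.)

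Second, the tiling step is wrong as stated. In the lens tessellation, the kite adjacent across the edge $V_{0,1}V_{1,0}$ is the image of the original kite under the $180^\circ$ rotation about the \emph{midpoint} of that edge (this maps $V_{0,0}\mapsto V_{1,1}$, $V_{0,-1}\mapsto V_{1,2}$, and the mountain lens to the valley lens); it is not the mirror image of the kite across the edge line. A $180^\circ$ rotation of the folded tile about the 3D line through a straight edge fixes that edge pointwise, so the tile it glues on develops to the kite reflected across that edge --- a different crease pattern, which generically does not even tile the plane (the kite's angles are not submultiples of $180^\circ$), and in any case is not the lens tessellation. This is why the paper instead uses point inversion through each folded edge's midpoint, followed by negating all normals: in $xy$-projection this is exactly the planar $180^\circ$ rotation about the edge midpoint, so the copies tessellate automatically (no vertex-consistency check is needed --- the one you defer would in fact fail for your operation), and the normal flip realizes the mountain/valley swap between adjacent lens rows. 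Your observation that a $180^\circ$ rotation about the edge line preserves every plane containing that line is correct but moot, since that gluing map assembles the wrong pattern; the paper's $C^1$ argument instead uses constancy of the surface normal along a rule segment together with the inversion.
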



\begin{proof}
First we construct the folding of one ``gadget'', $(i,j)=0$;
refer to Figure~\ref{fig:kite-module}.
We can add an integer to $u$ to assume that
$V_{0,1}$ is the visible vertex from apex $A^+_{0,0}$.
In 2D, this gadget is bounded by a quadrangle of rule segments
with vertex coordinates
$V_{0,0}=(0,0)$, $V_{0,-1} = (u, -{1 \over 2} v)$, $V_{1,0} = (1,0)$, $V_{0,1} = (u, {1 \over 2} v)$.
This kite module is decomposed by its creases into an upper wing part $U$, middle lens part $M$, and lower wing part $L$.
We assume that $M$ is ruled parallel to $y$ axis:
the rule segments of $M$ are of the form
$(t, \ell(t))$ and $(t, -\ell(t))$ parameterized by $t$.
(We can make this assumption because we are constructing a folded state.)
We also assume that $U$ consists of cone rulings between $V_{0,1}$ and $(t, \ell(t))$ while $L$ consists of cone rulings between $V_{0,-1}$ and $(t, -\ell(t))$ using the same parameter~$t$.

\begin{figure}[tb]
  \centering
  \includegraphics[width=0.8\linewidth]{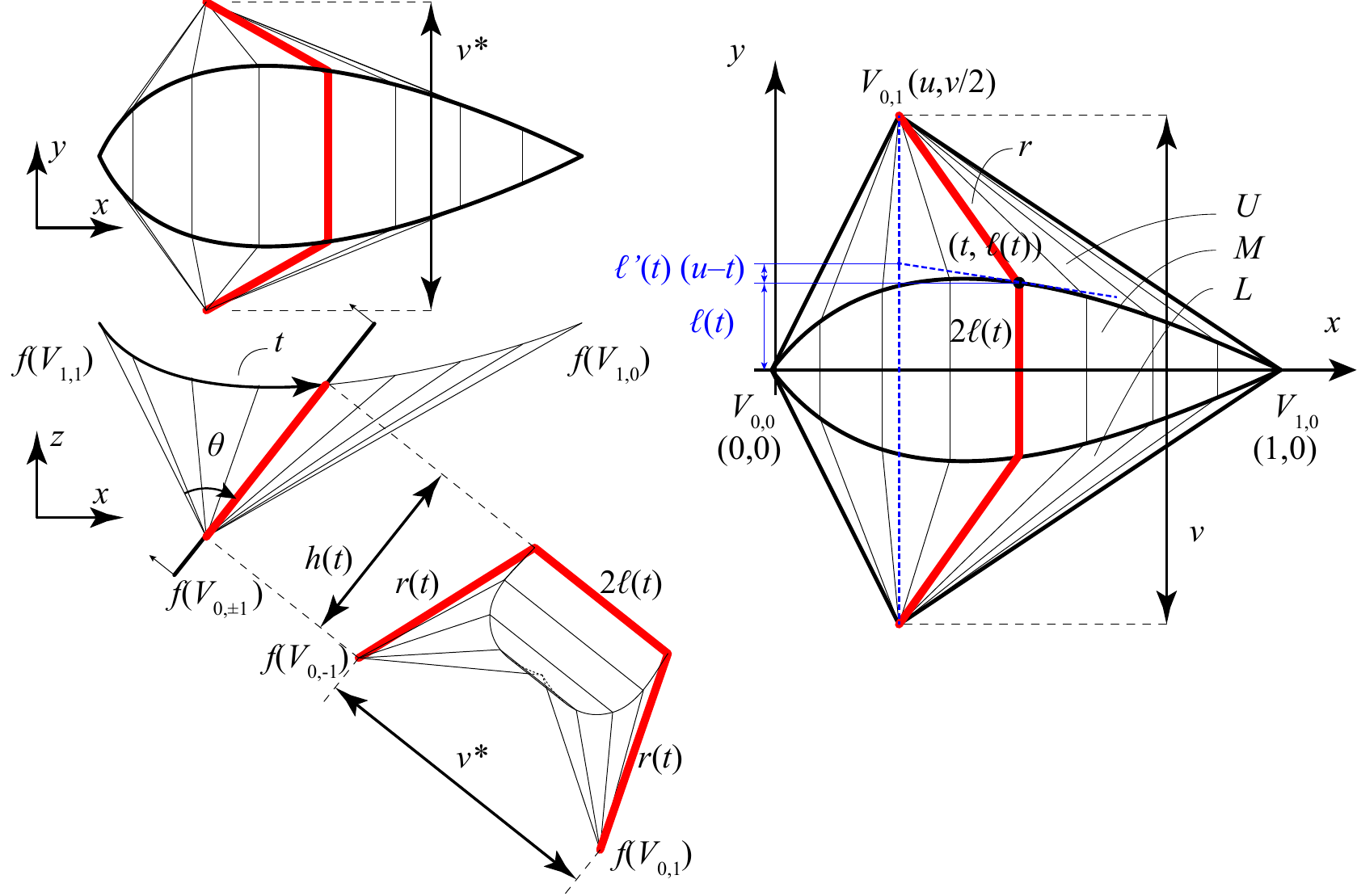}
  \caption{A modular kite structure.}
  \label{fig:kite-module}
\end{figure}

The folding $f(M)$ is a cylindrical surface with parallel rulings. 
We orient the folded form such that this ruling direction is parallel to
$y$ axis and $\overrightarrow{f(V_{0,0})f(V_{1,0})}$ is parallel
to the positive direction of $x$ axis.
Then the orthogonal projection of $f(M)$ to $x z$ plane is a curve $\gamma$,
and a ruling at $t$ on $M$ corresponds to a point on $\gamma(t)$
while $t$ is the arclength parameter.

We further assume that the folded state is symmetric with respect to reflection
through a plane passing through $ \overrightarrow{f(V_{0,0})f(V_{1,0})}$ and
parallel to $x z$ plane.
Let the distance between $f(V_{0,-1})$ and $f(V_{0,1})$ be denoted by $v^*$ where $0<v^*<v$.
We will show that there is a valid folded state for arbitrary $v^*$ if it is sufficiently close to~$v$.

Consider the set of rule segments of $U$, $M$, and $L$ at parameter $t$ and its folding.
Then, by our symmetry assumption, these segments form a planar polyline which
together with segments $f(V_{0,-1})$ and $f(V_{0,1})$ form an
isosceles trapezoid with the base length $v^*$ and top length $2\ell(t)$.
The legs are the length of the rule segments, which can be calculated from the
crease pattern as $r(t) = \sqrt{(u-t)^2 + (v/2-\ell(t))^2}$.
Such a trapezoid exists because $0<v^*<v\leq2\ell(t)+2r(t)$.
The height of the trapezoid $h(t)$ is given by 
$$
h(t) = \sqrt{(v-v^*) \left({{v+v^*} \over 4}-\ell(t)\right)+(t-u)^2}.
$$

Now consider the projection of this trapezoid in $x z$ plane.
This projection is a line segment between two points,
namely the projections of $V_{0,1}$ and $\gamma(t)$,
and it must have length of $h(t)$.
We use the following lemma to solve for~$\gamma$:

\begin{lemma}\label{ruled-surface-isometry}
  If an arc-length parameterized crease $\vec x(s)$
  have unique rule segments on one side incident to cone apex $\vec a$,
  then an embedding $\vec f$ is a proper folding if and only if
  folded curve $\vec X = \vec f \circ \vec x$ is also arc-length parameterized, and
  rule segments from $\vec a$ to $\vec x(s)$ maps isometrically to rule segments from $\vec A$ to $\vec X(s)$, 
  where $\vec A = \vec f \circ \vec a$.
\end{lemma}
\begin{proof}
 Necessity (only if part) is obvious, so we prove the sufficiency (if part).
 The folded curve is arc-length parameterized by $s$ as  ${d \vec X(s) \over ds} = {d \vec x(s) \over ds} = 1$, 
 and the length of ruling segment $L(s)$ must be equal $L(s) = \|\vec x(s) - \vec a\| = \|\vec X(s) - \vec A\|$.
 Let $\vec r(s)$ denote the unit ruling vectors from the apex toward the curve in 2D, 
 i.e., $\vec r(s) = (\vec x(s) - \vec a)/L(s)$.
 Similarly denote unit ruling vector in 3D by $\vec R(s) = (\vec X(s) - \vec A)/L(s)$.
 Consider a coordinate system using arc length $s$ and radius $\ell$.
 The conical portion of the face formed by the crease and a point 
 is uniquely ruled at any point, so $(s, \ell)$ uniquely represent a point on the portion.
 A point $(s, \ell)$ in 2D corresponds to $\vec a + \ell \vec r(s)$,
 which is mapped to 3D to $\vec A + \ell \vec R(s)$.
 Consider a 2D $C^1$ curve $\vec y(t)$ represented by $(s(t), \ell(t))$, where $t$ is the arc-length parameterization.
 Then the total derivative of $\vec y(t) = \vec a + \ell(t)  \vec r(s(t)) $ is
 $$
  {d \vec y \over dt}
  = {\partial \vec y \over \partial s}{ds\over dt} + {\partial \vec y \over \partial \ell} {d \ell\over dt} 
  =  \ell {d \vec r \over ds}{ds\over dt} +  \vec r {d\ell\over dt}.
 $$
 Then, by taking the dot product with itself,
 \begin{eqnarray*}
  \left\|{d \vec y \over dt}\right\|^2 &=&
  \ell^2 \left\|{d \vec r \over ds}\right\|^2 \left({ds \over dt}\right)^2 
  +  2 \ell {d \vec r \over ds} \cdot \vec r \left({ds \over dt}\right) \left({d\ell \over dt}\right) + \|\vec r\|^2 \left({d\ell \over dt}\right)^2\\
  &=& \ell^2 \left\|{d \vec r \over ds}\right\|^2 \left({ds \over dt}\right)^2 + \left({d\ell \over dt}\right)^2,
 \end{eqnarray*}
 where we used $\vec r \cdot \vec r= 1$ and $ 2{d \vec r \over ds} \cdot \vec r = {d \over ds}\left(\vec r \cdot \vec r \right) = 0$. 
 Because $L(s) \vec r(s) = \vec x(s) - \vec a$,
 taking derivatives yields
 $$
 L{d \vec r \over ds} + {d L \over ds}\vec r = {d \vec x \over ds}.
 $$
 By taking the dot product,
 $$
  L^2 \left\| {d \vec r \over ds} \right\|^2 + \left( {d L \over ds} \right)^2 = \left\|{d \vec x \over ds}\right\|^2 = 1,
 $$
 again using $ {d \vec r \over ds} \cdot \vec r = 0$ and $\vec r \cdot \vec r = 1$.
 Thus,
 $$
  \left\|{d \vec y \over dt}\right\|^2 = 
  {\ell^2\over L^2} \left( 1 -  \left( {d L \over ds} \right)^2 \right) \left({ds \over dt}\right)^2 + \left({d\ell \over dt}\right)^2.
 $$
 The mapped crease $\vec Y(t)$ in 3D is defined by $\vec Y(t) = \vec A + \ell(t)  \vec R(s(t)) $.
 Then, 
 $$
  \left\|{d \vec Y \over dt}\right\|^2 = 
  {\ell^2\over L^2} \left( 1 -  \left( {d L \over ds} \right)^2 \right) \left({ds \over dt}\right)^2 + \left({d\ell \over dt}\right)^2,
 $$
 similarly using $\vec R \cdot \vec R = 1$, $ {d \vec R \over ds} \cdot \vec R = 0$, and $\left\|{d \vec X \over ds}\right\|^2 = 1$.
 Therefore $ \left\|{d \vec y \over dt}\right\|^2 =   \left\|{d \vec Y \over dt}\right\|^2 = 1 $ and the mapping is isometric.
\end{proof}

A similar argument works for cylindrical surfaces.
\begin{lemma}\label{cylinder-isometry}
  If an arc-length parameterized crease $\vec x(s)$
  have unique rule segments on one side parallel to $\vec r$,
  such that $\vec r$ is perpendicular to segment $c$.
  then an embedding $\vec f$ is a proper folding if and only if
  folded curve $\vec X = \vec f \circ \vec x$ is also arc-length parameterized, and
  the perpendicular rule segments from $\vec x(s)$ to $c$ maps isometrically to 
  rule segments from $\vec X(s)$ perpendicularly to a planar curve $C$, where $C = \vec f\circ c$.
\end{lemma}
\begin{proof}
 Necessity (only if part) is obvious, so we prove the sufficiency (if part).
 The folded curve is arc-length parameterized by $s$ as  ${d \vec X(s) \over ds} = {d \vec x(s) \over ds} = 1$, 
 and the length of ruling segment $L(s)$ must be equal $L(s) = \|\vec x(s) - \vec c(s)\| = \|\vec X(s) - \vec C(s)\|$.
 Let $\vec r$ denote the unit ruling vectors from the apex toward the curve in 2D, 
 i.e., $\vec x(s) = \vec c(s) + L(s)\vec r$.
 Similarly denote unit ruling vector in 3D by $\vec X(s) = \vec C(s) + L(s)\vec R$.
 Consider a coordinate system using arc length $s$ and length along the ruled segments $\ell$.
 The face is uniquely ruled between the crease and the curve at any point, so $(s, \ell)$ uniquely represent a point on the portion.
 A point $(s, \ell)$ in 2D corresponds to $\vec c(s) + \ell \vec r$,
 which is mapped to 3D to $\vec C(s) + \ell \vec R$.
 Consider a 2D $C^1$ curve $\vec y(t)$ represented by $(s(t), \ell(t))$, where $t$ is the arc-length parameterization.
 Then the total derivative of $\vec y(t) = \vec c(s) + \ell(t)  \vec r $ is
 $$
  {d \vec y \over dt}
  = {\partial \vec y \over \partial s}{ds\over dt} + {\partial \vec y \over \partial \ell} {d \ell\over dt} 
  = {d \vec c \over ds}{ds\over dt} +  \vec r {d\ell\over dt}.
 $$
 Then,
 \begin{eqnarray*}
  \left\|{d \vec y \over dt}\right\|^2 &=&
  \left\|{d \vec c \over ds}\right\|^2 \left({ds \over dt}\right)^2 
  +  2 {d \vec c \over ds} \cdot \vec r \left({ds \over dt}\right) \left({d\ell \over dt}\right) 
  + \|\vec r\|^2 \left({d\ell \over dt}\right)^2\\
  &=& \left\|{d \vec c \over ds}\right\|^2 \left({ds \over dt}\right)^2 + \left({d\ell \over dt}\right)^2,
 \end{eqnarray*}
 where we used $\vec r \cdot \vec r= 1$ and $ {d \vec c(s) \over ds} \cdot \vec r = 0$. 
 Now differentiate $ L(s) \vec r + \vec c(s) = \vec x(s)$ to obtain
 $$
 {d \vec c \over ds} + {d L \over ds} \vec r = {d \vec x \over ds}.
 $$
 By taking the dot product,
 $$
 \left\| {d \vec c \over ds} \right\|^2 + \left( {d L \over ds} \right)^2 = \left\|{d \vec x \over ds}\right\|^2 = 1,
 $$
 again using $ {d \vec c \over ds} \cdot \vec r = 0$ and $\vec r \cdot \vec r = 1$.
 Thus,
 $$
  \left\|{d \vec y \over dt}\right\|^2 = 
  \left( 1 -  \left( {d L \over ds} \right)^2 \right) \left({ds \over dt}\right)^2 + \left({d\ell \over dt}\right)^2.
 $$
 The mapped crease $\vec Y(t)$ in 3D is defined by $\vec Y(t) = \vec C(s) + \ell(t)  \vec R $.
 Then,
 $$
  \left\|{d \vec Y \over dt}\right\|^2 = 
  \left( 1 -  \left( {d L \over ds} \right)^2 \right) \left({ds \over dt}\right)^2 + \left({d\ell \over dt}\right)^2,
 $$
 similarly using $\vec R \cdot \vec R = 1$, $ {d \vec C \over ds} \cdot \vec R = 0$, and $\left\|{d \vec X \over ds}\right\|^2 = 1$.
 Therefore $ \left\|{d \vec y \over dt}\right\|^2 =   \left\|{d \vec Y \over dt}\right\|^2 = 1$ and the mapping is isometric.
\end{proof}

By Lemmas~\ref{ruled-surface-isometry}~and~\ref{cylinder-isometry}, the existence of the folded form is
ensured by constructing the folded crease $f(\gamma)$ such that in the folded state, 
the distance between the $V_{0,1}$ and $f(\gamma(t))$ is always $r(t)$, and the distance from $zx$ plane is always $\ell(t)$.
If we view the projection of the curve, this is equivalent to constructing a curve represented by polar coordinate $(\theta(t), h(t))$ $(\theta \in \mathbb R)$, such that (i)~the curve has arclength $t$ and (ii)~$\theta(t)$ is a monotonic function (in order to avoid self-intersection).
Condition (i) yields a differential equation
$$
1 =  h^2 \left({d\theta(t)\over dt}\right)^2 + h'(t)^2.
$$
Condition (ii) gives us $0 < {d\theta(t)\over dt}$ and $h(t) > 0$,
so the differential equation becomes
$$
{d\theta(t)\over dt} = {1\over h(t)}\sqrt{1 - \left({d h(t) \over dt}\right)^2},
$$
which has solution
$$
\theta(t) = \int_0^t {1\over h(t)}\sqrt{1 - \left({d h(t) \over dt}\right)^2} dt
$$
if and only if $\left({d h(t) \over dt}\right)^2 \leq 1$ for $t \in (0,1)$.
Combined with condition (ii), $\left({d h(t) \over dt}\right)^2 < 1$.
$$
\left({d h(t) \over dt}\right)^2 = {\left[(t-u) - {1 \over 2} (v-v^*) \ell'(t)\right]^2 \over (t-u)^2 + (v-v^*) \left[{1 \over 4} (v+v^*)-\ell(t)\right]}< 1,
$$
which is equivalent to
$$
-{1\over 4}(v-v^*)\left[1+\left({d \ell(t) \over d t}\right)^2\right] + \left[{1 \over 2} v - \left(\ell(t) + {d \ell(t) \over d t}(u-t)\right)\right] > 0.
$$
Because $\ell(t) + {d \ell(t) \over d t}(u-t) $ represents the $y$ coordinate of the intersection between the tangent line to $\gamma_{0,0}^+$ at $t$ and a vertical line passing through $V_{0,1}$, ${v \over 2} - \left(\ell(t) + {d \ell(t) \over d t}(u-t)\right)$ is always positive. Also $1+\left({d \ell(t) \over d t}\right)^2$ is positive, so the condition is given by
$$
v-v^* < {4 \left[{v \over 2} - \left(\ell(t) + {d \ell(t) \over d t}(u-t)\right)\right] \over 1+\left({d \ell(t) \over d t}\right)^2}.
$$
If we define $v^*_{lim}<v$ as $$v-v^*_{lim} = {4 \left[{v \over 2} - \left(\ell(t) + {d \ell(t) \over d t}(u-t)\right)\right] / \left[1+\left({d \ell(t) \over d t}\right)^2\right]},$$ then there exists a continuous solution for $v^*$ in $(v^*_{lim}, v)$.



Now that we have folded an individual gadget, we can tile the gadget
to get a proper folding of the overall crease pattern.
Here we use the fact that the oriented folded module, for a sufficiently
small fold angle, projects to a kite in the $x y$ plane.


Consider inversions of the oriented folded module through the midpoints of
its boundary edges, followed by negating all normals to swap the top and bottom
sides of the paper (Figure~\ref{fig:kite-connect}).
If we consider the $xy$ projection, the operation corresponds to $180^\circ$
rotation around the midpoint of the kite, resulting in a tessellation.
Thus, in particular, there are no collisions between the copies of the
folded module.
Because each connecting edge is mapped onto itself in 3D,
this tessellation has no gap in 3D.
Also, because the boundary is on a ruled segment,
the surface normal vector is constant along each edge.
The surface normal is flipped by the inversion, and then negated back to its
original vector, so the surface normals at corresponding points match.
Thus the shared boundaries remain uncreased in the tessellated folding.
To show that this tessellated folding comes from one sheet of paper,
we can apply the same tiling transformation to the crease-pattern module,
which is also a kite, so it tiles the plane with the same topology
and intrinsic geometry.
Therefore the plane can fold into the infinitely tiled folding.
\end{proof}

\begin{figure}[tb]
  \centering
  \includegraphics[width=\linewidth]{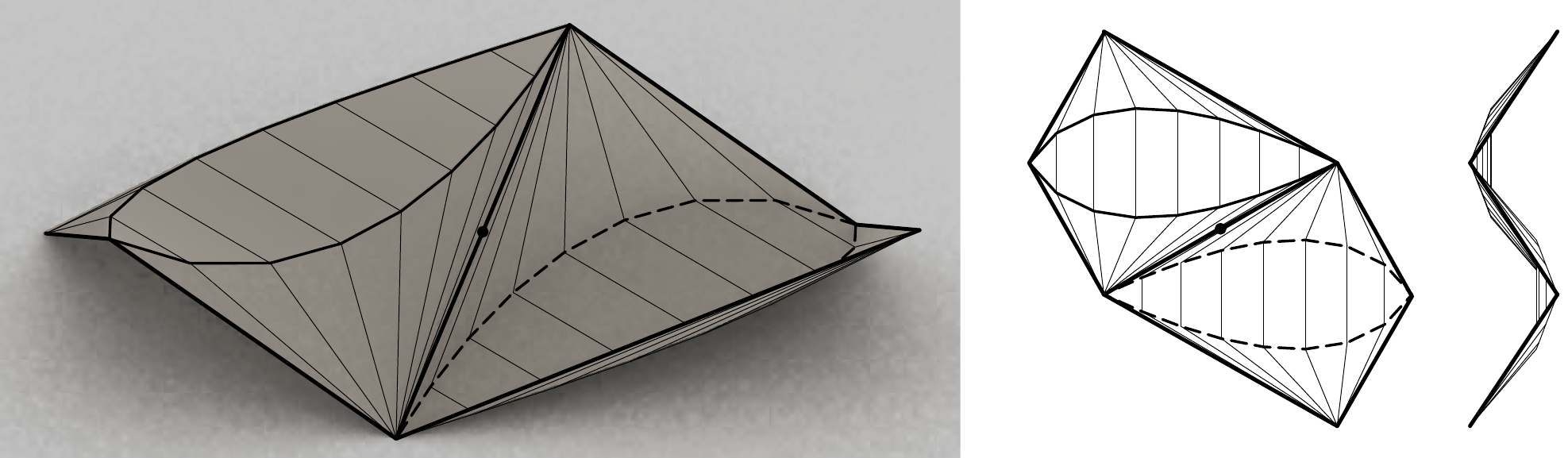}
  \caption{The connection of kite structures.}
  \label{fig:kite-connect}
\end{figure}


\section*{Acknowledgments}

We thank the Huffman family for access to the third author's work,
and permission to continue in his name.



\bibliography{lens}

\begin{thebibliography}{99}
\newcommand{\enquote}[1]{``#1''}

\bibitem[Demaine and O'Rourke~07]{GFALOP}
Erik~D. Demaine and Joseph O'Rourke.
\newblock \emph{Geometric Folding Algorithms: Linkages, Origami, Polyhedra}.
\newblock Cambridge University Press, 2007.

\bibitem[Demaine et~al.~11]{Hypar}
Erik~D. Demaine, Martin~L. Demaine, Vi~Hart, Gregory~N. Price, and Tomohiro
  Tachi.
\newblock \enquote{(Non)existence of Pleated Folds: How Paper Folds Between
  Creases.}
\newblock \emph{Graphs and Combinatorics} 27:3 (2011), 377--397.

\bibitem[Fuchs and Tabachnikov~99]{Fuchs-Tabachnikov-1999}
Dmitry Fuchs and Serge Tabachnikov.
\newblock \enquote{More on Paperfolding.}
\newblock \emph{The American Mathematical Monthly} 106:1 (1999), 27--35.
\newblock Available online (\url{http://www.jstor.org/pss/2589583}).

\bibitem[Fuchs and Tabachnikov~07]{Fuchs-Tabachnikov-2007-developable}
Dmitry Fuchs and Serge Tabachnikov.
\newblock \enquote{Developable Surfaces.}
\newblock In \emph{Mathematical Omnibus: Thirty Lectures on Classic
  Mathematics}, Chapter~4. American Mathematical Society, 2007.

\bibitem[Huffman~76]{Huffman-1976}
David~A. Huffman.
\newblock \enquote{Curvature and Creases: A Primer on Paper.}
\newblock \emph{IEEE Transactions on Computers} C-25:10 (1976), 1010--1019.

\end{thebibliography}
\bibliographystyle{akpbib}

\end{document}